\def\thm@space@setup{%
  \thm@preskip=12pt plus 3pt minus 3pt
  \thm@postskip=\thm@preskip 
}
\tikzset{
  on each segment/.style={
    decorate,
    decoration={
      show path construction,
      moveto code={},
      lineto code={
        \path [#1]
        (\tikzinputsegmentfirst) -- (\tikzinputsegmentlast);
      },
      curveto code={
        \path [#1] (\tikzinputsegmentfirst)
        .. controls
        (\tikzinputsegmentsupporta) and (\tikzinputsegmentsupportb)
        ..
        (\tikzinputsegmentlast);
      },
      closepath code={
        \path [#1]
        (\tikzinputsegmentfirst) -- (\tikzinputsegmentlast);
      },
    },
  },
  mid arrow/.style={postaction={decorate,decoration={
        markings,
        mark=at position .5 with {\arrow[scale=1.5, #1]{stealth}}
      }}},
    lftarrow/.style={postaction={decorate,decoration={
        markings,
        mark=at position .25 with {\arrow[#1]{stealth}}
      }}}
    rgtarrow/.style={postaction={decorate,decoration={
        markings,
        mark=at position .75 with {\arrow[#1]{stealth}}
      }}}
}
\tikzset{
	on each segment/.style={
		decorate,
		decoration={
			show path construction,
			moveto code={},
			lineto code={
				\path [#1]
				(\tikzinputsegmentfirst) -- (\tikzinputsegmentlast);
			},
			curveto code={
				\path [#1] (\tikzinputsegmentfirst)
				.. controls
				(\tikzinputsegmentsupporta) and (\tikzinputsegmentsupportb)
				..
				(\tikzinputsegmentlast);
			},
			closepath code={
				\path [#1]
				(\tikzinputsegmentfirst) -- (\tikzinputsegmentlast);
			},
		},
	},
	rmid arrow/.style={postaction={decorate,decoration={
				markings,
				mark=at position .3 with {\arrowreversed[#1]{stealth}}
	}}},
	vmid arrow/.style 2 args={postaction={decorate,decoration={
				markings,
				mark=at position #2 with {\arrow[#1]{stealth}}
	}}},
	vrmid arrow/.style 2 args={postaction={decorate,decoration={
				markings,
				mark=at position #2 with {\arrowreversed[#1]{stealth}}
	}}},
}
\def\grd@save@target#1{%
  \def\grd@target{#1}}
\def\grd@save@start#1{%
  \def\grd@start{#1}}
\tikzset{
  grid with coordinates/.style={
    to path={%
      \pgfextra{%
        \edef\grd@@target{(\tikztotarget)}%
        \tikz@scan@one@point\grd@save@target\grd@@target\relax
        \edef\grd@@start{(\tikztostart)}%
        \tikz@scan@one@point\grd@save@start\grd@@start\relax
        \draw[minor help lines] (\tikztostart) grid (\tikztotarget);
        \draw[major help lines] (\tikztostart) grid (\tikztotarget);
        \grd@start
        \pgfmathsetmacro{\grd@xa}{\the\pgf@x/1cm}
        \pgfmathsetmacro{\grd@ya}{\the\pgf@y/1cm}
        \grd@target
        \pgfmathsetmacro{\grd@xb}{\the\pgf@x/1cm}
        \pgfmathsetmacro{\grd@yb}{\the\pgf@y/1cm}
        \pgfmathsetmacro{\grd@xc}{\grd@xa + \pgfkeysvalueof{/tikz/grid with coordinates/major step}}
        \pgfmathsetmacro{\grd@yc}{\grd@ya + \pgfkeysvalueof{/tikz/grid with coordinates/major step}}
        \foreach \x in {\grd@xa,\grd@xc,...,\grd@xb}
        \node[anchor=north] at (\x,\grd@ya) {\pgfmathprintnumber{\x}};
        \foreach \y in {\grd@ya,\grd@yc,...,\grd@yb}
        \node[anchor=east] at (\grd@xa,\y) {\pgfmathprintnumber{\y}};
      }
    }
  },
  minor help lines/.style={
    help lines,
    step=\pgfkeysvalueof{/tikz/grid with coordinates/minor step}
  },
  major help lines/.style={
    help lines,
    line width=\pgfkeysvalueof{/tikz/grid with coordinates/major line width},
    step=\pgfkeysvalueof{/tikz/grid with coordinates/major step}
  },
  grid with coordinates/.cd,
  minor step/.initial=.2,
  major step/.initial=1,
  major line width/.initial=2pt,
}
\def\l@subsection{\@tocline{2}{0pt}{2.5pc}{5pc}{}}
\DeclareMathOperator{\ai}{Ai}
\DeclareMathOperator{\re}{Re}
\DeclareMathOperator{\im}{Im}
\DeclareMathOperator{\ee}{\rm e}
\DeclareMathOperator{\supp}{supp}
\DeclareMathOperator{\sign}{sign}
\renewcommand{\Re}{\mathop{\rm Re}}
\newcommand{\res}{\mathop{\rm Res}}
\renewcommand{\Im}{\mathop{\rm Im}}
\newcommand{\C}{\mathbb{C}}
\newcommand{\R}{\mathbb{R}}
\newcommand{\Z}{\mathbb{Z}}
\newcommand{\boh}{\mathit{o}}
\newcommand{\Boh}{\mathcal{O}}
\newcommand{\ii}{\mathrm{i}}
\newcommand{\dd}{\mathrm{d}}
\newcommand*{\deff}{\mathrel{\vcenter{\baselineskip0.5ex \lineskiplimit0pt
                     \hbox{\scriptsize.}\hbox{\scriptsize.}}}%
                     =}
\DeclareMathOperator{\Li}{Li}
\renewcommand{\bm}{\mathbf} 
\newcommand{\mcal}{\mathcal}
\newcommand{\msf}{\mathsf}
\newcommand{\wh}{\widehat}
\newcommand{\wt}{\widetilde}
\renewcommand{\sp}{\boldsymbol \sigma}
\newcommand{\Tad}{\msf T}
\newcommand{\tad}{\msf t}
\newcommand{\sad}{\msf s}
\newcommand{\uad}{\msf u}
\definecolor{lightgreen}{RGB}{100,200,100}
\newtheorem{theorem}{Theorem}[section]
\newtheorem{prop}[theorem]{Proposition}
\newtheorem{lemma}[theorem]{Lemma}
\theoremstyle{definition}
\newtheorem{definition}[theorem]{Definition}
\newtheorem{rhp}[theorem]{RHP}
\newtheorem{assumption}[theorem]{Assumption}
\theoremstyle{remark}
\newtheorem{remark}[theorem]{Remark}
\numberwithin{equation}{section}
\newcommand*{\defeq}{\mathrel{\vcenter{\baselineskip0.5ex \lineskiplimit0pt
                     \hbox{\scriptsize.}\hbox{\scriptsize.}}}%
                     =}
\newcommand*{\eqdef}{=\mathrel{\vcenter{\baselineskip0.5ex \lineskiplimit0pt
                     \hbox{\scriptsize.}\hbox{\scriptsize.}}}%
                     }
\newcommand\restr[2]{{
		\left.\kern-\nulldelimiterspace 
		#1 
		\vphantom{\big|} 
		\right|_{#2} 
}}
\definecolor{apricot}{rgb}{0.98, 0.81, 0.69}
\dedicatory{To Caio Eduardo Candido, a friend who left us too soon.}
\begin{document}

\title{Deformations of OP ensembles in a bulk critical scaling}

\author[C.~Candido]{Caio E. Candido}
\address[CC]{Instituto de Ciências Matemáticas e de Computação (ICMC), Universidade de S\~ao Paulo (USP), Brazil.}

\author[V.~Alves]{Victor Alves}
\address[VA]{Instituto de Ciências Matemáticas e de Computação (ICMC), Universidade de S\~ao Paulo (USP), Brazil.}
\email{victorjulio@usp.br}

\author[T.~Chouteau]{Thomas Chouteau}
\address[TC]{Instituto de Ciências Matemáticas e de Computação (ICMC), Universidade de S\~ao Paulo (USP), Brazil.}
\email{thomas.chouteau@usp.br}

\author[C.~Santos]{Charles F. Santos}
\address[CS]{Instituto de Ciências Matemáticas e de Computação (ICMC), Universidade de S\~ao Paulo (USP), Brazil.}
\email{charles.santos@icmc.usp.br}

\author[G.~Silva]{Guilherme L.~F.~Silva}
\address[GS]{Instituto de Ciências Matemáticas e de Computação (ICMC), Universidade de S\~ao Paulo (USP), Brazil.}
\email{silvag@icmc.usp.br}

\date{}


\begin{abstract}
    We study orthogonal polynomial ensembles whose weights are deformations of exponential weights, in the limit of a large number of particles. The deformation symbols we consider affect local fluctuations of the ensemble around a bulk point of the limiting spectrum. We identify the limiting kernel in terms of a solution to an integrable non-local differential equation. This novel kernel is the correlation kernel of a conditional thinned process starting from the Sine point process, and it is also related to a finite temperature deformation of the Sine kernel as recently studied by Claeys and Tarricone. We also unravel the effect of the deformation on the recurrence coefficients of the associated orthogonal polynomials, which display oscillatory behavior even in a one-cut regular situation for the limiting spectrum.
\end{abstract}


\vspace*{-1.6cm}

\maketitle

\setcounter{tocdepth}{2} \tableofcontents 

\section{Introduction}

There are many different reasons for studying orthogonal polynomials. The present work lies on two modest ones: the role they play in the calculation of statistics of random particle systems, and in the unraveling of novel solutions to integrable systems.

From a probabilistic perspective, orthogonal polynomial (shortly OP) ensembles consist of random points $x_1,\hdots, x_n\in \R$ with joint distribution of the form
\begin{equation}\label{eq:deform}
\mcal P_n(x_1,\hdots,x_n) \dd x_1\cdots \dd x_n\deff \frac{1}{\msf Z_n}\prod_{1\leq i<j\leq n}|x_i-x_j|^2 \prod_{j=1}^n \omega(x_j)\, \dd x_1\cdots \dd x_n,
\end{equation}
where $\omega$ is the associated weight, and $\msf Z_n$ is the normalization constant -- also known as partition function -- that turns \eqref{eq:deform} into a probability distribution in $\R^n$,
$$
\msf Z_n\deff \int_{\R^n} \prod_{1\leq i<j\leq n}|x_i-x_j|^2 \prod_{j=1}^n \omega(x_j)\, \dd x_1 \cdots \dd x_n.
$$

The name {\it orthogonal polynomial ensemble} (shortly OPE) stems from the fact that statistics for \eqref{eq:deform} may be computed through orthogonal polynomials. With $P_j(x)=x^j+\text{(lower degree terms)}$ being the $j$-th orthogonal polynomial with respect to the measure $\omega(x)\dd x$, uniquely determined by
$$
\int_\R P_j(x)x^k\omega(x) \, \dd x=0,\quad k=0,\cdots, j-1,
$$
and $h_j>0$ the associated norming constant, determined from
$$
\frac{1}{h_j^2}=\int_{\R} P_j(x)^2\omega(x)\dd x,
$$
we construct the kernel
\begin{equation}\label{deff:corrkernel}
\msf K_n(x,y)
     \deff \sqrt{\omega(x)} \sqrt{\omega(y)}\sum_{j=0}^{n-1}h_j^2P_j(x)P_j(y),
\end{equation}
known as the correlation kernel. The kernel $\msf K_n$ is of prominent relevance to the particle system \eqref{eq:deform}: it is known that \eqref{eq:deform} may be represented in {\it determinantal form}, namely
\begin{equation}\label{eq:detform}
\mcal P_n(x_1,\hdots,x_n) =\det\left( \msf K_n(x_i,x_j) \right)_{i,j=1}^n.
\end{equation}
In short, \eqref{eq:detform} means that all info on the particle system \eqref{eq:deform} is encoded in the correlation kernel $\msf K_n$ and, in turn, on the orthogonal polynomial themselves.

As mentioned, our second motivation comes from integrability structures that emerge from OPs in various ways. The integrability we want to explore concerns the unraveling of connections with integrable equations that emerge from the asymptotic analysis of the recurrence coefficients $(\gamma_n), (\beta_n)$ in the three-term recurrence relation for the OPs,
\begin{equation}\label{eq:TTRR}
xP_n(x)=P_{n+1}(x)+\beta_nP_n(x)+\gamma_n^2P_{n-1}(x),\quad \gamma_n>0, \; \beta_n\in \R,
\end{equation}
and, as it will turn out, on the kernel $\msf K_n$ as well.

The class of weights we consider in this manuscript are of the form
\begin{equation}\label{eq:deffweight}
\omega(x)=\omega_n(x)\deff \sigma_n(x) \ee^{-nV(x)},  \quad x\in \R,
\end{equation}
where $V:\R\to \R$ is called the potential of the model, and
\begin{equation}\label{def:sigman}
\sigma_n(x)\deff \frac{1}{1+\ee^{-\sad-n^{2}Q(x)}},\quad x \in \R,
\end{equation}
for a fixed function $Q:\R\to \R$. We consider $\sad \in \R$ as a deformation parameter, and when needed we write $\sigma_n(x)=\sigma_n(x\mid \sad)$, $\omega_n(x)=\omega_n(x\mid \sad)$, $\gamma_n=\gamma_n(\sad)$, $\beta_n=\beta_n(\sad)$, $\msf K_n=\msf K_n(\cdot,\cdot\mid \sad)$ etcetera. Conditions on $Q$ will be placed in a moment, but we anticipate that we will impose that 
\begin{equation}\label{eq:conditionsQintro}
Q(p)=Q'(p)=0 \text{ and }Q''(p)>0, \text{ for a fixed } p\in \R \text{ for which }
\lim_{n\to\infty} \frac{1}{n}\msf K_n(p,p\mid \sad)>0.
\end{equation}
In the language of random matrix theory, this last condition is simply saying that $p$ is a regular bulk point of the limiting spectrum. In the language of OPs and potential theory, it means that the density of the underlying equilibrium measure is strictly positive at $p$. We will elaborate more on these aspects later on.

In the limit $\sad\to +\infty$, we have $\sigma_n\to 1$, the weight $\omega_n(\cdot\mid \sad)$ turns into the exponential weight $\omega_n(x\mid \infty)=\ee^{-nV(x)}$, and we refer to the corresponding point process \eqref{eq:deform} as the {\it ground point process}. Our main goal is to understand how the introduction of the term $\sigma_n$ affects asymptotic properties of the correlation kernel and recurrence coefficients when compared to the ones 
$$
\msf K_n(x,y\mid \infty)\deff \lim_{\sad\to +\infty}\msf K_n(x,y\mid \sad),\quad \gamma_n(\infty)\deff \lim_{\sad\to +\infty}\gamma_n(\sad),\quad 
\beta_n(\infty)\deff \lim_{\sad\to +\infty}\beta_n(\sad),
$$ 
corresponding to the ground process. 

There are various reasons why we choose the factor $\sigma_n$ as in \eqref{def:sigman}.
The particular form of $\sigma_n$ is inspired by the finite temperature deformation factors $(1+\ee^{-x})^{-1}$ that appear in free-fermionic models in finite temperature. In fact, recently it was realized that such type of deformations of determinantal point processes lead to new interesting features, including connections with free-fermionic models \cite{Dean2015}, the Kardar-Parisi-Zhang equation \cite{GhosalSilva22, AmirCorwinQuastel2011}, nonlocal integrable equations \cite{BothnerCafassoTarricone2021, CafassoClaeysRuzza2021}, among others \cite{Johansson2007, Liechty2020}. At the level of the OP ensemble \eqref{eq:deform} itself, the deformed kernel $\msf K_n(\cdot\mid \sad)$ is the correlation kernel for a conditional thinning process from the ground process \cite{ClaeysGlesner2021}, and $\sad$ may be viewed as a strength parameter for this thinning.

The choice of scaling factor $n^2$ is explained by the order of fluctuations of the ground process. Near a point $p$ satisfying \eqref{eq:conditionsQintro}, fluctuations of the ground point process happen at a scale of order $n^{-1}$, meaning that the process induced by a local variable $\zeta\approx n(x-p)$ has fluctuations of order $1$. Under the conditions in \eqref{eq:conditionsQintro}, $n^2 Q(x)\approx Q''(p)\zeta^2$, and we expect that local fluctuations are affected in a nontrivial manner. Our results are essentially showing that such heuristics are true, and quantitatively computing the effect of such perturbations.

More precisely, we probe the effect of the introduction of $\sigma_n$ into the ground process through the asymptotic analysis of the correlation kernel $\msf K_n(\cdot\mid \sad)$ and the recurrence coefficients $(\gamma_n(\sad))$, $(\beta_n(\sad))$. In short, our results say that in the large $n$ limit, the correlation kernel $\msf K_n(\cdot\mid \sad)$ converges to a novel kernel, which is constructed out of a special solution $\msf \Phi$ to a nonlocal nonlinear integrable differential equation. When $\sad \to +\infty$, this kernel converges to the celebrated Sine Kernel, and at the level of point processes our calculations imply that this novel kernel is precisely the correlation kernel of a conditional thinned version of the Sine point process. The function $\msf \Phi$ is oscillatory, and thus may be viewed as a nonlinear deformation of the sine oscillations described by the Sine Kernel.

At the level of the recurrence coefficients, we show that as $n\to\infty$, $\beta_n(\sad)$ and $\gamma_n(\sad)$ have the same limits as their undeformed counterparts $\beta_n(\infty)$ and $\gamma_n(\infty)$, but differ from the latter in the subleading order $\Boh(n^{-1})$. We compute the leading order correction to the differences $n(\beta_n(\sad)-\beta_n(\infty))$ and $n(\gamma_n(\sad)^2-\gamma_n(\infty)^2)$, and it turns out that they display two nontrivial features. The first feature is the appearance of explicit oscillatory terms, and the second feature is the appearance of a nonlinear term, which satisfies a nonlinear integrable PDE itself, and it may be alternatively characterized through a total integral of the function used to construct the limiting correlation kernel. The integrable equations underneath both the limiting kernel and the recurrence coefficients have recently arisen in the context of finite temperature deformations of the Sine kernel, as obtained by Claeys and Tarricone \cite{ClaeysTarricone24}. 

The appearance of the nonlinear PDE in the subleading asymptotics of the recurrence coefficients could be anticipated, and it is the exact form of such term that may be viewed as one of our nontrivial contributions. The explicit oscillatory (in $n$) term that appears therein, though, came to us with some surprise. In previous works, oscillations were coming either because the support of the underlying equilibrium measure was disconnected \cite{bleher_its, DKMVZ1}, or thanks to discontinuities in the weight itself \cite{FoulquieMartinezFinkelshteinSousa2011}. However, in our work here the underlying measure always has a connected support, and the corresponding potential is analytic: the perturbation $\sigma_n$ lives on a local scale and does not change the equilibrium measure of the system.

We now move on to describing our results in detail.

\subsection{Statement of results}\hfill 

To state our main results, let us introduce the basic conditions on $V$ and $Q$ under which we will work on.

\begin{assumption}[Assumptions on the potential]\label{asump:V}
We assume that $V$ is a polynomial of even degree and positive leading coefficient. Furthermore, we assume that its associated equilibrium measure $\dd \mu_V(x) = \phi_V(x)\dd x$ is {\it one-cut regular}, with a regular bulk point at the origin. This means that $\supp\phi_V=[a,b]$, for some $a,b\in \R$, $a<0<b$, and that its density $\phi_V$ takes the form
$$
\phi_V(x) = \frac{1}{\pi}\sqrt{(b-x)(x-a)} q(x),\quad  \text{with }\quad q(x) > 0 \text{ for every } x\in [a,b].
$$
Furthermore, we also assume that the variational inequality in the Euler-Lagrange equations associated to the equilibrium problem for $V$ are strict, we refer the reader to Section~\ref{sec:eqmeasure} for a detailed account of these assumptions on $V$.
\end{assumption}

\begin{assumption}[Assumptions on the deformation]\label{asump:Q}    
We assume that the function $Q:\R\to \R$ extends to an analytic function in a complex neighborhood of $\R$,
\begin{equation}\label{eq:behQorigin}
Q(x)>0 , \qquad x \in \R \setminus \{0\},
\end{equation}
and 
\begin{equation}\label{eq:asyQ}
    Q(0) =Q'(0)=0,\quad \frac{Q''(0)}{2} \eqdef \msf t >0.
\end{equation}
\end{assumption}

Placing conditions on $V$ (or rather on its equilibrium measure $\phi_V(x)\dd x$) while studying (critical) scaling limits in random matrix theory is rather standard. In our present situation, we are interested in a local scaling limit near a regular bulk point $p$ in the limiting spectrum of particles $\supp\phi_V$, and the assumptions that $\supp\phi_V$ is connected and $p=0\in \supp\phi_V$ are placed only for concreteness and simplicity of presentation. 

Conditions on $Q$ are also based on the fact that $p=0$ is a regular bulk point, as mentioned in the introduction. We did not have to restrict to $p=0$ and could instead have considered any other regular bulk point in the limiting spectrum, but for simplicity of presentation we will from now on assume $p=0$. 

Under the conditions we just placed, we have that $\sigma_n(x)\to 1$ except for $x=0$, and $\sigma_n(0)=(1+\ee^{-\sad})^{-1}$ for every $n$. One then naturally expects that
$$
\frac{1}{n}\msf K_n(x,x\mid \sad)\to \phi_V(x),\quad n\to \infty,
$$
for $x\in \R$ pointwise. When $\sigma_n\equiv 1$ this result is standard in OPs theory and random matrix theory \cite{johansson_1998, Saff_book, Martinez-Finkelshtein-lecturenotes-2006}, and in the presence of $\sigma_n$ it follows from our methods in a standard way. In particular, such convergence explains that the factor $\sigma_n$ does not change the large $n$ global behavior of zeros of $P_n$ or, equivalently, it does not change the large $n$ global behavior of particles of the system \eqref{eq:deform}. 

In contrast, our first result concerns asymptotics of the correlation kernel \eqref{deff:corrkernel} and shows that its local behavior at the origin is drastically changed by the presence of the deformation $\sigma_n$. Such asymptotics will be given in terms of
\begin{equation}\label{eq:lambdainftyTdeffintro}
\msf T \defeq 
    \frac{ \phi_V(0)\pi}{\sqrt{\tad}}\qquad \text{and}\qquad 
    \lambda_\infty(\zeta \mid \sad) \defeq \frac{1}{1 + \ee^{- \sad - \msf T^{-2} \zeta^2}}.
\end{equation}

\begin{theorem}[The limiting correlation kernel]\label{thm:CorrelationKernelAsymptotics}
     For every $\sad \in \R$ and every $\varepsilon \in (0,1)$, the convergence
     \begin{equation}\label{eq:convergenceKnintro}
     \frac{1}{n \phi_V(0) \pi} \msf K_n\left( \frac{\zeta}{n \phi_V(0) \pi},\frac{\xi}{ n \phi_V(0) \pi} \mid \sad  \right)=\msf K_\infty(\zeta,\xi)+\Boh \left( \frac{1}{n^{1-\varepsilon}}\right),\quad n\to \infty,
     \end{equation}
towards the limiting kernel
$$
\msf K_\infty(\zeta,\xi)=\msf K_\infty(\zeta,\xi\mid \sad)\deff \frac{\sqrt{\lambda_\infty(\zeta \mid \sad)}\sqrt{\lambda_\infty(\xi\mid \sad )}}{2 \pi \ii (\zeta - \xi)}\left[ \msf \Phi\left(\frac{\zeta}{\msf T}\right)\msf \Phi\left(-\frac{\xi}{\msf T}\right)-\msf \Phi\left(-\frac{\zeta}{\msf T}\right)\msf \Phi\left(\frac{\xi}{\msf T}\right) \right]
$$
holds true uniformly for $\zeta,\xi$ in compacts of $\R$, where the function $\msf \Phi(\zeta) = \msf \Phi(\zeta \mid \sad,\msf T)$ satisfies the non-local nonlinear equation
    \begin{equation}\label{eq:PDEPhiintro}
        \partial_{\msf T} \msf \Phi(\zeta \mid \sad, \msf T)=\ii \zeta\msf\Phi(\zeta \mid \sad, \msf T) +\frac{1}{2\pi\ii}
        \left(\int_\R \msf \Phi(\xi \mid \sad, \msf T)^2 \lambda_\infty'( \Tad \xi \mid \sad)\, \dd\xi\right)\msf \Phi(-\zeta \mid \sad, \msf T),
    \end{equation}
    with asymptotic behavior
    \begin{equation}\label{eq:bcPhiintro}
        \msf \Phi(\zeta) \sim \ee^{\ii \msf T\zeta},\quad \zeta \to \pm\infty,
    \end{equation}
    valid for any $\sad\in \R,\msf T>0$ fixed.

Furthermore, as $\sad\to +\infty$ the convergence
\begin{equation}\label{eq:sinekernellimit}
\msf K_\infty(\zeta,\xi\mid \sad)=\frac{1}{\pi}\msf S(\zeta,\xi)+\Boh(\ee^{-\sad}),\quad \sad\to +\infty,\qquad \msf S(\zeta,\xi)\deff \frac{\sin(\zeta-\xi)}{\zeta-\xi},
\end{equation}
holds true uniformly for $\zeta,\xi$ in compacts of $\R$.
\end{theorem}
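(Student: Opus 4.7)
The plan is to obtain \eqref{eq:convergenceKnintro} through a Deift-Zhou steepest descent analysis of the Fokas-Its-Kitaev Riemann-Hilbert problem (RHP) characterizing the orthogonal polynomials associated with the deformed weight $\omega_n(\cdot\mid\sad)$. First, I would recall the Christoffel-Darboux identity, which expresses $\msf K_n(x,y\mid\sad)$ in terms of the $(1,1)$ and $(2,1)$ entries of the RHP solution evaluated at $x$ and $y$. The goal is then to obtain uniform asymptotics of these entries in a shrinking neighborhood of the bulk point $p=0$.

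The first two transformations of the RHP are standard: a normalization at infinity using the $g$-function associated with the equilibrium measure $\phi_V(x)\dd x$ of the undeformed potential $V$ (which is unaffected by the deformation, since $\sigma_n$ acts only on the microscopic scale), and the opening of lenses across the band $[a,b]$. The outer parametrix and the local parametrices at the hard edges $a,b$ are constructed from the standard Szeg\H{o} function and Airy models. The genuinely new ingredient is the construction of a local parametrix in a shrinking neighborhood of the origin of size $n^{-1+\delta}$ for some small $\delta>0$. Inside such a disk, the jump matrix retains the non-negligible factor $\sigma_n$, which after the local rescaling $x=\zeta/(n\pi\phi_V(0))$ becomes precisely the symbol $\lambda_\infty(\zeta\mid\sad)$ with $\msf T=\pi\phi_V(0)/\sqrt{\msf t}$. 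This dictates the shape of the model RHP for the local parametrix.

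The key step, and the main obstacle, is the construction and integrability analysis of this model RHP, which is exactly the RHP whose Lax structure yields the nonlocal nonlinear equation \eqref{eq:PDEPhiintro}, namely the finite-temperature Sine model studied by Claeys and Tarricone \cite{ClaeysTarricone24}. I would invoke their solvability results to assert existence of a unique solution for every $\sad\in\R$ and $\msf T>0$, and extract the function $\msf \Phi(\zeta\mid\sad,\msf T)$ from the asymptotic behavior of the $(1,1)$ and $(2,1)$ entries of this model as $\zeta\to\infty$. Matching this parametrix with the outer parametrix on the boundary of the shrinking disk produces an error matrix whose jump is $I+\Boh(n^{-1+\varepsilon})$ uniformly, leading to a small-norm RHP whose solution contributes the stated error in \eqref{eq:convergenceKnintro}. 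Unraveling the transformations and applying Christoffel-Darboux in the local variable then yields exactly the formula for $\msf K_\infty(\zeta,\xi\mid\sad)$ stated in the theorem, with the bilinear structure in $\msf \Phi(\pm\zeta/\msf T)\msf \Phi(\mp\xi/\msf T)$ reflecting the standard $(1,1)/(2,1)$ structure of the model solution.

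Finally, the Sine kernel limit \eqref{eq:sinekernellimit} is obtained by inspecting the model RHP as $\sad\to+\infty$: in that regime $\lambda_\infty(\zeta\mid\sad)=1+\Boh(\ee^{-\sad})$ uniformly on compact sets, so the jump of the model RHP becomes exponentially close to the identity. Standard small-norm arguments give $\msf \Phi(\zeta\mid\sad,\msf T)=\ee^{\ii\msf T\zeta}(1+\Boh(\ee^{-\sad}))$ uniformly on compacts; plugging this expansion into the formula for $\msf K_\infty$ and using the identity $\ee^{\ii(\zeta-\xi)}-\ee^{-\ii(\zeta-\xi)}=2\ii\sin(\zeta-\xi)$ immediately yields \eqref{eq:sinekernellimit}. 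The principal technical difficulty throughout is the local parametrix construction and its matching on a boundary of radius shrinking like $n^{-1+\delta}$, where one must carefully control the $\zeta$-dependence of $\msf \Phi$ on a growing region to guarantee the announced $\Boh(n^{-1+\varepsilon})$ error.
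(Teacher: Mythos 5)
Your overall strategy coincides with the paper's (Fokas--Its--Kitaev RHP, $g$-function, lenses, Airy parametrices at the soft edges $a,b$, a new local parametrix at the origin built from the Claeys--Tarricone model, small-norm conclusion), but two steps that you label as routine are precisely where the deformation bites, and as written they would fail. First, the ``standard opening of lenses across the band $[a,b]$'' does not work here: in any sector $|\arg(\pm z)|>\pi/4$ near the origin one has $\re Q(z)<0$, so $1/\sigma_n(z)=1+\ee^{-\sad-n^2Q(z)}$ blows up like $\ee^{cn^2|z|^2}$, and the lens jump $\ee^{2n\phi_b(z)}/\sigma_n(z)$ is then exponentially \emph{large} rather than small (the decay $\ee^{-cn|\im z|}$ cannot compensate a $\ee^{cn^2|z|^2}$ growth once $|z|\gg 1/n$). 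The paper must split the lens at the origin and pin the lens lips to the real axis there at an angle strictly less than $\pi/4$ (it takes $\pi/8$, matching the rays of $\Sigma_{\bm\Phi}$), justified by a separate positivity statement for $\re Q$ in those cones. Your proposal never imposes this constraint, and a shrinking disk at the origin does not rescue the naive lens choice outside that disk.

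Second, your shrinking disk of radius $n^{-1+\delta}$ is incompatible with the error rate you claim. On its boundary the model variable is $\zeta=n\varphi_0(z)\asymp n^{\delta}$, so the mismatch with the outer parametrix coming from the $\bm\Phi_1/\zeta$ term is $\Boh(n^{-\delta})$, not $\Boh(n^{-1+\varepsilon})$; pushing $\delta\to 1$ to fix this forces a fixed-size disk, on which the conjugated symbol is \emph{not} ``precisely $\lambda_\infty$'' but rather $\lambda_n$ with exponent $\sad+n^2Q(\varphi_0^{-1}(\zeta/n))=\sad+\uad\zeta^2+\Boh(\zeta^3/n)$, i.e.\ an $n$-dependent model problem. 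This is the genuine extra work the paper does (its Section on ``asymptotics II''): it proves solvability of the $n$-dependent model RHP for large $n$ and the comparison $\bm\Phi_n=\bm\Phi_\infty(\bm I+\Boh(\ee^{-\sad}n^{-\varepsilon}))$ by a small-norm argument split between $|\zeta|\le n^{(1-\varepsilon)/3}$ and its complement; this comparison, not the parametrix matching, is what produces the $\Boh(n^{-1+\varepsilon})$ in \eqref{eq:convergenceKnintro}. Your treatment of the identification with the Claeys--Tarricone RHP (hence \eqref{eq:PDEPhiintro}--\eqref{eq:bcPhiintro}) and of the $\sad\to+\infty$ limit \eqref{eq:sinekernellimit} via a small-norm comparison with the explicit sine model is in line with the paper and is fine, modulo uniformity of the error near the diagonal $\zeta=\xi$, which is best extracted from the RHP representation of the kernel rather than from pointwise asymptotics of $\msf\Phi$.
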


Theorem~\ref{thm:CorrelationKernelAsymptotics} is the bulk analogue of the soft edge convergence result in \cite{GhosalSilva22}, where Ghosal and the last author show that a finite-temperature type deformation of an OP ensemble, when critically tuned at a regular soft edge, leads to a kernel described in terms of an integro-differential generalization of the Painlevé II equation. Similar appearances of integro-differential integrable equations in random matrix theory have also been recently observed in non-hermitian random matrix models by Bothner and Little \cite{BothnerLittle24edge, BothnerLittle24bulk}.

Under the same scaling as \eqref{eq:convergenceKnintro}, the convergence of the ground process kernel $\msf K_n(\cdot\mid \infty)$ towards the Sine kernel $\msf S$ is an instance of the celebrated Sine kernel universality \cite{deift_book, pastur_shcherbina_universality, Erdoes2017a}. In the language of point processes, it means that the random particle system determined by the distribution \eqref{eq:deform} for $\omega=\omega(\cdot\mid \infty)$ converges to the random particle system determined by $\msf S$, which is known as the Sine point process. The convergence \eqref{eq:sinekernellimit} is essentially saying that the limits $\sad\to +\infty$ and $n\to +\infty$ commute. 

The deformed kernel $\msf K_n(\cdot\mid \sad)$ may be viewed as the correlation kernel of a conditional thinned particle system constructed from the ground OP ensemble determined by $\msf K_n(\cdot\mid \infty)$ as follows. We start with the ground process \eqref{eq:deform} for $\omega=\omega(\cdot\mid \infty)$, and color each random particle $x_j$ with probability $\sigma_n(x_j\mid \sad)$, leaving the particle uncolored with complementary probability $1-\sigma_n(x_j\mid \sad)$. Now, we create a new conditional process, which is the process of colored particles conditioned that no particle has been left uncolored. 

As shown by Claeys and Glesner, the correlation kernel for this conditional process is precisely $\msf K_n(\cdot\mid \sad)$. As a consequence of recent results by Claeys and the last author of the present paper in \cite[Section~3.1]{ClaeysSilva2024}, the conditional thinned ensemble determined from $\msf K_n(\cdot\mid \sad)$ converges {\it weakly as a point process} towards a conditional thinned process constructed from the Sine process. The methods developed in \cite{ClaeysSilva2024}, however, do not give access to computing the correlation kernel of this limiting process. Theorem~\ref{thm:CorrelationKernelAsymptotics} is thus strengthening this weak convergence to a convergence of the correlation kernels, and furthermore it yields that the correlation kernel of the conditioned thinned Sine process is precisely $\msf K_\infty$ as constructed here.

To our knowledge, the kernel $\msf K_\infty$ is novel, but as mentioned before the function $\msf \Phi$ itself and its characterization \eqref{eq:PDEPhiintro}--\eqref{eq:bcPhiintro} has appeared recently in a work by Claeys and Tarricone \cite{ClaeysTarricone24}. A RHP studied in the latter work is also at the core of our results, and we will elaborate more on this connection in a moment.

Our second main result concerns asymptotics for the recurrence coefficients $\gamma_n^2=\gamma_n(\sad)^2,\beta_n=\beta_n(\sad)$ of the deformed orthogonal polynomials (recall \eqref{eq:TTRR} and the discussion thereafter).

Asymptotics of recurrence coefficients have a long and relevant historical importance, an interest which remains to our days. The excellent monograph \cite{VanAsscheBook18} by Van Assche reviews many of such developments and history (see also the recent survey \cite{VanAsscheSurvey22} by the same author), and \cite{DeiftPiorkowski24, ClarksonJordaanLoureiro, Barhoumi24, bleher_deano_painleve_I, bleher_deano_yattselev_2017} is a very limited list of references that encompasses various aspects of asymptotics of recurrence coefficients for orthogonal polynomials that have been considered recently in the literature.

In the case of classical orthogonal polynomials, such as Hermite and Laguerre, recurrence coefficients are explicit and their asymptotics have been known for more than a century. In the context of exponential weights, one of the seminal outputs of the introduction of the Riemann-Hilbert machinery to OP theory in the late 1990s is precisely towards the asymptotic analysis of recurrence coefficients. Already in the early RHP-OPs works, Deift et al. proved that for exponential weights with one-cut regular equilibrium measure, recurrence coefficients admit a full asymptotic expansion in inverse powers of $n$ \cite{DKMVZ1}. More explicit expressions for the first few terms in this expansion were calculated by Kuijlaars and Tibboel \cite[Theorem~1.1]{Tibboel}, which in the case of the ground process $\sad = +\infty$ reads
\begin{equation}\label{deff:gammabetainfty}
    \begin{split}
    \gamma_n(\infty)^2 &=  \frac{(b-a)^{2}}{16} + \Boh(n^{-2}), \quad n \to \infty;\\
    \beta_n(\infty) &= \frac{b+a}{2} + \frac{1}{2n(b-a)} \left( \frac{1}{q(b)} - \frac{1}{q(a)}\right) + \Boh(n^{-2}), \quad n \to \infty.
    \end{split}
\end{equation}

We emphasize that the results in \cite{Tibboel, DKMVZ1} assume that the weight is exponential and the equilibrium measure is one-cut. Observe that in such a case the subleading term of $\gamma_n^2$ is of order $n^{-2}$, and that the subleading term in $\beta_n$ is a rather explicit function on the equilibrium density. 

For our result on recurrence coefficients, we introduce the quantities
    \[
        \kappa \defeq \pi \mu_V([0,b]) \qquad \text{and}\qquad
        G_0(\sad) \defeq  \int_{-\infty}^\infty \log \left(1 + \ee^{- \sad - x^2} \right) \, \dd x=-\sqrt{\pi}\Li_{3/2}(-\ee^{-\sad}),
    \]
and still use $\msf T$ and $\msf \lambda_\infty$ as defined earlier in \eqref{eq:lambdainftyTdeffintro}.

\begin{theorem}[Asymptotics for the recurrence coefficients] \label{thm:recurrence coeffs}
    
    For every $0 < \varepsilon < 1$, the expansions
\begin{equation}\label{eq:expRCmain}
    \begin{split}
    \gamma_n(\sad)^2 
        &= \gamma_n(\infty)^2 + \frac{1}{n} \frac{\Tad}{\pi \phi_V(0)} \frac{b-a}{2} \msf Q(\sad) \cos \left(2n \kappa \right) + \Boh(n^{-2 + \varepsilon}), \\
    \beta_n(\sad)
        &= \beta_n(\infty) + \frac{1}{n} \frac{\msf T}{\pi \phi_V(0)}\left[ \frac{a+b}{\sqrt{-ab}} \frac{G_0(\sad)}{2 \pi } -  \frac{2\msf Q(\sad)}{b-a} \left[ (a+b) \cos(2n \kappa) + 2 \sqrt{-ab} \sin (2n\kappa)\right]\right] + \Boh(n^{-2 + \varepsilon}),
    \end{split}
\end{equation}
are valid as $n\to \infty$, where $\msf Q(\sad)=\msf Q(\sad , \msf T)$ is given by
\begin{equation}\label{eq:integralreprQ}
\msf Q(\sad , \msf T) 
    \deff - \frac{1}{4\pi\ii } \int_{\R} \msf \Phi(\xi \mid \sad,\msf T)^2 \lambda_\infty'(\Tad \xi) \dd \xi .
\end{equation}
    Furthermore, the asymptotic decay
    \begin{equation}\label{eq:decayG0q}
        \msf Q(\sad) = \Boh(\ee^{- \sad}),  \quad \sad \to +\infty,
    \end{equation}
    holds true for any $\msf T>0$ fixed.
\end{theorem}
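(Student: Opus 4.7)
The plan is to leverage the Riemann-Hilbert steepest-descent analysis that produces Theorem~\ref{thm:CorrelationKernelAsymptotics}, since the recurrence coefficients are encoded in the first two subleading coefficients of the large-$z$ expansion of the Fokas--Its--Kitaev RHP solution $\msf Y$ for the monic orthogonal polynomials: writing $\msf Y(z)\,\diag(z^{-n},z^n)=I+\msf Y_1 z^{-1}+\msf Y_2 z^{-2}+\Boh(z^{-3})$ as $z\to\infty$, standard formulas express $\gamma_n(\sad)^2$ as an entry product of $\msf Y_1$ and $\beta_n(\sad)$ as a rational expression in the entries of $\msf Y_1,\msf Y_2$. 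The goal is therefore to track these two coefficients through all transformations of the steepest descent.

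After the usual normalization by the equilibrium $g$-function of $V$, the opening of lenses, and the construction of the outer parametrix $\msf M$ from the Szegő function associated with $\phi_V$ and of Airy parametrices at the soft edges $a,b$, the novelty relative to the undeformed setting is a local parametrix at the origin built from the $\msf\Phi$-RHP of Claeys--Tarricone, exactly as in the proof of Theorem~\ref{thm:CorrelationKernelAsymptotics}. The matching of this local parametrix with $\msf M$ on a small circle around the origin yields a small-norm RHP for the error $\msf R$ whose jump admits an expansion in inverse powers of $n$; the leading contribution is a matrix supported on this circle whose off-diagonal entries pick up the integral $\msf Q(\sad)$ from \eqref{eq:integralreprQ}, while the diagonal entries pick up $G_0(\sad)$ through the logarithmic derivative of the determinant of the $\msf\Phi$-parametrix (the standard mechanism by which the Jimbo--Miwa--Ueno $\tau$-function enters the matching).

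Next I would solve this small-norm RHP iteratively, $\msf R(z)=I+n^{-1}\msf R^{(1)}(z)+\Boh(n^{-2+\varepsilon})$, with $\msf R^{(1)}$ rational and having a simple pole only at the origin, the residue determined by the matching computation above. The oscillatory factors $\ee^{\pm 2\ii n\kappa}$ appear through the boundary values $\msf M(0_\pm)$, which involve $\ee^{\pm 2\pi\ii n\mu_V([0,b])}=\ee^{\pm 2\ii n\kappa}$ coming from the $g$-function on the interval $[0,b]$. Unwinding the transformations, reading off the expansion of $\msf M\msf R$ at infinity, and combining with \eqref{deff:gammabetainfty} (which is recovered from the same scheme at $\sad=+\infty$) delivers the subleading $n^{-1}$ corrections in \eqref{eq:expRCmain}; the coefficients $\tfrac{b-a}{2}$, $\tfrac{a+b}{\sqrt{-ab}}$ and the mixed $\cos(2n\kappa),\sin(2n\kappa)$ combinations emerge from the explicit Szegő values at $a,b,0$ and from expanding the conjugation of the residue at the origin by $\msf M(0)$ to isolate its $\diag/\mathrm{off}$-diagonal parts. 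The main obstacle is the careful bookkeeping that turns the matrix-valued matching data into the scalar coefficients of \eqref{eq:expRCmain}, in particular separating the oscillatory $\msf Q(\sad)$-contribution (from the off-diagonal part of the matching, i.e.\ the nonlinear integrable equation) from the non-oscillatory $G_0(\sad)$-contribution (from the diagonal, Jimbo--Miwa--Ueno part).

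Finally, the decay \eqref{eq:decayG0q} is essentially elementary from \eqref{eq:integralreprQ}: one has $\lambda_\infty'(\Tad\xi\mid\sad)=-(2\xi/\Tad)\ee^{-\sad-\xi^2}(1+\ee^{-\sad-\xi^2})^{-2}$, which is uniformly $\Boh(\ee^{-\sad})$ in $\xi\in\R$ with Gaussian decay in $\xi$, while the $\msf\Phi$-RHP together with \eqref{eq:bcPhiintro} yields a uniform bound for $\msf\Phi(\xi\mid\sad,\Tad)$ as $\sad\to+\infty$. Substituting into \eqref{eq:integralreprQ} and using dominated convergence produces the claimed $\Boh(\ee^{-\sad})$ bound, which in particular shows that the oscillatory corrections in \eqref{eq:expRCmain} vanish in the ground-process limit $\sad\to+\infty$, consistently with \eqref{deff:gammabetainfty}.
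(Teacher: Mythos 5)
Your overall architecture is the same as the paper's: read $\gamma_n^2$ and $\beta_n$ off the coefficients $\bm Y_1,\bm Y_2$ via \eqref{eq:rec_coef_from_RHP}, run the steepest descent with Airy parametrices at $a,b$ and the $\bm\Phi$-parametrix at the origin, and extract the $n^{-1}$ corrections from the residues of the jump of the error matrix $\bm R$ on $\partial U_0\cup\partial U_a\cup\partial U_b$; the oscillatory factors $\ee^{\pm 2\ii n\kappa}$ indeed enter through the conjugation by $\ee^{\pm\ii n\kappa\sp_3}$ in the matching prefactor, and your direct estimate of \eqref{eq:integralreprQ} for the decay \eqref{eq:decayG0q} is a legitimate (arguably more elementary) alternative to the paper's route via the small-norm comparison of $\bm\Phi_\infty(\cdot\mid\sad)$ with $\bm\Phi_{\sin}$ and the bound \eqref{eq:estimates_pq}.

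There is, however, a genuine gap in the mechanism you assign to the $G_0(\sad)$ term. You claim it arises from ``the logarithmic derivative of the determinant of the $\msf\Phi$-parametrix (the Jimbo--Miwa--Ueno mechanism)''. This cannot work: the jump matrices of RHP~\ref{rhp:Phi} all have unit determinant, so $\det\bm\Phi\equiv 1$ and there is no such contribution. In the paper, $G_0$ comes from an object you never construct, namely the Szeg\H{o}-type function $\ee^{\msf h}$ of \eqref{eq:h_exponent} that must be inserted into the global parametrix, $\bm G=\ee^{-\msf h_0\sp_3}\bm M\,\ee^{\msf h\sp_3}$, because the jump of $\bm S$ on all of $(a,b)$ is $\sigma_n\bm E_{12}-\sigma_n^{-1}\bm E_{21}$ and not the bare $\bm E_{12}-\bm E_{21}$; without this factor the global parametrix does not have the correct jump and the small-norm scheme does not close. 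Although $\log\sigma_n$ concentrates at the origin, its integral against the Chebyshev-type weight is of order $n^{-1}$, and the Laplace asymptotics of Appendix~\ref{Ap:LaplaceIntegrals} show $\msf h=\wh{\msf h}/n+\Boh(n^{-3})$ with $\wh{\msf h}_0$ proportional to $G_0(\sad)$. This feeds into the answer twice: through $\wh{\bm G}_1,\wh{\bm G}_2$ in \eqref{deff:G1G2}, and through the simple pole of $\wh{\msf h}(z)=\wh{\msf h}_0((z-a)(z-b))^{1/2}/z$ at the origin, whose residue $\ii\wh{\msf h}_0\sqrt{-ab}$ enters $\msf I_0^{(1)}$ alongside the $\bm\Phi_{\infty,1}$ contribution. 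Relatedly, your assertion that the order-$n^{-1}$ term of $\bm R$ has a simple pole only at the origin is inconsistent with your own appeal to \eqref{deff:gammabetainfty}: the Airy contributions produce poles at $a$ and $b$ whose residues are exactly what generates the $\frac{1}{2n(b-a)}(q(b)^{-1}-q(a)^{-1})$ term in $\beta_n(\infty)$. As written, your scheme would miss or misplace the entire $G_0(\sad)$ contribution in \eqref{eq:expRCmain}.
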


By the very definition of the Polylog as a power series, it is immediate that $\msf G_0(\sad)=\Boh(\ee^{-\sad})$ as $\sad\to +\infty$. Thus, from \eqref{eq:decayG0q} we obtain that
$$
\beta_n(\sad)=\beta_n(\infty)+\Boh(\ee^{-\sad}),\quad \gamma_n^2(\sad)=\gamma_n^2(\infty)+\Boh(\ee^{-\sad}),\quad \sad\to +\infty.
$$
In other words, each sequence of deformed recurrence coefficients converges to its correspondent in the ground process as $\sad \to +\infty$.

The subleading terms in \eqref{eq:expRCmain} involve several noteworthy terms. First of all, the parameters $\sad$ and $\msf T$ are the sole terms that depend on the function $Q$ used in the deformation \eqref{def:sigman} (recall also \eqref{eq:asyQ}). 

Second, the term $G_0$ albeit at first mysterious, is actually natural. The deformation $\sigma_n$ does not change the equilibrium measure of the system, and as a consequence we have to account to it in subleading order terms throughout the analysis. As common, such subleading terms involve the construction of a so-called Szegő function, and $G_0$ appears precisely due to this function.

Third, the subleading terms in \eqref{eq:expRCmain} contain purely oscillatory factors $\cos(2n\kappa)$ and $\sin(2n\kappa)$. The appearance of (quasi)-periodic terms in this type of asymptotic expansion is not uncommon. However, and as mentioned earlier, it is usually associated to either weights with discontinuities \cite{FoulquieMartinezFinkelshteinSousa2011} or weights with multi-cut equilibrium measures \cite{DKMVZ1, bleher_its}. Over here, however, we emphasize that our weight is analytic on a neighborhood of the real axis for any $n$ fixed, and the underlying equilibrium measure is always one-cut and independent of the deformation $\sigma_n$. For each $x\in \R\setminus \{0\}$ fixed, we indeed have that $\sigma_n(x)\to 1$ pointwise; however $\sigma_n(0)=1/(1+\ee^{-\sad})$. In other words, we may interpret that $\sigma_n$ introduces a delta modification of the weight in the limit, and Theorem~\ref{thm:recurrence coeffs} is saying that this modification is already strong enough to generate oscillatory terms in the asymptotic expansion.

Last but not least, \eqref{eq:expRCmain} involves the function $\msf Q$, which is given as a total weighted integral of the solution $\msf \Phi$ to \eqref{eq:PDEPhiintro}. Alternatively, the function $\msf Q$ satisfies the PDE
\begin{equation}\label{eq:PDEQintro}
\partial_{\msf T} \left( \frac{ \partial_{\msf T} \partial_{\sad} \msf Q }{ 2\msf Q } \right) =  \partial_\sad (\msf Q^2) + 1,
\end{equation}
which provides a self-standing definition of $\msf Q$ as the solution to this nonlinear equation, which is integrable. This function $\msf Q$ also appeared in \cite{ClaeysTarricone24}. In this direction, the main contribution of Theorem~\ref{thm:recurrence coeffs} is in unraveling how precisely this function contributes to the asymptotics of recurrence coefficients, and in showing that nonlinear integrable systems may appear in this context even when the underlying equilibrium measure is one-cut regular or without the need of double-scaling limits.

From a broader perspective, one may view Theorems~\ref{thm:CorrelationKernelAsymptotics} and \ref{thm:recurrence coeffs} as part of a broader program of describing critical scaling limits of finite-temperature like deformations of orthogonal polynomials, as initiated with critical (regular) edge scalings in \cite{GhosalSilva22}. At the level of finite temperature deformations of the universal limiting point processes of random matrix theory, such program has seen major developments in recent years \cite{CafassoClaeysRuzza2021, Ruzza2024, ClaeysTarricone24, BothnerCafassoTarricone2021, CafassoRuzza23, CafassoPinheiro25, KimuraNavand24}.

At the technical level, our approach relies on the characterization of OPs through a Riemann-Hilbert Problem \cite{Fokas1992} (shortly RHP) and the asymptotic analysis of it through the nonlinear steepest descent method \cite{deift_zhou, deift_book}. Compared with the asymptotic analysis of OPs in the case $\sigma_n\equiv 1$, the main technical difference lies in the need of a local parametrix at the origin, even though the equilibrium measure is assumed to be regular at the origin. This is needed so because $\sigma_n$ blows up as $\Boh(n^2)$ in sectors of any complex neighborhood of the origin.

Unlike cases when $\sigma_n\equiv 1$, the model problem required still depends on $n$ in a nontrivial manner: its jumps are not piecewise constant nor homogeneous and instead depend on a change of variables of $\sigma_n$. Thus, the model problem itself requires a separate asymptotic analysis. The same phenomenon has been observed in \cite{GhosalSilva22}. At the end of the way, we show that this model problem is asymptotically close to another RHP that can be identified with a RHP introduced recently by Claeys and Tarricone \cite{ClaeysTarricone24}. The latter RHP is the one underlying the integrable equations that emerge in our results.

\subsection{Structure of the paper}\hfill 

The remaining sections will be dedicated to obtain the proofs of Theorems~\ref{thm:CorrelationKernelAsymptotics} and \ref{thm:recurrence coeffs}, and the content of each section is as follows. 

In Section \ref{sec:ModelProblem} we introduce the model problem mentioned earlier, which will be required in the asymptotic analysis of OPs, obtain certain properties of it, and carry out its asymptotic analysis needed. In particular, we show that in a certain limit it matches with an RHP from \cite{ClaeysTarricone24}, and in the limit when $\sad\to +\infty$ it matches with a RHP connected with the Sine kernel. The emergence of the integrable PDEs from our main results is also explained in Section~\ref{sec:ModelProblem}.

In Section \ref{sec:RHPAnalysis}, we apply the Deift-Zhou nonlinear descent method to the RHP for OPs. Most of the analysis is standard, and as said the main difference lies in the need of a parametrix at the origin, which is constructed from the model problem of Section~\ref{sec:ModelProblem}.

In Section \ref{sec:spoils} we unwrap the asymptotic analysis performed in Sections~\ref{sec:ModelProblem} and \ref{sec:RHPAnalysis}, ultimately proving Theorems~\ref{thm:CorrelationKernelAsymptotics} and \ref{thm:recurrence coeffs}.

Finally, in Appendix~ \ref{Ap:LaplaceIntegrals} we present asymptotic expansions for some Laplace-type integrals which are used during the RHP analysis of OPs, more specifically in the construction of the global parametrix.

\subsection{About the notation}\label{sec:notation}  
\hfill

We outline some standard notation that will be used for the rest of the paper, mostly without further reference.

    We use $D_r(z_0)$ to denote the disk on the complex plane centered at $z_0$ with radius $r>0$, and $D_r=D_r(0)$ for the particular case when $z_0=0$ is the origin. In general, we use bold capital letters $\bm Y,\bm \Psi$ etc to denote matrix-valued functions. The letters $\varepsilon,\delta,\eta$ always denote positive constants that can be made arbitrarily small but are kept fixed, and we always emphasize when they may depend on external parameters. These small constants may have different values for different occurrences in the text. 

When we write that $x\to \infty$ for some variable $x$, we mean that $x\to +\infty$ when $x$ is real, or $x\to \infty$ along any direction of the complex plane in case $x$ is allowed to assume values in $\C\setminus \R$ as well. These two distinct cases will always be clear from the context and meaning of the variables involved.

We also use the following matrix notation. We denote by $\bm I$ and $\bm 0$ the identity matrix and the null matrix, respectively, and by $\bm E_{ij}$ the $2\times 2$ matrix with $1$ in the $(i,j)$-entry and $0$ in the remaining entries. For convenience, we set 
\begin{equation}\label{def:U0}
 \bm U_0\deff \frac{1}{\sqrt{2}}
\begin{pmatrix}
    1 & \ii \\ \ii & 1
\end{pmatrix}
=
\frac{1}{\sqrt{2}}
\left(\bm I+\ii \bm E_{12}+\ii \bm E_{21}\right),
\end{equation}
and recall that the Pauli matrices are given by
$$
\sp_1\deff 
\begin{pmatrix}
    0 & 1 \\ 1 & 0
\end{pmatrix}=
\bm E_{12}+\bm E_{21}, \quad
\sp_2\deff 
\begin{pmatrix}
    0 & -\ii \\ \ii & 0
\end{pmatrix}=
-\ii\bm E_{12}+\ii \bm E_{21}, \quad
\sp_3\deff 
\begin{pmatrix}
    1 & 0 \\ 0 & -1
\end{pmatrix}=
\bm E_{11}-\bm E_{22}.   
$$

In the course of the Riemann-Hilbert analysis, we will use matrix norm notation. For a matrix-valued function $M:U\subset \C\to \C^{2\times 2}$, we denote
$$
|M(z)|\deff \max_{i,j=1,2} |M_{ij}(z)|,
$$
where  $M_{ij}$ stands for the $(i,j)$-th entry of $M$. 
For $p\in [1,\infty]$ and a curve $\Gamma\subset U$, we also use the corresponding $L^p(\Gamma)=L^p(\Gamma,|\dd x|)$ norm with respect to the arc length measure $|\dd x|$,
$$
\|M\|_{L^p(\Gamma)}\deff \max_{i,j=1,2}\|M_{ij}\|_{L^p(\Gamma)}.
$$
For simplicity, for any $p,q\in [1,+\infty]$ we also denote
$$
\|M\|_{L^p\cap L^q(\Gamma)}\deff \max \{ \|M\|_{L^p(\Gamma)}, \|M\|_{L^q(\Gamma)} \}.
$$

\subsection*{Acknowledgments}\hfill

We thank Promit Ghosal, Andrei Martínez-Finkelshtein, Leslie Molag and Lun Zhang for valuable discussion. 

C.C. declares that this study was financed, in part, by the São Paulo Research Foundation (FAPESP), Brazil. Process Number \#2020/15699-3.

T.C. declares that this study was financed, in part, by the São Paulo Research Foundation (FAPESP), Brazil. Process Numbers \#2023/10533-8 and \#2025/06240-0.

V.A. declares that this study was financed, in part, by the São Paulo Research Foundation (FAPESP), Brazil. Process Numbers \#2020/13183-0 and \#2022/12756-1.

C.S. declares that this study was financed, in part, by grant \#2023 Provost of Inclusion and Belonging, University of São Paulo (USP).

G.S. declares that this study was financed, in part, by the São Paulo Research Foundation (FAPESP), Brazil, Process Numbers \# 2019/16062-1 and \# 2020/02506-2. He also acknowledges partial financial support by the Brazilian National Council for Scientific and Technological Development (CNPq) under Grant \# 306183/2023-4, and by the {\it Programa de Apoio aos Novos Docentes} PRPI-USP.

\section{The model problem}\label{sec:ModelProblem}

In this section, we introduce the model Riemann-Hilbert Problem, which is the main object behind all of our main results. In Subsection \ref{sec:intromodel} we study the model problem and state admissibility condition for its input data. In Subsection \ref{sec:modelprob_particularcase} we study a relevant particular case of the model problem, obtained from an RHP studied in \cite{ClaeysTarricone24} and connected with some integrable differential equations underlying our later results. Subsection \ref{sec:Asymptotics1} shows that this particular problem degenerates into an RHP describing the sine kernel. In Subsection \ref{sec:Asymptotics2} another asymptotic result is presented, proving that the admissible instances of the model problem are well posed -- at least for large values of the degree $n$ -- and converge to our particular problem.

\subsection{Introduction of the model problem}\label{sec:intromodel}

\hfill 

Introduce the contours
$$
\Sigma_0\deff [0,+\infty), \quad \Sigma_{\pm 1}\deff [0,\ee^{\pm \pi\ii/8}\infty),\quad \Sigma_{\pm 2}\deff [0,\ee^{\pm 7\pi \ii/8}),\quad \Sigma_3\deff (-\infty,0],\quad \Sigma_{\bm \Phi}\deff \bigcup_{j=-2}^3 \Sigma_j.
$$
We orient the rays $\Sigma_{\pm 2}$ and $\Sigma_{3}$ from $\infty$ to the origin, and the remaining rays from the origin to $\infty$, see Figure~\ref{fig:contours_PHI}. In the description of the following RHP, we will use a function $\msf H$ which is assumed to be defined on a neighborhood of $\Sigma_{\bm \Phi}$, more conditions on it will be placed in a moment. The reader may want to keep in mind the particular choice
$$
\msf H(\zeta)=\msf H_\infty(\zeta)\deff \uad \zeta^2+\sad,\quad \zeta\in \C,
$$
where $\uad> 0$ and $\sad\in \R$ are viewed as parameters, which will play a substantial role in our paper, and whose corresponding RHP will be studied in Section~\ref{sec:modelprob_particularcase}. In particular, this choice $\msf H_\infty$ also explains our choice of angles for the arcs of $\Sigma_{\bm \Phi}$: they are such that $\re\msf H_\infty(\zeta)\to +\infty$ as $\zeta\to \infty$ along $\Sigma_{\bm\Phi}$.

\begin{rhp}
\label{rhp:Phi}
Seek for a $2\times 2$ matrix-valued function $\bm \Phi:\C\setminus \Sigma_{\bm \Phi}\to \C^{2\times 2}$ with the following properties.
\begin{enumerate}[\rm (i)]
\item $\bm \Phi:\C\setminus \Sigma_{\bm \Phi}\to \C^{2\times 2}$ is analytic;
\item The matrix $\bm \Phi$ has continuous boundary values $\bm \Phi_\pm$ along $\Sigma_{\bm \Phi}\setminus \{0\}$, and they are related by $\bm \Phi_+(\zeta)=\bm \Phi_-(\zeta)\bm J_{\bm \Phi}(\zeta)$, $\zeta\in \Sigma_{\bm \Phi}$, where the jump matrix is defined through the function
\begin{equation}\label{deff:lambdamodelprobl}
\lambda(\zeta) \defeq \frac{1}{1 + \ee^{- \msf H(\zeta)}},
\end{equation}
and it is given by
$$
\bm J_{\bm \Phi}(\zeta)\deff 
\begin{dcases}
    \bm I+\frac{1}{\lambda(\zeta)}\bm E_{21},\quad &\zeta \in \Sigma_{\bm \Phi}\setminus \R, \\ 
    \lambda(\zeta)\bm E_{12}-\frac{1}{\lambda(\zeta)}\bm E_{21}, & \zeta\in \R\setminus \{0\}=(\Sigma_0\cup\Sigma_3)\setminus \{0\}.
\end{dcases}
$$
\item Setting
\begin{equation}\label{eq:deffU}
    \bm U^+\deff \bm I,\quad 
    \bm U^-\deff -\bm E_{12}+\bm E_{21},    
\end{equation}
$\bm \Phi$ behaves as
\begin{equation}\label{eq:rhpPhiasympt}
    \bm \Phi(\zeta)
        = \left(\bm I+ \frac{\bm \Phi_1}{\zeta} + \Boh \left(\zeta^{-2} \right)\right)\bm U^\pm \ee^{\mp \ii \zeta\sp_3}, \quad \zeta\to \infty, \; \pm \im \zeta>0;
\end{equation}

\item $\bm \Phi$ remains bounded as $\zeta\to 0$.
\end{enumerate}
\end{rhp}

\begin{figure}
    \centering
\begin{tikzpicture}[scale=5, >=stealth]

\foreach \angle/\index in {
    0/0,
    22.5/1,
    157.5/2,
    180/3,
    -22.5/-1,
    -157.5/-2
} {
    \draw[] (0,0) -- ({cos(\angle)}, {sin(\angle)});

    \ifdim \angle pt>-90pt
        \ifdim \angle pt<90pt
            \draw[mid arrow] (0,0) -- ({cos(\angle)}, {sin(\angle)});
        \else
            \draw[mid arrow] ({cos(\angle)}, {sin(\angle)}) -- (0, 0);
        \fi
    \else
        \draw[mid arrow] ({cos(\angle)}, {sin(\angle)}) -- (0,0);
    \fi

    \node at ({1.1*cos(\angle)}, {1.1*sin(\angle)}) {$\Sigma_{\index}$};

    \ifnum\index=1
        \node[above left] at ({0.5*cos(\angle)}, {0.5*sin(\angle)}) {$\ee^{ \frac{\pi \ii }{8}}$};
    \fi

    \ifnum\index=2
        \node[above right] at ({0.5*cos(\angle)}, {0.5*sin(\angle)}) {$\ee^{ \frac{7\pi \ii }{8}}$};
    \fi

    \ifnum\index=-1
        \node[below left] at ({0.5*cos(\angle)}, {0.5*sin(\angle)}) {$\ee^{ - \frac{\pi \ii }{8}}$};
    \fi

    \ifnum\index=-2
        \node[ below right] at ({0.5*cos(\angle)}, {0.5*sin(\angle)}) {$\ee^{ -\frac{7\pi \ii }{8}}$};
    \fi
}

\path 
    (0,0) coordinate (A)
    ({0.2*cos(0)},{0.2*sin(0)}) coordinate (B)
    ({0.2*cos(-22.5)}, {0.2*sin(-22.5)}) coordinate (C);

\filldraw (0,0) circle (0.5pt);

\end{tikzpicture}
    \caption{The set $\Sigma_{\bm \Phi}$ is the union of the rays $\Sigma_j$, $-2 \leq j \leq 3$.}
    \label{fig:contours_PHI}
\end{figure}
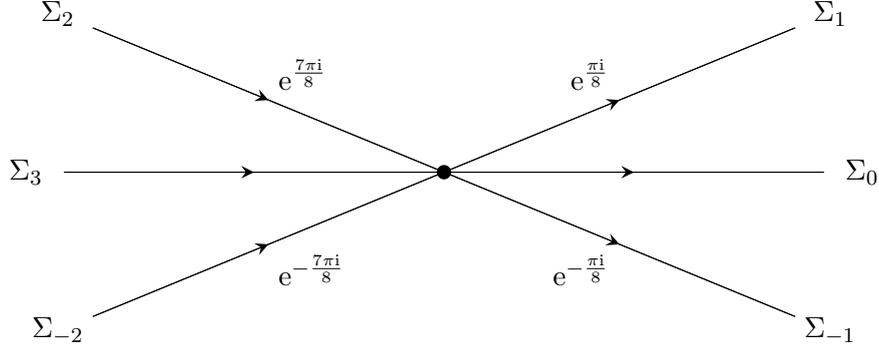

In our case, we are interested in the model problem for a function $\msf H$ arising in a certain structural manner, as we introduce in the next definition.

\begin{definition}\label{deff:admissible}
    We say that a function $\msf H_0$ is {\it admissible} if $\msf H_0$ is analytic on a small fixed disk $D_\delta$ centered at the origin, and if
    $$
    \msf H_0(0)=0,\quad \msf H'_0(0)=0,\quad \uad \defeq  \frac{\msf H''_0(0)}{2}>0.
    $$
\end{definition}

Our interest lies in the RHP~\ref{rhp:Phi} for a particular construction of functions $\msf H_n$, obtained from an admissible function $\msf H_0$ through
\begin{equation}\label{deff:msfHn}
\msf H_n(\zeta)\deff \sad+n^2\msf H_0\left(\frac{\zeta}{n}\right).
\end{equation}
Observe that, this way, $\msf H_n(\zeta)$ is defined only in a growing disk $D_{n\delta}$, and it satisfies
$$
\msf H_n(\zeta)=\sad+\uad \zeta^2 + \Boh\left( \frac{\zeta^3}{n} \right),\quad n\to \infty,
$$
where this expansion is uniform for, say, $|\zeta|<\delta n/2$. In particular, for each $\zeta$ fixed, we have the pointwise convergence
\begin{equation}\label{eq:convHnHinfty}
\msf H_n(\zeta)\to \msf H_\infty(\zeta)\deff \sad +\uad \zeta^2,\quad n\to \infty.
\end{equation}
In particular, we expect that 
\begin{equation}\label{eq:Phi_Definition}
\bm \Phi_n\deff \bm \Phi(\cdot\mid \msf H=\msf H_n)\to \bm \Phi_\infty\deff \bm \Phi(\cdot\mid \msf H=\msf H_\infty),\quad n\to \infty.
\end{equation}
which is what we will prove in a moment. However, to be able to talk about $\bm \Phi_n$ in the first place, we need to be able to say what we mean by the function $\msf H=\msf H_n$ as a function defined on the whole set of rays $\Sigma_{\bm \Phi}$, and not solely on the growing disk $D_{\delta n}$. 

To overcome this raised issue, we need to extend $\msf H_n$ in \eqref{deff:msfHn} from $D_{\delta n}$ to $\Sigma_{\bm \Phi}$. This amounts to extending $\msf H_0(w)$ from $D_{\delta}$ to $\Sigma_{\bm \Phi}$. There are several ways to make this extension, and they will all eventually lead to \eqref{eq:convHnHinfty}. But, for concreteness, we now describe a canonical extension, and we will always work with this extension.

Given $\delta>0$, let $(\phi_j)$ be a partition of unity of $C^\infty_c(\R\to\R)$ real-valued functions, normalized in such a way that
$$
0\leq \phi_j(r)\leq 1\text{ for every } r\in \R,\quad \sum_{j}\phi_j(x)=1 \text{ for } 0\leq r\leq \frac{\delta}{2},\quad \sum_{j} \phi(x)=0, \text{ for } r\notin [0,\delta].
$$
Now, define
\begin{equation}\label{deff:extensionhatH}
\wh{\msf H}_0(w)\deff \left(\sum_j \phi_j(|w|)\right)\msf H_0(w)+\left(1-\sum_{j}\phi_j(|w|)\right)\uad w^2,\quad w\in \C.
\end{equation}
The following lemma is immediate, and we skip its proof.
\begin{lemma}\label{lem:fundineqH}
   The function $\wh{\msf H}_0$ is a $C^\infty$ extension of $\msf H_0$ from $D_{\delta/2}$ to $\C$. Furthermore, for some constants $\eta, M>0$, it satisfies
   $$
   \eta|w|^2\leq \re \wh{\msf H}_0(w)\leq M|w|^2,\quad \text{for every }w\in \Sigma_{\bm\Phi}.
   $$
\end{lemma}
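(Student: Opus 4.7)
The proof separates into two claims: smoothness of $\wh{\msf H}_0$ and the two-sided quadratic bound on $\Sigma_{\bm \Phi}$. The smoothness claim is a formality. By the normalization of the partition of unity, $\sum_j \phi_j(|w|)$ equals $1$ on $\overline{D_{\delta/2}}$ and vanishes for $|w|\geq \delta$. Hence $\wh{\msf H}_0 \equiv \msf H_0$ on $D_{\delta/2}$ (which is $C^\infty$ by the admissibility of $\msf H_0$), while $\wh{\msf H}_0(w) = \uad w^2$ on $\C\setminus \overline{D_\delta}$. On the transition annulus $\delta/2 \leq |w|\leq \delta$ the function $\wh{\msf H}_0$ is a convex combination, with $C^\infty_c$ weights, of two $C^\infty$ functions, so it is $C^\infty$ and glues consistently to its definitions on the two adjacent regions.

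For the lower and upper bounds, the key observation is purely geometric. Every ray of $\Sigma_{\bm \Phi}$ has argument in $\{0, \pm\pi/8, \pm 7\pi/8, \pi\}$, so for any $w\in \Sigma_{\bm \Phi}$ one has $2\arg w \pmod{2\pi} \in \{0, \pm\pi/4\}$ and therefore
$$
\re(\uad w^2) = \uad \cos(2\arg w)\,|w|^2 \geq \tfrac{\sqrt{2}}{2}\uad \,|w|^2,
$$
and trivially $\re(\uad w^2) \leq \uad |w|^2$. In particular, on the region $|w|\geq \delta$, where $\wh{\msf H}_0(w) = \uad w^2$, both bounds hold with explicit constants.

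For $|w|\leq \delta$, admissibility of $\msf H_0$ and Taylor's theorem give $\msf H_0(w) = \uad w^2 + \Boh(|w|^3)$ uniformly on $D_\delta$. Combined with the geometric lower bound just above, this yields
$$
\re \msf H_0(w) \geq \tfrac{\sqrt 2}{2}\uad |w|^2 - C|w|^3 \geq \tfrac{\sqrt 2}{4}\uad |w|^2
$$
for $w\in \Sigma_{\bm \Phi}\cap D_\delta$ provided $\delta$ is sufficiently small, and similarly $|\re \msf H_0(w)|\leq (\uad + C\delta)|w|^2$. On the transition annulus $\delta/2\leq |w|\leq \delta$, $\wh{\msf H}_0$ is a convex combination of $\msf H_0$ and $\uad w^2$ with nonnegative weights summing to $1$; since each summand already obeys the two-sided bound on $\Sigma_{\bm \Phi}$ with constants independent of the weight, so does $\wh{\msf H}_0$. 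Taking $\eta$ to be the minimum and $M$ the maximum of the constants produced over the three regions proves the inequality globally.

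The only subtlety, and the sole reason the statement requires a proof rather than an instant verification, is the need to shrink $\delta$ far enough that the cubic remainder in the Taylor expansion does not overwhelm the geometric lower bound near the origin. Since the construction of the partition of unity is completely free in its choice of $\delta$, this is harmless and does not conflict with any other choice made later in the analysis.
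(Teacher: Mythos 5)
Your proof is correct, and since the paper explicitly skips the proof of this lemma as ``immediate,'' your argument (exact agreement on $D_{\delta/2}$ and outside $D_\delta$, the observation that $\cos(2\arg w)\geq \cos(\pi/4)$ on every ray of $\Sigma_{\bm\Phi}$, Taylor absorption of the cubic remainder after shrinking $\delta$, and convexity on the transition annulus) is precisely the one the authors intend. No gaps.
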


\begin{remark}
     Given an admissible function $\msf H_0$ in the sense of Definition~\ref{deff:admissible}, from now on we already assume that it extends to $\C$ through \eqref{deff:extensionhatH}. Technically speaking, we are extending it from $D_{\delta/2}$ rather than from $D_\delta$, but since $\delta>0$ is an (unimportant) small number, such distinction is irrelevant. From now on, rather than using the notation $\wh{\msf H}_0$ for the extension, to lighten notation we still denote this extension by $\msf H_0$.
\end{remark}

Thus, for an admissible function $\msf H_0$ in the sense of Definition~\ref{deff:admissible}, and for $n> 0$ (not necessarily an integer), we may now set
$$
\msf H_n(\zeta)\deff \sad +n^2\msf H_0\left( \frac{\zeta}{n} \right),\quad  
\quad \lambda_n(\zeta) \defeq \frac{1}{1 + \ee^{- \msf H_n(\zeta)}}, \quad \zeta\in \C,
$$
and talk about the corresponding solution
$$
\bm \Phi_n\deff \bm\Phi(\cdot\mid \msf H=\msf H_n)
$$
of the RHP~\ref{rhp:Phi}. As said before, under the modest conditions we are placing on $\msf H_0$ and hence on $\msf H_n$, there is no apparent reason why the RHP~\ref{rhp:Phi} should be solvable in the first place. But, as also discussed previously (recall \eqref{eq:Phi_Definition}), we will eventually prove that $\bm \Phi_\infty$ exists, and also establish the existence of $\bm\Phi_n$ for large $n$ as a consequence. To verify these claims, we now study $\bm\Phi_\infty$.

\subsection{The model problem: a particular case}\label{sec:modelprob_particularcase}\hfill

As said earlier, a case of RHP~\ref{rhp:Phi} of particular interest comes from the choice
\begin{equation}\label{deff:hinfty}
\msf H_\infty(\zeta)=\msf H_\infty(\zeta\mid \sad)\deff \sad +\uad \zeta^2,\quad \zeta \in \C, \quad \text{leading to}\quad \lambda_\infty(\zeta)\deff \frac{1}{1+\ee^{-\msf H_\infty(\zeta)}},
\end{equation}
and we now analyze the corresponding solution $\bm \Phi_\infty=\bm\Phi(\cdot\mid \msf H=\msf H_\infty)$.

Our goal next is to identify this RHP with the construction from \cite{ClaeysTarricone24}. To that end, let $\mcal S^\pm_k$ be the angular region between $\Sigma_{\pm k}$ and $\Sigma_{\pm (k+1)}$, so that 
\begin{align*}
& \mcal S_0^\pm= \left\{\zeta\in \C \mid 0<\pm \arg \zeta < \frac{\pi}{8} \right\},\quad \mcal S_1^\pm= \left\{\zeta\in \C \mid \frac{\pi}{8}<\pm \arg \zeta < \frac{7\pi}{8} \right\}, \qquad \text{and}\\
& \mcal S_2^\pm = \left\{\zeta\in \C \mid \frac{7\pi}{8}<\pm \arg\zeta <\pi \right\},
\end{align*}
where we recall that all the arguments are taken with the principal branch, that is, $\arg\zeta\in (-\pi,\pi)$.

In \cite{ClaeysTarricone24}, the authors studied Fredholm determinants associated to deformations of the sine kernel, which describe bulk statistics of free fermions at finite temperature. In this direction, they introduce the following Riemann-Hilbert problem.
\begin{rhp} \label{RHP:PsiCT}
Seek for a $2\times 2$ matrix-valued function $\bm \Psi:\C\setminus \R\to \C^{2\times 2}$ with the following properties.
\begin{enumerate}[\rm (i)]
\item $\bm \Psi:\C\setminus \R\to \C^{2\times 2}$ is analytic.
\item The matrix $\bm \Psi$ has continuous boundary values $\bm \Psi_\pm$ along $\R$, and they are related by $\bm \Psi_+(\zeta)=\bm \Psi_-(\zeta)\bm J_{\bm \Psi}(\zeta)$, $\zeta\in \R$, with
$$
\bm J_{\bm \Psi}(\zeta)\deff 
    \bm I+(1-\msf w(\zeta))\bm E_{12},\quad \zeta \in \R.
$$
\item As $\zeta\to \infty$, $\bm \Psi$ behaves as
$$
\bm \Psi(\zeta)=\left(\bm I+ \frac{\bm \Psi_1}{\zeta} + \Boh\left(\frac{1}{\zeta^{2}}\right)\right) \ee^{\ii \msf T\zeta\sp_3} \times 
\begin{dcases}
\begin{pmatrix}
    1 & 1\\ 1 & 0
\end{pmatrix}, & \im\zeta>0, \\ 
\begin{pmatrix}
    1 & 0\\ 1 & -1
\end{pmatrix}, & \im\zeta<0.
\end{dcases}
$$
\end{enumerate}
\end{rhp}

\begin{remark}\label{remark:ClaeysTarricone}
    In \cite{ClaeysTarricone24}, the author use the notation $s$ instead of $\msf T$ to describe the asymptotics of $\bm \Psi$. We change it here for $\msf T$ to avoid confusion, since we already use $\sad$ as a parameter in the definition of $\sigma_n$. Also, the matrix that we denote over here by $\bm \Psi$ is denoted in \cite{ClaeysTarricone24} by $U$, whose RHP is described in Section~4.1 therein.
\end{remark}

    The choice of $\lambda_\infty$ being considered in this subsection corresponds with the choices $\msf W(r)=(1+\ee^{r})^{-1}$ (see Remark 2.4 in \cite{ClaeysTarricone24}) and \begin{equation}\label{eq:deff_w}
        \msf w(\zeta) \defeq \msf W(\zeta^2+\sad) = \frac{1}{1 + \ee^{\sad + \zeta^2}}.
    \end{equation}
        
     They also proved that in such a case $\bm \Psi_1$ has the structure\footnote{These identities are rewritings of the ones obtained in Section~6.1 in \cite{ClaeysTarricone24}, with the identifications $ \msf P=- p$ and $\msf Q=- q$.}
    \begin{equation}\label{eq:Psi1_decomposition}
        \bm \Psi_1 = \ii \msf P(\sad) \sp_3 + \msf Q(\sad) \sp_2.
        \end{equation}
    Observe that $\msf P(\sad)=\msf P(\sad\mid \msf T)$ and $\msf Q(\sad)=\msf Q(\sad\mid \msf T)$, but we often omit the $\msf T$-dependence of these functions, as their $\sad$-dependence is of greater relevance to us.
    
    The functions $\msf P$ and $\msf Q$ are related by $\msf \partial_{\msf T}\msf P=-\msf Q^2$, $\msf Q$ satisfies the identity \eqref{eq:integralreprQ} and solves the PDE \eqref{eq:PDEQintro}, see \cite[$\mathsection$ 6.1]{ClaeysTarricone24}. These functions $\msf P$ and $\msf Q$ will play a role later.

From now on, we make the correspondence
$$
\msf T= \frac{1}{\sqrt{\uad}}
$$
between the variable $\msf T$ of the RHP~\ref{RHP:PsiCT} and the variable $\msf u$ from \eqref{deff:hinfty}. This correspondence is consistent with \eqref{eq:lambdainftyTdeffintro}.

Our objective now is to describe how to match this Riemann-Hilbert problem with the model problem $\bm \Phi_\infty$, i.e. the RHP \ref{rhp:Phi} with $\lambda =\lambda_\infty$. To this end, the identity
\begin{equation}\label{eq:wlambdainfty}
1-\msf w\left(\frac{\zeta}{\msf T} \right)
    = \frac{1}{1+\ee^{-\sad-\uad\zeta^2}} = \lambda_\infty(\zeta).
\end{equation}
heavily motivates the transformation 
$$
\wt {\bm{\Psi}}(\zeta)\deff \bm \Psi(-\zeta/\msf T)\sp_3.
$$
Having in mind that $\zeta\to -\zeta$ changes the $\pm$-boundary values along $\R$ to $\mp$-boundary values, we see that $\wt{\bm \Psi}$ solves the following RHP.
\begin{rhp} Seek for a $2\times 2$ matrix-valued function $\wt{\bm \Psi}:\C\setminus \R\to \C^{2\times 2}$ with the following properties.
\begin{enumerate}[\rm (i)]
\item $\wt{\bm \Psi}:\C\setminus \R\to \C^{2\times 2}$ is analytic.
\item The matrix $\wt{\bm \Psi}$ has continuous boundary values $\wt{\bm \Psi}_\pm$ along $\R$, and they are related by $\wt{\bm \Psi}_+(\zeta)=\wt{\bm \Psi}_-(\zeta)\bm J_{\wt{\bm \Psi}}(\zeta)$, $\zeta\in \R$, with
$$
\bm J_{\wt{\bm \Psi}}(\zeta)\deff 
    \bm I+\lambda_\infty(\zeta)\bm E_{12},\quad \zeta \in \R,
$$
\item As $\zeta\to \infty$, $\wt{\bm \Psi}$ behaves as
$$
\wt{\bm \Psi}(\zeta)=\left(\bm I+\Boh(\zeta^{-1})\right) \ee^{-\ii \zeta\sp_3}
\begin{dcases}
\begin{pmatrix}
    1 & 0\\ 1 & 1
\end{pmatrix}, & \im\zeta>0, \\ 
\begin{pmatrix}
    1 & -1\\ 1 & 0
\end{pmatrix}, & \im\zeta<0.
\end{dcases}
$$

\end{enumerate}
\end{rhp}

Next, we proceed with an opening of lenses, defining
$$
\wh{\bm \Psi}(\zeta)=\wt{\bm \Psi}(\zeta)\times 
\begin{dcases}
    \bm I\mp \frac{1}{\lambda_\infty(\zeta)}\bm E_{21}, & \zeta\in \mcal S_0^\pm\cup \mcal S_2^\pm, \\ 
    \bm I, & \text{elsewhere}.
\end{dcases}
$$

Using extensively that
$$
\ee^{-\ii \zeta \sp_3}\bm U^-=\bm U^- \ee^{\ii \zeta\sp_3},\quad \text{and that}\quad \ee^{\pm \ii\zeta}=\Boh(\ee^{-\eta|\zeta|}) \text{ for } \zeta\to \infty \text{ along }\mcal S^\pm_1,
$$
for some $\eta>0$, we obtain that $\wh{\bm \Psi}$ solves the model problem RHP~\ref{rhp:Phi} for $\lambda=\lambda_\infty$, and therefore
\[
\wh{\bm \Psi}\equiv \bm \Phi_\infty.
\]
Moreover, comparing the coefficients of the asymptotic expression it is straightforward to check that
\begin{equation}\label{eq:Phi1_decomposition}
    \bm \Phi_{\infty,1} = - \msf T \bm \Psi_1 = - \left(\ii \msf p \sp_3 + \msf q \sp_2 \right),
\end{equation}
where for convenience we have set
\begin{equation}\label{eq:msfqpmsfPQ}
\msf p=\msf p(\sad\mid \msf T)\deff \msf T\msf P(\sad\mid \msf T)\quad \text{and}\quad \msf q=\msf q(\sad\mid \msf T)\deff \msf T\msf Q(\sad\mid \msf T).
\end{equation}

We now draw consequences from this identity. Unwrapping the transformations $\wh{\bm \Psi}\mapsto \bm\Psi$, we obtain the identity
\begin{equation}\label{eq:Limiting_Kernelpre}
\left[
\left(\bm I-\frac{\bm E_{21}}{\lambda_\infty(\xi)}\right)\bm\Phi_\infty(\xi)^{-1}\bm\Phi_\infty(\zeta)\left(\bm I+\frac{\bm E_{21}}{\lambda_\infty(\zeta)}\right)
\right]_{21,+}=
    \msf \Phi\left(-\frac{\xi}{\msf T}\right)\msf \Psi\left(-\frac{\zeta}{\msf T}\right)-\msf \Phi\left(-\frac{\zeta}{\msf T}\right)\msf \Psi\left(-\frac{\xi}{\msf T}\right),
\end{equation}
where
\begin{equation}\label{eq:PhiscalarPhimatrix}
\msf \Phi(\xi)\deff \left[\bm\Psi(\xi)\right]_{11,+}=\left[\bm \Phi_{\infty}(-\msf T\xi)\right]_{11,+},\quad
\msf \Psi(\xi)\deff \left[\bm\Psi(\xi)\right]_{21,+}
    =\left[\bm \Phi_{\infty}(-\msf T\xi)\right]_{21,+},\quad \xi\in \R.
\end{equation}
As proven in \cite[Corollary~5.2 and Proposition~5.3]{ClaeysTarricone24},
$$
\msf \Phi(\xi)=\msf \Psi(-\xi),
$$
and we simplify the right-hand side of \eqref{eq:Limiting_Kernelpre} to
\begin{equation}\label{eq:Limiting_Kernel}
\left[
    \left(\bm I-\frac{\bm E_{21}}{\lambda_\infty(\xi)}\right)\bm\Phi_\infty(\xi)^{-1}\bm\Phi_\infty(\zeta)\left(\bm I+\frac{\bm E_{21}}{\lambda_\infty(\zeta)}\right)\right]_{21,+}
        = \msf \Phi\left(\frac{\zeta}{\msf T}\right)\msf \Phi\left(-\frac{\xi}{\msf T}\right)-\msf \Phi\left(-\frac{\zeta}{\msf T}\right)\msf \Phi\left(\frac{\xi}{\msf T}\right).
\end{equation}

By \cite[Theorem 2.1 and Corollary 5.2]{ClaeysTarricone24}, the function $\msf \Phi=\msf \Phi(\zeta \mid \msf T)$ solves the nonlocal nonlinear equation
\begin{equation}\label{eq:nonlocalPDE}
\begin{aligned}
    & \partial_{\msf T} \msf \Phi(\zeta \mid \msf T) =\ii \zeta\msf\Phi(\zeta \mid \msf T ) + \left(\frac{1}{2\pi\ii }\int_\R \msf \Phi(\xi \mid \msf T)^2 \lambda_\infty' \left(\Tad \xi \mid \sad \right)\dd\xi\right)\msf \Phi(-\zeta \mid \msf T), \\
    & \text{with}\quad \msf \Phi(\zeta \mid \Tad) \sim \ee^{\ii \Tad \zeta}, \quad \zeta \to \pm \infty.
\end{aligned}
\end{equation}
We note that we translated the equation from \cite{ClaeysTarricone24} to this equation using the relation \eqref{eq:wlambdainfty}. We stress that $'=\partial_\xi$ is the derivative with respect to the (spectral) variable of the RHP.

\subsection{The model problem: asymptotics I}\label{sec:Asymptotics1}\hfill 

We now analyze $\bm \Phi_\infty(\cdot\mid \sad)$ as $\sad\to +\infty$. As we will see, this RHP converges to the RHP ${\bm \Phi}_{\sin}$ corresponding to the formal choice $\sad =+\infty$, and which gives rise to the celebrated sine kernel.

We point out that $\bm \Phi_{\sin}$ is explicitly constructed by
$$
\bm \Phi_{\sin}(\zeta)\deff 
\begin{dcases}
    \ee^{-\ii\zeta\sp_3}\bm U^\pm (\bm I\pm\bm E_{21}), & \zeta\in \mcal S^\pm_1, \\ 
    \ee^{-\ii\zeta\sp_3}\bm U^\pm , & \zeta\in \mcal S^\pm_0\cup \mcal S^\pm_2. 
\end{dcases}
$$
A direct calculation shows that
\begin{equation}\label{eq:SinekernelRHPcharac}
\frac{1}{2\ii(u-v)}\left[ \left( \bm I-\bm E_{21} \right)\bm \Phi_{\sin}(v)^{-1} {\bm \Phi}_{\sin}(u)\left(\bm I+\bm E_{21}\right) \right]_{21}=\frac{\sin(u-v)}{u-v},
\end{equation}
which justifies the use of the index $\sin$ in ${\bm\Phi}_{\sin}$. With $\msf \Phi_{\sin}$ being defined by \eqref{eq:PhiscalarPhimatrix} with $\bm \Phi_\infty=\bm\Phi_{\sin}$, the identities 
\[
    \msf \Phi_{\sin}(\zeta) = \ee^{\ii \msf T \zeta}\qquad \text{and}\qquad   \partial_{\msf T} \msf \Phi(\zeta \mid \msf T)=\ii \zeta\msf\Phi(\zeta \mid \msf T)
\]
are trivial, so \eqref{eq:Limiting_Kernel} and \eqref{eq:nonlocalPDE} hold with the choice $\lambda_\infty \equiv 1$.

    To verify the claimed convergence, define
\begin{equation}\label{eq:transfLsPhiinfty}
    \bm L_{\sad} (\zeta)\deff \bm \Phi_\infty(\zeta\mid \sad)\bm\Phi_{\sin}(\zeta)^{-1},\quad \zeta\in \C\setminus \Sigma_{\bm \Phi}, \quad
    \lambda_{ \sad}(\zeta) \defeq \frac{1}{1 + \ee^{- \sad - \uad \zeta^2}}.
\end{equation}

\begin{prop}\label{prop:estimatesAsymptotics1}
    The estimates
    \[
        \| \bm L_{\sad, \pm} - \bm I \|_{L^2(\Sigma_{\bm \Phi})} = \Boh(\ee^{- \sad}), \quad
        \| \bm L_{\sad} - \bm I \|_{L^\infty \left( \C \setminus \Sigma_{\bm \Phi}\right)} = \Boh(\ee^{- \sad}),
    \]
    are valid as $\sad \to +\infty$.
\end{prop}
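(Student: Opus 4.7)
The plan is to recognize $\bm L_{\sad}$ as the solution of a small-norm RHP on $\Sigma_{\bm\Phi}$ and invoke the standard small-norm machinery.

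First I would verify that $\bm L_{\sad}$ solves a well-posed RHP. Both $\bm \Phi_\infty$ and $\bm \Phi_{\sin}$ are analytic off $\Sigma_{\bm\Phi}$ and bounded at the origin, so $\bm L_{\sad}$ inherits these properties. At infinity, the matching leading behaviors $\bm U^{\pm}\ee^{\mp \ii\zeta \sp_3}$ of $\bm \Phi_\infty$ (coming from \eqref{eq:rhpPhiasympt}) and of $\bm \Phi_{\sin}$ cancel each other out, producing
\[
    \bm L_{\sad}(\zeta) = \bm I + \Boh(\zeta^{-1}), \quad \zeta\to \infty.
\]

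Next I would compute the jump $\bm J_{\bm L_{\sad}} = \bm \Phi_{\sin,-}\bigl(\bm J_{\bm \Phi_\infty}\bm J_{\bm \Phi_{\sin}}^{-1}\bigr)\bm \Phi_{\sin,-}^{-1}$ explicitly on each piece of $\Sigma_{\bm\Phi}$. On $\Sigma_{\pm 1}\cup\Sigma_{\pm 2}$ a direct matrix multiplication yields
\[
    \bm J_{\bm \Phi_\infty}\bm J_{\bm \Phi_{\sin}}^{-1} = \bm I + \bigl(\lambda_{\sad}(\zeta)^{-1} - 1\bigr)\bm E_{21} = \bm I + \ee^{-\sad - \uad \zeta^2}\bm E_{21},
\]
and conjugation by $\bm \Phi_{\sin,-}$, which is of the form $\ee^{-\ii \zeta \sp_3}$ times a bounded matrix, multiplies the $\bm E_{21}$ factor by $\ee^{2\ii\zeta}$ up to bounded scalars. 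On the real line the analogous computation gives $\bm J_{\bm\Phi_\infty}\bm J_{\bm\Phi_{\sin}}^{-1} = \lambda_{\sad}^{\sp_3}$, which after conjugation by $\bm U^-$ becomes a diagonal perturbation of $\bm I$ whose entries are $\lambda_{\sad}^{\pm 1} - 1 = \Boh(\ee^{-\sad-\uad \zeta^2})$. The opening angles of the rays $\Sigma_{\pm 1},\Sigma_{\pm 2}$ are specifically chosen so that $\re(\uad \zeta^2)$ dominates any linear-in-$\zeta$ contribution coming from $\ee^{2\ii \zeta}$; completing the square therefore produces, for some $c>0$ depending only on $\uad$,
\[
    |\bm J_{\bm L_{\sad}}(\zeta) - \bm I|\leq C\ee^{-\sad - c|\zeta|^2}, \quad \zeta\in \Sigma_{\bm\Phi}.
\]
In particular $\|\bm J_{\bm L_{\sad}} - \bm I\|_{L^1 \cap L^2 \cap L^\infty(\Sigma_{\bm\Phi})} = \Boh(\ee^{-\sad})$.

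Finally I would invoke standard small-norm RHP theory (see e.g.~\cite{deift_zhou, deift_book}). For $\sad$ sufficiently large, the singular integral equation $\bm L_{\sad,-} - \bm I = \mathcal C_-\!\bigl(\bm L_{\sad,-}(\bm J_{\bm L_{\sad}} - \bm I)\bigr)$ is solvable by a Neumann series, producing $\|\bm L_{\sad,\pm} - \bm I\|_{L^2(\Sigma_{\bm\Phi})} = \Boh(\|\bm J_{\bm L_{\sad}} - \bm I\|_{L^2}) = \Boh(\ee^{-\sad})$. The Cauchy representation
\[
    \bm L_{\sad}(\zeta) - \bm I = \frac{1}{2\pi\ii}\int_{\Sigma_{\bm\Phi}}\frac{\bm L_{\sad,-}(s)(\bm J_{\bm L_{\sad}}(s)-\bm I)}{s-\zeta}\,\dd s
\]
then yields the $L^\infty$ estimate off the contour by combining Cauchy--Schwarz with the $L^2$ control of $\bm L_{\sad,-}$ and $\bm J_{\bm L_{\sad}}-\bm I$, and with the $L^1$ bound above to handle points at a fixed distance from the contour. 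The only delicate point is the jump computation in the second step, where one must carefully track orientation conventions and the piecewise formula for $\bm \Phi_{\sin}$; once that bookkeeping is done, the uniform Gaussian decay is essentially forced by the choice of opening angles for $\Sigma_{\bm\Phi}$, and the rest reduces to routine small-norm analysis.
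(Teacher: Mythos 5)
Your proposal is correct and follows essentially the same route as the paper: compute $\bm J_{\bm L_{\sad}}=\bm \Phi_{\sin,-}\bm J_{\bm \Phi_\infty}\bm J_{\bm \Phi_{\sin}}^{-1}\bm \Phi_{\sin,-}^{-1}$, observe that the conjugated rank-one (off $\R$) and diagonal (on $\R$) factors stay bounded thanks to the $\pi/8$ opening angles, deduce the uniform bound $|\bm J_{\bm L_{\sad}}-\bm I|\le M\ee^{-\sad}|\ee^{-\uad\zeta^2}|$ on $\Sigma_{\bm\Phi}$, and conclude by standard small-norm theory. The only cosmetic remark is that the $\ee^{\pm 2\ii\zeta}$ factor produced by the conjugation is itself already decaying on the rays, so there is no competition with the Gaussian to worry about; otherwise the argument matches the paper's.
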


\begin{proof}
    
    Consider the identity
    \[
        \bm J_{\bm L_{\sad}}(\zeta) = \bm \Phi_{\sin,-}(\zeta) \bm J_{\bm \Phi_\infty}(\zeta) \bm J_{\bm \Phi_{\sin}}(\zeta)^{-1} \bm \Phi_{\sin, -}(\zeta)^{-1}.
    \]
    For $\zeta \in \Sigma_{\bm \Phi} \setminus \R$, above takes the form
    \[
        \begin{split}
        \bm J_{\bm L_{\sad}}(\zeta) 
            &= \bm \Phi_{\sin, -}(\zeta) \left( \bm I + \lambda_{\sad}(\zeta)^{-1} \bm E_{21} \right) \left( \bm I - \bm E_{21}\right) \bm \Phi_{\sin, -}(\zeta)^{-1} \\
            &= \bm I + \left( \frac{1}{\lambda_{\sad}(\zeta)} - 1\right) \bm \Phi_{\sin, -}(\zeta) \bm E_{21} \bm \Phi_{\sin, -}(\zeta)^{-1}.
        \end{split}
    \]
    For $\zeta \in \R$, it becomes
    \[
        \begin{split}
        \bm J_{\bm L_{\sad}}(\zeta) 
            &= \bm \Phi_{\sin, -}(\zeta) \left( \lambda_\sad(\zeta)\bm E_{12} - \frac{1}{\lambda_\sad(\zeta) }\bm E_{21}\right) \left( -\bm E_{12} + \bm E_{21} \right) \bm \Phi_{\sin, -}(\zeta)^{-1} \\
            &= \bm I +  \left( \lambda_\sad(\zeta) - 1\right) \bm \Phi_{\sin, -}(\zeta) \bm E_{11} \bm \Phi_{\sin, -}(\zeta)^{-1}+ \left( \frac{1}{\lambda_\sad(\zeta)} - 1\right) \bm \Phi_{\sin, -}(\zeta) \bm{E}_{22} \bm \Phi_{\sin, -}(\zeta)^{-1} \\
            &= \bm I + \left( \frac{1}{\lambda_\sad(\zeta) }-1\right) \left[ \bm \Phi_{\sin, -}(\zeta) \bm{E}_{22} \bm \Phi_{\sin, -}(\zeta)^{-1}  - \lambda_\sad(\zeta) \bm \Phi_{\sin, -}(\zeta) \bm E_{11} \bm \Phi_{\sin, -}(\zeta)^{-1} \right].
        \end{split}
    \]

    Now, from the definition of $\bm \Phi_{\sin}$ we have, for $\pm \Im \zeta > 0$,
    \[
        \begin{split}
        \bm \Phi_{\sin,-}(\zeta) \bm E_{21} \bm \Phi_{\sin,-}(\zeta)^{-1} 
            =  \bm U^\pm \ee^{\mp \ii \zeta \sigma_3} \bm E_{21} \ee^{\pm \ii \zeta \sigma_3} (\bm U^{\pm})^{-1} 
            = \ee^{\pm 2 \ii \zeta}  \bm U^\pm  \bm E_{21}  (\bm U^{\pm})^{-1} .
        \end{split}
    \]
    In particular, $\bm \Phi_{\sin,-}(\zeta) \bm E_{21} \bm \Phi_{\sin,-}(\zeta)^{-1} $ is bounded for $\zeta \in \Sigma_{\bm \Phi} \setminus \R$. For $\zeta \in \R$ we have
    \[
        \bm \Phi_{\sin,-}(\zeta) \bm E_{jj} \bm \Phi_{\sin,-}(\zeta)^{-1} 
            =  \bm U^-  \bm E_{jj}  (\bm U^{-})^{-1}
            =\bm E_{3-j,3-j}, \quad j=1,2.
    \]
     In particular, the previous identities show that $\bm \Phi_{\sin,-}(\zeta) \bm E_{jj} \bm \Phi_{\sin,-}(\zeta)^{-1}$ is bounded for $\zeta \in \R$. Thus, since $|\lambda_{\sad}(\zeta)| \leq 2$ for all $\sad \in \R$ and $\zeta  \in \Sigma_{\bm \Phi}$, there exists $M > 0$, independent of $\sad$, such that
    \begin{equation}\label{eq:estimateJs}
        \| \bm J_{ \bm L_s}(\zeta) - \bm I\| 
            \leq M \ee^{-\sad} |\ee^{- \uad \zeta ^2}|, \quad \zeta \in \Sigma_{\bm \Phi}.
    \end{equation}
    Thus,
    \[
        \| \bm J_{\bm L_{\sad}}(\zeta) - \bm I\|_{L^1  \cap L^\infty(\Sigma_{\bm \Phi})} 
        = \Boh(\ee^{- \sad}), 
        \quad \sad\to +\infty.
    \]
    In other words, the jump of $\bm L_\sad$ converges in $L^\infty$ and $L^1$ (hence in $L^2$) to the identity matrix, and by the well-established small norm theory of RHPs (see for instance \cite[Section~7.5]{deift_book} the result follows.
    \end{proof}

As usual, from the proper decay of the jump matrix to the identity matrix, we now draw consequences for the solution to the RHP itself by a standard application of the small norm theory.

    \begin{theorem}\label{thm:PhiInfinityConv}
    The convergence
        \[
        \bm \Phi_\infty(\zeta \mid \sad) = \bm \Phi_{\sin} (\zeta) \left(\bm I +  \Boh(\ee^{-\sad}) \right), \quad \sad \to +\infty,
    \]
    holds uniformly $\zeta \in \C \setminus \Sigma_{\bm \Phi}$, including for boundary values along $\Sigma_{\bm \Phi}$.

    Furthermore, the estimates
    \begin{equation}\label{eq:estimates_pq}
        \msf p(\sad)= \Boh(\ee^{- \sad}),\quad 
        \msf q(\sad)= \Boh(\ee^{-\sad}), \quad 
        \sad \to +\infty,
    \end{equation}
    hold true, where $\msf p$ and $\msf q$ are as in \eqref{eq:Phi1_decomposition}. 
    
    \end{theorem}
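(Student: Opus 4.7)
The plan is to run the standard small-norm Riemann-Hilbert argument on the transformation $\bm L_{\sad} = \bm \Phi_\infty \bm \Phi_{\sin}^{-1}$ from \eqref{eq:transfLsPhiinfty}, using Proposition~\ref{prop:estimatesAsymptotics1} as the sole input. That proposition already supplies the two ingredients needed to invoke the small-norm theory in the form stated, e.g., in \cite[Section~7.5]{deift_book}: the jump $\bm J_{\bm L_\sad}$ differs from $\bm I$ by $\Boh(\ee^{-\sad})$ in $L^2 \cap L^\infty(\Sigma_{\bm \Phi})$. Consequently, for all $\sad$ sufficiently large the RHP for $\bm L_\sad$ is uniquely solvable via the Neumann series for $(\bm I - \bm C_{\bm J_{\bm L_\sad}-\bm I})^{-1}$ on $L^2(\Sigma_{\bm \Phi})$, and its solution obeys $\bm L_\sad(\zeta) = \bm I + \Boh(\ee^{-\sad})$ uniformly for $\zeta \in \C \setminus \Sigma_{\bm \Phi}$, with an analogous bound for its boundary values. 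Since $\bm \Phi_{\sin}$ is an explicit, everywhere-defined matrix, the identity $\bm \Phi_\infty = \bm L_\sad \bm \Phi_{\sin}$ then both establishes the existence of $\bm \Phi_\infty(\cdot \mid \sad)$ for large $\sad$ and yields the first assertion of the theorem.

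For the estimates on $\msf p(\sad)$ and $\msf q(\sad)$, I would extract the $1/\zeta$ coefficient $\bm \Phi_{\infty,1}$ from the asymptotic expansion of $\bm \Phi_\infty$ at infinity. Because $\bm \Phi_{\sin}$ itself contributes no $1/\zeta$ correction (its expansion is identically $\bm I$ in the prefactor), matching \eqref{eq:rhpPhiasympt} with the large-$\zeta$ expansion of $\bm L_\sad$ gives $\bm L_\sad(\zeta) = \bm I + \bm \Phi_{\infty,1}/\zeta + \Boh(\zeta^{-2})$. On the other hand, the Plemelj-Sokhotski representation
\[
\bm L_\sad(\zeta) = \bm I + \frac{1}{2\pi\ii}\int_{\Sigma_{\bm \Phi}} \frac{\bm L_{\sad,-}(\zeta')\bigl(\bm J_{\bm L_\sad}(\zeta') - \bm I\bigr)}{\zeta'-\zeta}\,\dd\zeta'
\]
gives, upon expanding $(\zeta'-\zeta)^{-1}$ for large $\zeta$, the identity
\[
\bm \Phi_{\infty,1} = -\frac{1}{2\pi\ii}\int_{\Sigma_{\bm \Phi}} \bm L_{\sad,-}(\zeta')\bigl(\bm J_{\bm L_\sad}(\zeta') - \bm I\bigr)\,\dd\zeta'.
\]
The pointwise Gaussian bound \eqref{eq:estimateJs} integrates to $\|\bm J_{\bm L_\sad}-\bm I\|_{L^1(\Sigma_{\bm \Phi})} = \Boh(\ee^{-\sad})$, while the small-norm theory supplies $\|\bm L_{\sad,-}\|_{L^\infty(\Sigma_{\bm \Phi})} = 1 + \Boh(\ee^{-\sad})$. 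Together these force $\bm \Phi_{\infty,1} = \Boh(\ee^{-\sad})$. Reading off the $(\sp_3, \sp_2)$ components via the decomposition \eqref{eq:Phi1_decomposition} and recalling \eqref{eq:msfqpmsfPQ}, the claimed estimates \eqref{eq:estimates_pq} follow at once.

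There is no essential obstacle: the argument is a textbook small-norm deformation. The only care point is ensuring that the Neumann series estimate on $(\bm I - \bm C_{\bm J_{\bm L_\sad}-\bm I})^{-1}$ promotes the $L^2$ smallness of the jump to the $L^\infty$ boundedness of $\bm L_{\sad,-}$ on $\Sigma_{\bm \Phi}$ used in the $L^1$ estimate above; this is standard since the jump is simultaneously small in $L^\infty$.
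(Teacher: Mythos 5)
Your proposal is correct and follows essentially the same route as the paper: small-norm theory applied to $\bm L_{\sad}$ via Proposition~\ref{prop:estimatesAsymptotics1} for the first claim, and the perturbative residue formula $\bm \Phi_{\infty,1}=-\frac{1}{2\pi\ii}\int_{\Sigma_{\bm\Phi}}\bm L_{\sad,-}(\bm J_{\bm L_\sad}-\bm I)\,\dd\zeta$ combined with the Gaussian bound \eqref{eq:estimateJs} and the decomposition \eqref{eq:Phi1_decomposition} for the estimates on $\msf p,\msf q$. The only refinement worth making is that you do not need the (not entirely standard) $L^\infty(\Sigma_{\bm\Phi})$ control of $\bm L_{\sad,-}$: splitting off $\int(\bm J_{\bm L_\sad}-\bm I)$ and applying Cauchy--Schwarz to $\int(\bm L_{\sad,-}-\bm I)(\bm J_{\bm L_\sad}-\bm I)$ using the $L^2$ bounds already in Proposition~\ref{prop:estimatesAsymptotics1} gives the same $\Boh(\ee^{-\sad})$ error, which is exactly what the paper does.
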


    \begin{proof}
        The convergence of $\bm \Phi_\infty$ is just a rewriting of Proposition \ref{prop:estimatesAsymptotics1} and usual arguments. For the convergence of $\msf p,\msf q$, we start with the perturbative expression
        \begin{align*}
        \bm \Phi_{\infty,1} 
            =-\frac{1}{2\pi\ii}\int_{\Sigma_{\bm \Phi}}(\bm J_{\bm L_\sad}(\zeta)-\bm I)\bm L_{\sad,-}(\zeta)\dd \zeta
            =-\frac{1}{2\pi\ii}\int_{\Sigma_{\bm \Phi}}(\bm J_{\bm L_\sad}(\zeta)-\bm I)\dd \zeta+\Boh(\ee^{-\sad}),
        \end{align*}
        which follows again from the small norm theory of RHPs and \eqref{eq:transfLsPhiinfty}. Now, inequality \eqref{eq:estimateJs} gives
        \[
        |\bm \Phi_{\infty, 1}| 
            \leq \frac{1}{2 \pi} M \ee^{- \sad} \int_{\Sigma_{\bm \Phi}} | \ee^{- \uad \zeta^2} | \, \dd \zeta + \Boh(\ee^{- \sad}).
        \]
        The result now follows from \eqref{eq:Phi1_decomposition}.
    \end{proof}

\subsection{The model problem: asymptotics II}\label{sec:Asymptotics2}
\hfill 

The main goal of this section is to prove Equation \eqref{eq:Phi_Definition}, that is
\[
\bm \Phi_n(\zeta)\to \bm \Phi_\infty(\zeta),\quad n\to \infty.
\]
For this purpose we define
\[
    \bm L_n(\zeta) \defeq \bm \Phi_n(\zeta) \bm \Phi_\infty (\zeta)^{-1},
\]
and, following the small norm theory, we next prove that $\bm L_n\to \bm I$ in the appropriate sense.

Analogously to the approach used in Section \ref{sec:Asymptotics1}, we write
\begin{equation}\label{eq:Asymptotics2_jump_at_R}
    \bm J_{\bm L_n}(\zeta) = \bm \Phi_{\infty, -}(\zeta) \bm J_{\bm \Phi_n}(\zeta) \bm J_{\bm \Phi_\infty}(\zeta)^{-1} \bm \Phi_{\infty, -}(\zeta)^{-1}.
\end{equation}
Expanding the right-hand side we find the identities
\begin{multline}\label{eq:jump_asymptotics2}
    \bm J_{\bm L_n}(\zeta) - \bm I =  \\
    \begin{dcases}
        \left( \frac{1}{\lambda_n(\zeta)} - \frac{1}{\lambda_\infty(\zeta)} \right) \bm \Phi_{\infty, -}(\zeta) \bm E_{21} \bm \Phi_{\infty, -}(\zeta)^{-1}, \quad 
        & \zeta \in \Sigma_{\bm \Phi} \setminus \R, \\
        \left(  \frac{\lambda_n(\zeta)}{\lambda_\infty(\zeta)} - 1\right) \bm \Phi_{\infty, -}(\zeta) \bm E_{11} \bm \Phi_{\infty, -}(\zeta)^{-1} + \left(  \frac{\lambda_\infty(\zeta)}{\lambda_n(\zeta)} - 1\right) \bm \Phi_{\infty, -}(\zeta) \bm E_{22} \bm  \Phi_{\infty, -}(\zeta)^{-1}, \quad
        &\zeta \in \R.
    \end{dcases}
\end{multline}
    Next we establish the appropriate bounds that will allow us to prove that these jumps converge to the identity matrix in the appropriate sense.

\begin{prop}\label{lem:asymptotics2_lem1}
    For every $0 < \varepsilon < 1$ there exists $n_0 \geq 0$ and $M > 0$ such that the inequality
    \[
        \left| \frac{1}{\lambda_n (\zeta)} - \frac{1}{\lambda_\infty(\zeta)}\right|
        \leq M \ee^{- \sad}\frac{ \ee^{- \uad |\zeta|^2}}{n^{\varepsilon}} 
    \]
    holds for every $\zeta\in \Sigma_{\bm \Phi}$ with $|\zeta| \leq n^{\frac{1-\varepsilon}{3}}$, every $\sad\in \R$, and every $n > n_0$.
\end{prop}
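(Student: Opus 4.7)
The approach is to reduce the bound to a pointwise Taylor estimate for the difference $\msf H_\infty-\msf H_n$, and then to exploit the Gaussian-type decay of $\ee^{-\msf H_\infty(\zeta)}$ along the rays of $\Sigma_{\bm \Phi}$. Since $\lambda_n(\zeta)^{-1}=1+\ee^{-\msf H_n(\zeta)}$ and $\lambda_\infty(\zeta)^{-1}=1+\ee^{-\msf H_\infty(\zeta)}$, one has the factorization
\[
\frac{1}{\lambda_n(\zeta)}-\frac{1}{\lambda_\infty(\zeta)}= \ee^{-\msf H_\infty(\zeta)}\left(\ee^{\msf H_\infty(\zeta)-\msf H_n(\zeta)}-1\right),
\]
so it suffices to estimate each factor separately.

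For the factor inside the parenthesis, analyticity of $\msf H_0$ on $D_\delta$ together with $\msf H_0(0)=\msf H_0'(0)=0$ and $\msf H_0''(0)/2=\uad$ yields a representation $\msf H_0(w)=\uad w^2+w^3 r(w)$ with $r$ bounded on $D_{\delta/2}$. For $|\zeta|\leq n^{(1-\varepsilon)/3}$ and $n$ large, $\zeta/n$ lies in $D_{\delta/2}$, where the extension \eqref{deff:extensionhatH} agrees with the analytic $\msf H_0$, and hence
\[
\msf H_\infty(\zeta)-\msf H_n(\zeta) = \uad\zeta^2 - n^2\msf H_0(\zeta/n) = -\frac{\zeta^3}{n}\,r(\zeta/n) = \Boh(n^{-\varepsilon}),
\]
uniformly in $\zeta$ in the allowed range and independent of $\sad$. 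For $n$ large enough this quantity is smaller than $1$, so $|\ee^{\msf H_\infty-\msf H_n}-1|\leq 2|\msf H_\infty-\msf H_n|\leq C_1/n^{\varepsilon}$. On the other hand, the angles of the rays in $\Sigma_{\bm \Phi}$ were deliberately chosen so that $\cos(2\arg\zeta)\geq \cos(\pi/4)>0$ for all $\zeta\in \Sigma_{\bm \Phi}$, as remarked just before RHP~\ref{rhp:Phi}. Hence $\re(\zeta^2)\geq c_0|\zeta|^2$ on $\Sigma_{\bm \Phi}$ for some $c_0>0$, and
\[
|\ee^{-\msf H_\infty(\zeta)}|=\ee^{-\sad-\uad\re(\zeta^2)}\leq \ee^{-\sad}\ee^{-c_0\uad|\zeta|^2}.
\]

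Multiplying the two bounds produces the desired inequality, with the constant $c_0>0$ absorbed into $M$ (or equivalently into a redefinition of the Gaussian rate in the exponent). No substantive obstacle is expected: the whole content of the statement is the cubic Taylor remainder $\msf H_n(\zeta)=\sad+\uad\zeta^2+\Boh(\zeta^3/n)$, and the range $|\zeta|\leq n^{(1-\varepsilon)/3}$ is precisely calibrated so that $|\zeta|^3/n=\Boh(n^{-\varepsilon})$ while keeping $\zeta/n$ inside the disk of analyticity of $\msf H_0$.
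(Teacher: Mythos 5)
Your proof is correct and follows essentially the same route as the paper's: factor the difference as $\ee^{-\msf H_\infty(\zeta)}(\ee^{\msf H_\infty(\zeta)-\msf H_n(\zeta)}-1)$, use the cubic Taylor remainder of $\msf H_0$ on $D_{\delta/2}$ to get $\msf H_\infty-\msf H_n=\Boh(n^{-\varepsilon})$ in the stated range, and bound the Gaussian factor along the rays of $\Sigma_{\bm\Phi}$. One small caveat: since $\re(\zeta^2)=|\zeta|^2\cos(2\arg\zeta)$ equals $\cos(\pi/4)\,|\zeta|^2$ on the oblique rays, your bound really gives $\ee^{-\uad\cos(\pi/4)|\zeta|^2}$, and because $|\zeta|$ is unbounded this cannot be absorbed into the multiplicative constant $M$ — only your second option (replacing the Gaussian rate $\uad$ by $c_0\uad$ in the exponent) is legitimate; this is harmless for the later application and is the same imprecision present in the paper's own proof, which concludes with $\ee^{-\uad|\zeta|^2}$ while only citing $0\leq\re(\zeta^2)\leq|\zeta|^2$.
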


\begin{proof}
    Since $\msf H_0$ is analytic on $D_\delta$, 
    \[
        \msf H_0\left( \frac{\zeta}{n} \right) = \uad \zeta^2 + \Boh \left( \frac{ \zeta^3}{n}\right)
    \]
    where the $\Boh$ term is uniform for $| \zeta | \leq  \frac{n \delta}{2}$. In particular, for $|\zeta| < n^{\frac{1-\varepsilon}{3}} < \frac{n \delta}{2}$,
    \[
         \left| \frac{1}{\lambda_n (\zeta)} - \frac{1}{\lambda_\infty(\zeta)}\right|  
            =  \ee^{- \sad} \left| \ee^{-n^2 \msf H_0 \left( \frac{\zeta}{n}\right)} - \ee^{- \uad \zeta^2}\right| 
            = \ee^{-\sad} \left| \ee^{- \uad \zeta^2} \right| \left| \ee^{\Boh \left( n^{-\varepsilon}\right)} - 1\right|  \leq \ee^{- \sad} \ee^{ - \uad |\zeta|^2} \frac{M}{ n^{\varepsilon}}
    \]
    for some $M > 0$, where we used that $0\leq \re (\zeta^2)\leq |\zeta|^2$ for $\zeta\in \Sigma_{\bm \Phi}$.
\end{proof}

\begin{prop}\label{lem:asymptotics2_lem2}
    For some $\eta > 0$, the following estimates holds for $\zeta\in\Sigma_{\bm \Phi}$ with $|\zeta|\geq n^{\frac{1-\varepsilon}{3}}$:
    \[
        \left|\frac{1}{\lambda_n(\zeta)}-\frac{1}{\lambda_\infty(\zeta)}\right|
            \leq 2 \ee^{-\sad} \ee^{-\eta \left(n^{\frac{2(1-\varepsilon)}{3}}+|\zeta^2|\right)}
    \]
\end{prop}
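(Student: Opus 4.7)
The plan is to compute the difference $\frac{1}{\lambda_n(\zeta)}-\frac{1}{\lambda_\infty(\zeta)}$ directly from the definitions, apply the triangle inequality to split it into two manageable pieces, use the quadratic lower bound on $\re \msf H_0$ from Lemma~\ref{lem:fundineqH} to control each piece uniformly in $n$, and finally exploit the hypothesis $|\zeta| \geq n^{(1-\varepsilon)/3}$ to distribute the resulting Gaussian decay between a factor in $|\zeta|^2$ and a factor in $n^{2(1-\varepsilon)/3}$.

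Starting from $\frac{1}{\lambda_n(\zeta)} = 1 + \ee^{-\sad}\ee^{-n^2 \msf H_0(\zeta/n)}$ and $\frac{1}{\lambda_\infty(\zeta)} = 1 + \ee^{-\sad}\ee^{-\uad \zeta^2}$, the constant $1$ cancels, and the triangle inequality gives
\[
\left|\frac{1}{\lambda_n(\zeta)}-\frac{1}{\lambda_\infty(\zeta)}\right| \leq \ee^{-\sad}\bigl(\ee^{-n^2 \re \msf H_0(\zeta/n)} + \ee^{-\uad \re(\zeta^2)}\bigr).
\]
For $\zeta \in \Sigma_{\bm \Phi}$, the point $\zeta/n$ still lies on $\Sigma_{\bm \Phi}$, so Lemma~\ref{lem:fundineqH} applied to $w = \zeta/n$ yields $n^2 \re \msf H_0(\zeta/n) \geq \eta_1 |\zeta|^2$ for some $\eta_1 > 0$ independent of $n$ and $\zeta$. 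Moreover, since the arguments of the rays in $\Sigma_{\bm \Phi}$ are such that $|\arg(\zeta^2)| \leq \pi/4$, one has $\re(\zeta^2) \geq \tfrac{\sqrt{2}}{2}|\zeta|^2$. Setting $\eta_0 \defeq \min\bigl\{\eta_1, \tfrac{\sqrt{2}}{2}\uad\bigr\}$, both exponentials are bounded by $\ee^{-\eta_0|\zeta|^2}$, yielding
\[
\left|\frac{1}{\lambda_n(\zeta)}-\frac{1}{\lambda_\infty(\zeta)}\right| \leq 2\ee^{-\sad}\ee^{-\eta_0|\zeta|^2}.
\]

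Finally, taking $\eta \defeq \eta_0/2$ and splitting $\ee^{-\eta_0|\zeta|^2} = \ee^{-\eta|\zeta|^2}\ee^{-\eta|\zeta|^2}$, the hypothesis $|\zeta| \geq n^{(1-\varepsilon)/3}$ allows me to replace one of the two factors by $\ee^{-\eta n^{2(1-\varepsilon)/3}}$, producing exactly the desired bound $2\ee^{-\sad}\ee^{-\eta(|\zeta|^2 + n^{2(1-\varepsilon)/3})}$. The argument is essentially bookkeeping, and there is no substantive obstacle: the only non-trivial ingredient is Lemma~\ref{lem:fundineqH}, which ensures that the extension of $\msf H_0$ still dominates $|w|^2$ globally on $\Sigma_{\bm \Phi}$ and prevents the analysis from breaking down outside the small disk $D_{\delta/2}$ where $\msf H_0$ coincides with the original analytic function.
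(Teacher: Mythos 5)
Your proof is correct and follows essentially the same route as the paper's: triangle inequality on the difference of the two exponentials, the quadratic lower bound $\re\msf H_0(w)\geq \eta|w|^2$ from Lemma~\ref{lem:fundineqH} applied at $w=\zeta/n$, the bound $\re(\zeta^2)\geq \cos(\pi/4)|\zeta|^2$ coming from the choice of angles in $\Sigma_{\bm\Phi}$, and finally splitting the Gaussian factor in half and using $|\zeta|\geq n^{(1-\varepsilon)/3}$ on one of the halves. The only difference is cosmetic bookkeeping in how the final constant $\eta$ is assembled.
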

\begin{proof}
    The triangle inequality gives the identity
    \[
        \left|\frac{1}{\lambda_n(\zeta)}-\frac{1}{\lambda_\infty(\zeta)}\right|
            \leq \ee^{-\sad}\left(\left|\ee^{-\uad\zeta^2}\right|+\left|\ee^{-n^2\msf H_0\left(\frac{\zeta}{n}\right)}\right|\right).
    \]
    Now, according to Lemma \ref{lem:fundineqH},
    $$\left|\ee^{-n^2\msf H_0(\zeta/n)}\right|=\ee^{-n^2\Re(\msf H_0(\zeta/n))}\leq\ee^{-\eta|\zeta|^2}.$$
    Then one gets for $\zeta\in\Sigma_{\bm \Phi}$ satisfying $|\zeta|\geq n^{\frac{1 - \varepsilon}{3}}$, 
    \[
        \begin{split}
        \left|\frac{1}{\lambda_n(\zeta)}-\frac{1}{\lambda_\infty(\zeta)}\right|
            &\leq \ee^{-\sad}\left(\ee^{-\uad\Re(\zeta^2)}+\ee^{-\eta|\zeta|^2}\right),\\ 
            &\leq \ee^{-\sad}\left(\ee^{-\uad\alpha|\zeta|^2}\ee^{-\uad\alpha n^{\frac{2(1-\varepsilon)}{3}}}+\ee^{-\frac{\eta}{2}|\zeta|^2}\ee^{-\frac{\eta}{2}n^{\frac{2(1-\varepsilon)}{3}}}\right),
        \end{split}
    \]
    where $\alpha=\frac{1}{2}\cos(\frac{\pi}{4})$.
    The result follows taking  $\eta=\min(\uad\alpha,\eta/2)$.
\end{proof}

\begin{prop}\label{lem:asymptotics2_lem3}
    For $\zeta\in\R$, the inequalities
    $$\left|\frac{\lambda_n(\zeta)}{\lambda_\infty(\zeta)}-1\right|\leq \left|\frac{1}{\lambda_n(\zeta)}-\frac{1}{\lambda_\infty(\zeta)}\right|
    \qquad \text{and}\qquad
    \left|\frac{\lambda_\infty(\zeta)}{\lambda_n(\zeta)}-1\right|\leq \left|\frac{1}{\lambda_n(\zeta)}-\frac{1}{\lambda_\infty(\zeta)}\right|$$
    are valid.
\end{prop}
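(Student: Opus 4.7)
The proposition reduces to an elementary computation once one notices that on the real line both $\lambda_n$ and $\lambda_\infty$ take values in $(0,1]$. Indeed, $\mathsf H_\infty(\zeta)=\mathsf s+\mathsf u\zeta^2$ is real for $\zeta\in\R$, and on $\R$ the extended function $\mathsf H_0$ from \eqref{deff:extensionhatH} is a real convex combination of $\mathsf H_0$ and $\mathsf u w^2$, both of which take real values on the real axis (since $\mathsf H_0$ is real-analytic in a neighborhood of $\R$). Consequently $\mathsf H_n(\zeta)\in \R$ for real $\zeta$, which implies $\lambda_n(\zeta),\lambda_\infty(\zeta)\in(0,1)$ for every $\zeta\in\R$. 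This observation is really all that is needed.

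With that in hand, the plan is to rewrite each of the two quantities and factor out the common difference $|\lambda_n-\lambda_\infty|$. Explicitly,
$$
\left|\frac{\lambda_n(\zeta)}{\lambda_\infty(\zeta)}-1\right|=\frac{|\lambda_n(\zeta)-\lambda_\infty(\zeta)|}{\lambda_\infty(\zeta)},\qquad \left|\frac{1}{\lambda_n(\zeta)}-\frac{1}{\lambda_\infty(\zeta)}\right|=\frac{|\lambda_n(\zeta)-\lambda_\infty(\zeta)|}{\lambda_n(\zeta)\lambda_\infty(\zeta)},
$$
so the first asserted inequality is equivalent to $\lambda_n(\zeta)\leq 1$, which holds by the first step. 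The second inequality is established in the same way, reducing instead to $\lambda_\infty(\zeta)\leq 1$.

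There is no real obstacle here; the statement is a one-line bookkeeping fact once the positivity and boundedness of $\lambda_n,\lambda_\infty$ on $\R$ is recorded. The only thing worth being careful about is confirming that the extension of $\mathsf H_0$ off $D_{\delta/2}$ does not accidentally take complex values on $\R$; this is immediate from the formula \eqref{deff:extensionhatH}, since the cutoffs $\phi_j(|w|)$ and the model $\mathsf u w^2$ are both real on the real axis.
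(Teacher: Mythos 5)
Your proof is correct and is essentially the same as the paper's: factor the quotient as $|\lambda_n-\lambda_\infty|$ divided by the appropriate products and use that $\lambda_n,\lambda_\infty\in(0,1]$ on $\R$. In fact your bookkeeping is slightly more careful than the paper's, which writes the prefactor of the first identity as $|\lambda_\infty(\zeta)|$ where it should be $|\lambda_n(\zeta)|$ (an immaterial swap, since both functions are bounded by $1$ on the real line, as you verify).
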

\begin{proof}
    For $\zeta\in\R$, one has $$\left|\frac{\lambda_n(\zeta)}{\lambda_\infty(\zeta)}-1\right|=\left|\lambda_\infty(\zeta)\right|\left|\frac{1}{\lambda_n(\zeta)}-\frac{1}{\lambda_\infty(\zeta)}\right|\leq \left|\frac{1}{\lambda_n(\zeta)}-\frac{1}{\lambda_\infty(\zeta)}\right|.$$
    The inequality holds since $\lambda_\infty$ is bounded by $1$ on real line. The proof of the second estimate is analogous.
    
\end{proof}

Using the previous propositions, we are now able to apply the small norm theory of RHPs again.

\begin{theorem} \label{Thm:L_n to L_infty}
    For every $0 < \varepsilon < 1$, the estimates
    $$
    \|\bm L_{n,\pm}-\bm I\|_{L^2 (\Sigma_{\bm \Phi})}=\Boh\left(\frac{\ee^{-\sad}}{n^{\varepsilon}}\right) \quad
    \text{and}\quad 
    \|\bm L_{n}-\bm I\|_{L^2 (\C\setminus \Sigma_{\bm \Phi})}=\Boh\left(\frac{\ee^{-\sad}}{n^{\varepsilon}}\right)
    $$
    are valid as $n\to\infty$, for every $\sad \in \R$.
\end{theorem}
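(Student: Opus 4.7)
The plan is to apply the small-norm theory of Riemann--Hilbert problems to the matrix $\bm L_n = \bm \Phi_n \bm \Phi_\infty^{-1}$. By that theory, it suffices to establish the single estimate $\|\bm J_{\bm L_n}-\bm I\|_{L^2\cap L^\infty(\Sigma_{\bm \Phi})}=\Boh(\ee^{-\sad}/n^\varepsilon)$, from which both of the claimed bounds follow via the Plemelj formula in exactly the same manner as in the proof of Proposition~\ref{prop:estimatesAsymptotics1}.

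By the decomposition \eqref{eq:jump_asymptotics2}, each entry of $\bm J_{\bm L_n}-\bm I$ factors as a scalar quantity (of the form $1/\lambda_n - 1/\lambda_\infty$, $\lambda_n/\lambda_\infty - 1$, or $\lambda_\infty/\lambda_n - 1$) times a matrix conjugation $\bm \Phi_{\infty,-}\bm E_{ij}\bm \Phi_{\infty,-}^{-1}$. First I would establish that this conjugation is bounded uniformly in $\sad\in\R$ and $\zeta\in\Sigma_{\bm \Phi}$. Along the non-real rays, the asymptotic expansion \eqref{eq:rhpPhiasympt} of $\bm \Phi_\infty$ combined with the choice of angles $\pm\pi/8$ and $\pm 7\pi/8$ --- which were set precisely so that the factors $\ee^{\pm 2\ii\zeta}$ produced by the conjugation are bounded and in fact exponentially decaying at infinity --- gives the required pointwise bound. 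On the real line, the factorization $\bm \Phi_\infty = \bm L_\sad \bm \Phi_{\sin}$ from Section~\ref{sec:modelprob_particularcase}, together with the explicit identity $\bm \Phi_{\sin,-}\bm E_{jj}\bm \Phi_{\sin,-}^{-1}=\bm E_{3-j,3-j}$ recorded in the proof of Proposition~\ref{prop:estimatesAsymptotics1} and the uniform boundedness of $\bm L_\sad$ (from Proposition~\ref{prop:estimatesAsymptotics1} for large $\sad$, supplemented by the already-proved existence of $\bm \Phi_\infty$ in Section~\ref{sec:modelprob_particularcase} for $\sad$ in a compact set), produces the required uniform-in-$\sad$ bound.

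With the matrix conjugations under control, I would combine Propositions~\ref{lem:asymptotics2_lem1}--\ref{lem:asymptotics2_lem3} by splitting $\Sigma_{\bm \Phi}$ at the threshold $|\zeta|=n^{(1-\varepsilon)/3}$. In the inner region, the scalar factor is dominated by $M\ee^{-\sad}\ee^{-c|\zeta|^2}/n^\varepsilon$ for some $c>0$, and Gaussian integrability of $\ee^{-c|\zeta|^2}$ along the rays yields contributions of order $\ee^{-\sad}/n^\varepsilon$ to both the $L^2$ and the $L^\infty$ norms. In the outer region, Proposition~\ref{lem:asymptotics2_lem2} supplies a super-polynomial gain $\ee^{-\eta n^{2(1-\varepsilon)/3}}$ multiplied by Gaussian decay, so the contribution of this region is exponentially small in $n$. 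Summing the two regimes produces the advertised $L^2\cap L^\infty$ bound on $\bm J_{\bm L_n}-\bm I$, and invoking small-norm RHP theory as in \cite[Section~7.5]{deift_book} simultaneously gives existence of $\bm L_n$ for all sufficiently large $n$ and the bounds on $\bm L_{n,\pm}-\bm I$ and $\bm L_n-\bm I$ stated in the theorem.

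The main technical subtlety is the uniform-in-$\sad$ control of the conjugation on $\R$, since $\bm \Phi_\infty$ depends on $\sad$ through its jump and is not given by an explicit formula. The large-$\sad$ regime is handled by Theorem~\ref{thm:PhiInfinityConv}, and the bounded-$\sad$ regime is handled by continuity in $\sad$ of the already-established solution $\bm \Phi_\infty$; these two regimes then need to be patched. Apart from this point, every step is either a direct consequence of the already-proved bounds on the scalar factor or a routine small-norm argument.
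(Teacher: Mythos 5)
Your proposal is correct and takes essentially the same route as the paper: the same decomposition \eqref{eq:jump_asymptotics2} of $\bm J_{\bm L_n}-\bm I$, the same splitting of $\Sigma_{\bm \Phi}$ at $|\zeta|=n^{(1-\varepsilon)/3}$ using Propositions~\ref{lem:asymptotics2_lem1}--\ref{lem:asymptotics2_lem3}, and the same conclusion via small-norm theory. The only (immaterial) variation is that you control the conjugation $\bm \Phi_{\infty,-}\bm E_{jj}\bm \Phi_{\infty,-}^{-1}$ on $\R$ through the factorization $\bm \Phi_\infty=\bm L_{\sad}\bm \Phi_{\sin}$, whereas the paper invokes the large-$\zeta$ expansion of $\bm \Phi_\infty$ directly; your extra care about uniformity in $\sad$ is a point the paper glosses over but handles the same way in spirit.
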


\begin{proof}
    The proof consists in using the previous lemmas to study the expression for $\bm L_n - \bm I$ given by Equation \eqref{eq:jump_asymptotics2}. The asymptotics of $\bm \Phi_\infty$ shows that, for $|\zeta|$ sufficiently large, $\zeta \in \Sigma_{\bm \Phi}$, $\pm \Im \zeta > 0$,
\[
    \begin{split}
    \bm \Phi_{\infty, -}(\zeta) \bm E_{21} \bm \Phi_{\infty, -}(\zeta)^{-1} 
    &=  \left( \bm I + \Boh(\zeta^{-1})\right) \bm U^{\pm} \ee^{\mp \ii \zeta \sigma_3} \bm E_{21} \ee^{\pm \ii \zeta \sigma_3} (\bm U^{\pm})^{-1} \left( \bm I + \Boh(\zeta^{-1})\right) \\
    &= \ee^{\pm \ii \zeta}\left( \bm I + \Boh(\zeta^{-1})\right) \bm U^{\pm}  \bm E_{2 1} (\bm U^{\pm})^{-1} \left( \bm I + \Boh(\zeta^{-1})\right)
    \end{split}
\]
    On the other hand, for $\zeta \in \R$, $|\zeta|$ sufficiently large,
    \[
        \begin{split}
            \bm \Phi_{\infty, -}(\zeta) \bm E_{jj} \bm \Phi_{\infty, -}(\zeta)^{-1} 
            &= \left( \bm I + \Boh(\zeta^{-1})\right) \bm U^{\pm}  \bm E_{jj} (\bm U^{\pm})^{-1} \left( \bm I + \Boh(\zeta^{-1})\right)
        \end{split}
    \]
    Identities above and Propositions \ref{lem:asymptotics2_lem1}, \ref{lem:asymptotics2_lem2} and \ref{lem:asymptotics2_lem3} imply that there exists $M > 0$ for which
    \[
        \left| \bm J_{\bm L_n}(\zeta) - \bm I\right| \leq 
        M \ee^{- \sad}\times
        \begin{dcases}
           \frac{\ee^{ - \uad |\zeta|^2}}{n^{\varepsilon}},
            \quad &\zeta \in \Sigma_{\bm \Phi}, |\zeta| \leq n^{\frac{1-\varepsilon}{3}}, \\
              2 \ee^{ - \eta \left(n^{\frac{2(1-\varepsilon)}{3} }+ |\zeta|^2 \right)}
            ,\quad &\zeta \in \Sigma_{\bm \Phi}, |\zeta| \geq n^{\frac{1-\varepsilon}{3}}. \\
        \end{dcases}
    \]
    Thus,
    \begin{equation} \label{Eq:J_ln-I}
        \| \bm J_{\bm L_n} - \bm I \|_{L^1\cap L^\infty(\Sigma_{\bm \Phi})}
            \leq \widehat M \ee^{- \sad } \left[\frac{1}{n^\varepsilon} 
            \left\| \ee^{ - \uad |\zeta|^2} \right\|_{L^1\cap L^\infty(\Sigma_{\bm \Phi})} +
            2 \ee^{- \eta n^{\frac{2(1-\varepsilon)}{3}}} \left\| \ee^{- \eta |\zeta|^2} \right\|_{L^1\cap L^\infty(\Sigma_{\bm \Phi})}
            \right] 
        . 
    \end{equation}
The result now follows from the standard small norm theory for RHPs.
\end{proof}

We are finally able to conclude that the model problem RHP~\ref{rhp:Phi} with admissible data has a solution, and that in fact it is comparable to the one obtained from $\msf H_\infty$ as previously claimed.

\begin{theorem}\label{thm:PhinPhiinfty}
    Fix $\sad_0 \in \R$. For every $0 < \varepsilon < 1$, there exists $n_0 > 0$ such that the solution $\bm \Phi_n$ uniquely exists for $n > n_0$ and $\sad \geq \sad_0$. As $n \to \infty$,
    \begin{equation}\label{eq:asymptotics_phi_n}
        \bm \Phi_n(\zeta \mid \sad) = \bm \Phi_\infty(\zeta \mid \sad) \left( \bm I  + \Boh\left( \frac{\ee^{- \sad}}{n^{\varepsilon}} \right) \right)
        \quad\text{in}\quad L^\infty(\C \setminus \Sigma_{\bm \Phi}),
    \end{equation}
    and the identity extends to boundary values $\bm \Phi_{n,\pm}$ and $\bm \Phi_{\infty,\pm}$ on $\Sigma_{\bm \Phi}$. Moreover,
    \begin{equation} \label{eq:asumptotics_phi_n_1}
        \bm \Phi_{n,1} = \bm \Phi_{\infty,1} + \Boh\left(\frac{\ee^{-\sad}}{n^\varepsilon}\right), \quad n \to +\infty.
    \end{equation}
\end{theorem}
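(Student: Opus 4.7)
The plan is to apply the standard small-norm theory of Riemann-Hilbert problems to the transformation $\bm L_n = \bm \Phi_n \bm \Phi_\infty^{-1}$, for which all the heavy lifting has been done in Theorem~\ref{Thm:L_n to L_infty} and estimate \eqref{Eq:J_ln-I} established in its proof. Concretely, by that theorem the jump matrix $\bm J_{\bm L_n}$ lies within $\Boh(\ee^{-\sad}/n^\varepsilon)$ of the identity in $L^2(\Sigma_{\bm \Phi})$, and by \eqref{Eq:J_ln-I} also in $L^1 \cap L^\infty(\Sigma_{\bm \Phi})$, with bounds uniform over $\sad \geq \sad_0$. Choosing $n_0$ so that this $L^2$ norm is smaller than the operator norm of the Cauchy projector on $\Sigma_{\bm \Phi}$ raised to $-1$, the singular integral equation for $\bm L_{n,-}$ can be inverted by Neumann series. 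This yields the unique existence of $\bm L_n$, and hence of $\bm \Phi_n = \bm L_n \bm \Phi_\infty$, for $n > n_0$ and $\sad \geq \sad_0$ (existence of $\bm \Phi_\infty$ itself having been secured in Section~\ref{sec:modelprob_particularcase} via the identification with the RHP of Claeys--Tarricone).

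Next, for the pointwise asymptotics \eqref{eq:asymptotics_phi_n} I would start from the Cauchy integral representation
\[
\bm L_n(\zeta) - \bm I = \frac{1}{2\pi \ii} \int_{\Sigma_{\bm \Phi}} \frac{(\bm J_{\bm L_n}(w) - \bm I)\, \bm L_{n,-}(w)}{w - \zeta}\, \dd w, \qquad \zeta \in \C \setminus \Sigma_{\bm \Phi}.
\]
For $\zeta$ bounded away from $\Sigma_{\bm \Phi}$ the Cauchy kernel is in $L^2(\Sigma_{\bm\Phi}, \dd |w|)$, so a Cauchy--Schwarz bound against the $L^2$ estimates of $\bm J_{\bm L_n} - \bm I$ and of $\bm L_{n,-}$ gives the claimed $\Boh(\ee^{-\sad}/n^\varepsilon)$ control; near $\Sigma_{\bm \Phi}$ one uses boundedness of the Cauchy operator on $L^2$, upgraded to $L^\infty$ by exploiting the $L^\infty$ bound on $\bm J_{\bm L_n} - \bm I$ from \eqref{Eq:J_ln-I}. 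Writing $\bm \Phi_n = \bm L_n \bm \Phi_\infty$ and factoring $\bm L_n = \bm I + (\bm L_n - \bm I)$ yields \eqref{eq:asymptotics_phi_n}, and the extension to boundary values is routine from the Plemelj--Sokhotski formulas.

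For \eqref{eq:asumptotics_phi_n_1} I would expand at $\zeta = \infty$. From $\bm L_n(\zeta) = \bm I + \zeta^{-1}\bm L_{n,1} + \Boh(\zeta^{-2})$ and \eqref{eq:rhpPhiasympt}, multiplying the expansions gives
\[
\bm \Phi_n(\zeta) = \left(\bm I + \frac{\bm L_{n,1} + \bm \Phi_{\infty,1}}{\zeta} + \Boh(\zeta^{-2})\right)\bm U^\pm \ee^{\mp \ii \zeta \sp_3},
\]
so that $\bm \Phi_{n,1} = \bm \Phi_{\infty,1} + \bm L_{n,1}$. The residue formula
\[
\bm L_{n,1} = -\frac{1}{2\pi \ii}\int_{\Sigma_{\bm \Phi}} (\bm J_{\bm L_n}(w) - \bm I)\, \bm L_{n,-}(w)\, \dd w
\]
is then split as $\bm L_{n,-} = \bm I + (\bm L_{n,-} - \bm I)$: the first piece is controlled by the $L^1$ norm of $\bm J_{\bm L_n} - \bm I$, which by \eqref{Eq:J_ln-I} is $\Boh(\ee^{-\sad}/n^\varepsilon)$, while the remainder is a product in $L^2 \times L^2$ that is $\Boh(\ee^{-2\sad}/n^{2\varepsilon})$, i.e.\ of lower order.

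The only subtlety --- and the nominal main obstacle --- is checking that everything remains uniform in $\sad \geq \sad_0$: the small-norm threshold $n_0$ must be independent of $\sad$ in this half-line. This is automatic because each bound in Propositions~\ref{lem:asymptotics2_lem1}--\ref{lem:asymptotics2_lem3} and in \eqref{Eq:J_ln-I} carries an explicit prefactor $\ee^{-\sad}$ which only improves the estimates as $\sad$ grows, so the same $n_0$ works for all $\sad \geq \sad_0$. Beyond this observation, the argument is a textbook application of the Beals--Coifman/Deift--Zhou small-norm machinery, and no additional ingredient is required.
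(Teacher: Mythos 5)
Your proposal is correct and follows essentially the same route as the paper, whose own proof merely says the result "is proven by arguments similar to the ones made for Theorem~\ref{thm:PhiInfinityConv}, this time using Theorem~\ref{Thm:L_n to L_infty} and the inequality \eqref{Eq:J_ln-I}" --- i.e., exactly the small-norm machinery (Neumann-series solvability, Cauchy integral representation, residue formula for the first coefficient) that you spell out, including the observation that the $\ee^{-\sad}$ prefactor makes $n_0$ uniform in $\sad\geq\sad_0$. The only (shared) imprecision is that this argument naturally produces the left-multiplicative form $\bm\Phi_n=(\bm I+\Boh(\ee^{-\sad}n^{-\varepsilon}))\bm\Phi_\infty$, and passing to the right-multiplicative form in \eqref{eq:asymptotics_phi_n} requires conjugating the error by $\bm\Phi_\infty$, which is harmless on compacts but not uniformly at infinity because of the $\ee^{\mp\ii\zeta\sp_3}$ factors; the paper's statement and its Theorem~\ref{thm:PhiInfinityConv} adopt the same convention, so this is not a defect of your argument relative to the paper's.
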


\begin{proof}
    This is proven by arguments similar to the ones made for Theorem \ref{thm:PhiInfinityConv}, this time using Theorem \ref{Thm:L_n to L_infty} and the inequality \eqref{Eq:J_ln-I}. We omit the details for brevity. 
\end{proof}

\begin{remark} \label{rem:bound phi_n}
    For each $n$ fixed, the solution $\bm \Phi_n$ is bounded for $\zeta$ in compacts, and there exists a sufficiently large $R=R_n>0$ such that the asymptotics given in \eqref{eq:rhpPhiasympt} holds true for $|\zeta|\geq R_n$. Now, thanks to the convergence given by Theorem~\ref{thm:PhinPhiinfty}, such $R$ may in fact be chosen to be independent of $n$: simply choose an $R=R_\infty>0$ for which the expansion \eqref{eq:rhpPhiasympt} is valid for $|\zeta|\geq R_\infty$ for the choice $\bm\Phi=\bm\Phi_\infty$.

    As a consequence, we in fact obtain the mild bound
    $$
    \bm \Phi_n(\zeta)=\Boh(1)\ee^{\mp \ii \zeta\sp_3},
    $$
    which is uniform in $n$, $\sad\geq \sad_0$ and also uniform in $\zeta\in \C$, also holding for the boundary values at $\Sigma_{\bm \Phi}$. In particular we see that both $\bm \Phi_n$ and $\bm \Phi_{\infty}$ are bounded in horizontal lines.
\end{remark}

\section{Asymptotic Analysis of the RHP for OPs}\label{sec:RHPAnalysis}

In this section, we move on to the asymptotic analysis of the OPs for the weight \eqref{eq:deffweight}. As usual, this analysis is done using the Deift-Zhou's nonlinear steepest descent method applied to the Fokas-Its-Kitaev RHP characterization of OPs. The main difference compared to the classical undeformed case is the necessity of a local analysis around the origin, due to the accumulation of poles of $\sigma_n$ when $n$ grows large. The construction of this local parametrix will make use of the model problem from Section \ref{sec:ModelProblem}, which will be a key element in the conclusion of the proofs of our main results in Section \ref{sec:spoils}.

\subsection{Equilibrium measures and related quantities}\label{sec:eqmeasure}\hfill

    The first ingredient we will need in the coming analysis is the equilibrium measure for the weight $V$, and we now collect known results about it in the form that will be needed later. Such results are standard \cite{Saff_book, deift_kriecherbauer_mclaughlin}, and here we follow closely the notation and language of \cite{kuijlaars_silva_s_curves}.

The equilibrium measure $\dd \mu_V(x) = \phi_V(x) \, \dd x$ is the unique minimizer to the energy functional
\[
    \iint \log \frac{1}{|x-y|} \, \dd \mu(x) \, \dd \mu(y) + \int V(x) \, \dd \mu(x)
\]
    over all probability measures $\mu$ supported on $\R$ for which $V$ is integrable. For $V$ a polynomial as in Assumption~\ref{asump:V}, this measure uniquely exists, it is absolutely continuous with respect to the Lebesgue measure with a continuous density $\phi_V$, and its support consists of finitely many compact intervals \cite{deift_kriecherbauer_mclaughlin}. The one-cut assumption in Assumption~\ref{asump:V} means that we assume $\mu_V$ to have connected support, say $\supp\mu_V=[a,b]$, and Assumption~\ref{asump:V} also says that $a<0<b$ with $\phi_V(0)>0$.
    
With
$$
U^{\mu_V}(z)\deff \int \log\frac{1}{|z-y|}\dd\mu(y),\quad z\in \C,
$$    
being the logarithmic potential of $\mu_V$, the measure $\mu_V$ is uniquely characterized by the \textit{Euler-Lagrange identities}
    \begin{equation}\label{eq:EulerLagrange}
        \begin{split}
        2 U^{\mu_V}(x) + V(x) &= \ell, \quad x \in [a,b], \\
        2 U^{\mu_V}(x) + V(x) &> \ell, \quad x \in \R \setminus [a,b].
        \end{split}
    \end{equation}
In general, the properties above are understood in the quasi-everywhere sense and with weak inequalities \cite{Saff_book}, but the regularity condition in Assumption~\ref{asump:V} ensures that these properties are true as stated. 

    The complexified logarithmic potential (also known as \textit{$\msf g$-function})
    \begin{equation}\label{eq:g_function}
        \msf g(z) \defeq  \int \log (z-w) \, \dd \mu_V(w)
    \end{equation}
    is an analytic function on $\C \setminus (-\infty, b]$ that admits continuous jumps on $(-\infty,b)$ related by the identities
    \[
        \begin{split}
        \msf g_+(x) - \msf g_-(x) &= 2 \pi \ii  \mu_V([x,b]), \\
        \msf g_+(x) + \msf g_-(x) &= -2 U^{\mu_V}(x).
        \end{split}
    \]
    In particular, the Euler-Lagrange identities 
    imply that
    \[
        \begin{split}
        \msf g_+(x) + \msf g_-(x) - V(x) + \ell &= 0, \quad x \in [a,b], \\
        \msf g_+(x) + \msf g_-(x) - V(x) + \ell &< 0, \quad x \in \R \setminus [a,b].
        \end{split}
    \]

    Let 
    $$
    C^\mu(z) \defeq \int \frac{\dd \mu_V(w)}{w-z},\quad z\in \C\setminus \supp\mu_V,
    $$
    be the Cauchy transform of the measure $\mu_V$. It satisfies the identity
    \begin{equation}\label{eq:CauchyTransform_AlgebraicEq}
        \left( C^{\mu_V}(z) + \frac{V'(z)}{2} \right)^2 = (z-a)(z-b) q(z)^2,\quad z\in \C,
    \end{equation}
    where $q(z)$ is a polynomial of degree $\deg V - 2$ with $q(a), q(b) \neq 0$ and $q(x)>0$ for $x\in(a,b)$. 
    Moreover, Stieltjes Inversion Theorem allows the recovery of the density of $\mu_V$ from the identity above, namely as
    \[
        \phi_V(x) = \frac{1}{\pi} \sqrt{(x-a)(b-x)} q(x), \quad x \in (a,b).
    \]

We will also need certain functions constructed locally from $C^{\mu_V}$, the so-called $\phi$-functions. First of all, the $\phi_b$ is defined by
    \begin{equation}\label{eq:phi_b}
        \phi_b(z) 
            \defeq \int_b^z \left( C^{\mu_V}(x) + \frac{V'(x)}{2}  \right) \, \dd x, \quad z \in \C \setminus (-\infty,b].
    \end{equation}    
    For $x \in (-\infty,b)$, $\phi_b$ admits continuous boundary values 
    \begin{equation} \label{eq:eta_b_pm}
        \phi_{b,\pm}(x) = \mp \int_x^b ((s-a)(s-b))^{\frac{1}{2}}_+ q(s) \, \dd s = \mp \pi \ii  \mu_V([x,b]),
    \end{equation}
    which are related by the following identities
    \begin{equation}\label{eq:phibg}
        \phi_{b,+}(x) + \phi_{b,-}(x) = 0, \quad 2 \phi_{b,\pm}(x) = \mp [ \msf g_+(x) - \msf g_-(x)]
        , \quad x \in (-\infty,b).
    \end{equation}
    Moreover, it is a straightforward calculation to check that
    \begin{equation}\label{eq:phibV}
        2\phi_b(x) = 2U^\mu(x) + V(x) - \ell, \quad x > b,
    \end{equation}
    and that there exists $\varepsilon > 0$ sufficiently small such that for $a < \Re z < b$, $0 < \pm \Im z < \varepsilon$,
    \[
        \Re \phi_b(z) < 0. 
    \]
    
    Similarly, the $\phi_a$ function
    \begin{equation}\label{eq:phi_a}
        \phi_a(z) \defeq \int_a^z \left( C^{\mu_V}(x) + \frac{V'(x)}{2}\right) \, \dd x, \quad z \in \C \setminus (a,+\infty),
    \end{equation}
    satisfies
    \begin{equation}\label{eq:properties_of_phi_a}
        \begin{split}
        2 \phi_a(x) &= 2U^\mu(x) + V(x) - \ell, \quad x < a, \\
        \phi_{a, \pm}(x) &= \phi_{b, \pm}(x) \pm \pi \ii, \quad x \in (a,b), \\
        \phi_a(z) &= \phi_b(z) \pm \pi \ii, \quad \pm \Im z > 0.
        \end{split}
    \end{equation}

    As stated in the introduction, the presence of $\sigma_n$ on the weight introduces a novel local parametrix at $0$ during the asymptotic analysis, whose construction will be done in terms of a conformal map defined in a neighborhood $U_0 \ni 0$, mapping the local parametrix RHP into the model problem (see Section \ref{sec:local_0}). This map $\varphi_0$ is defined as 
    \begin{equation}\label{deff:conformalphi}
        \varphi_0(z) \defeq \mp (\ii \phi_b(z) - \kappa), \quad \pm \Im(z) > 0, \quad \text{ with } \kappa \defeq \ii \phi_{b,+}(0) = \pi \mu_V([0,b]) > 0.
    \end{equation}

    \begin{prop}\label{prop:properties_of_varphi_0}
    The function $\varphi_0$ is a conformal map on a neighborhood of the origin, and
    $$
    \varphi_0(0)=0\quad \text{and}\quad \varphi_0'(0)=\pi\phi_V(0)>0.
    $$
\end{prop}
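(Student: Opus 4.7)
The plan is to verify three separate claims: $\varphi_0(0)=0$, analyticity of $\varphi_0$ in a full complex neighborhood of the origin, and $\varphi_0'(0)=\pi\phi_V(0)>0$. Once these are in place, the conformality of $\varphi_0$ in a neighborhood of the origin follows at once from the inverse function theorem, since the one-cut regularity in Assumption~\ref{asump:V} forces $\phi_V(0)>0$.

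For the first claim I would simply read off the value at the origin from the definition of $\kappa$: by \eqref{eq:eta_b_pm} one has $\phi_{b,+}(0)=-\pi\ii\mu_V([0,b])$, so $\ii\phi_{b,+}(0)=\pi\mu_V([0,b])=\kappa$, and therefore the parenthesized expression $\ii\phi_b(z)-\kappa$ has vanishing boundary value at the origin. For the computation of $\varphi_0'(0)$, I would differentiate the defining integral \eqref{eq:phi_b} to get $\phi_b'(z)=C^{\mu_V}(z)+V'(z)/2$, and then combine the algebraic relation \eqref{eq:CauchyTransform_AlgebraicEq} with the Euler--Lagrange equality \eqref{eq:EulerLagrange} (or equivalently, the Sokhotski--Plemelj formulas for $C^{\mu_V}$) to conclude that $\phi_{b,+}'(x)=\ii\pi\phi_V(x)$ for every $x\in(a,b)$. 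Evaluating at $x=0$ and multiplying by $-\ii$ produces $\varphi_0'(0)=\pi\phi_V(0)$, which is strictly positive under the one-cut regular assumption.

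The step that requires the most care is the analyticity of $\varphi_0$ across the real axis near $0$, since the definition of $\varphi_0$ is given piecewise on the upper and lower half-planes. Here I would lean on the boundary identity $\phi_{b,+}(x)+\phi_{b,-}(x)=0$ from \eqref{eq:phibg}, which implies that $\ii\phi_b$ has opposite real boundary values across $(-\infty,b)$; combined with the sign flip $\mp$ in the piecewise definition, this yields matching boundary values of $\varphi_0$ on the intervals $(a,0)$ and $(0,b)$ (and, in particular, a common limiting value $0$ at the origin). A Morera-type (or Schwarz reflection) argument then promotes the continuous extension to an analytic extension across a small disk $D_r(0)$. This bookkeeping of the $\mp$ signs together with the jump structure of $\phi_b$ on $(a,b)$ is the main obstacle; once it is dispatched, the two evaluations in the previous paragraph close the argument via the inverse function theorem.
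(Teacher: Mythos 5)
Your proposal is correct and follows essentially the same route as the paper's proof: analyticity off the axis by construction, analytic continuation across $(a,b)$ via the boundary relation $\phi_{b,+}+\phi_{b,-}=0$ (equivalently \eqref{eq:eta_b_pm}), the derivative $\varphi_{0,+}'(x)=\pi\phi_V(x)$ from \eqref{eq:CauchyTransform_AlgebraicEq}, and conformality from $\phi_V(0)>0$. The only caveat, shared with the paper, is that the matching of boundary values requires reading the $\mp$ in \eqref{deff:conformalphi} as acting on $\ii\phi_b$ alone, with the additive constant $\kappa$ taken the same in both half-planes (as your sign bookkeeping implicitly does).
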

\begin{proof}
    By construction, $\varphi_0$ is analytic on each component of $U_0\setminus \R$. Equation \eqref{eq:eta_b_pm} then yields the analytic continuation across $\R$ as well. Equation \eqref{eq:CauchyTransform_AlgebraicEq} implies that
    \[
        \varphi_0'(x) = \varphi_{0,+}'(x) = \pi \phi_V(x) >0, \quad x \in U_0 \cap \R,
    \]
    where the positivity claim is ensured by the fact that $\phi_V(0)>0$ and continuity of $\phi_V$ (see Assumption~\ref{asump:V}).
\end{proof}
    
\subsection{The RHP for orthogonal polynomials}\hfill

As mentioned earlier, our starting point for the asymptotic analysis of OPs is their characterization in terms of a RHP due to Fokas, Its and Kitaev \cite{Fokas1992}. Recall that our weight $\omega_n(x)=\sigma_n(x)\ee^{-nV(x)}$ is as in \eqref{eq:deffweight}.
\begin{rhp} \label{rhp:Y}
Seek for a $2\times 2$ matrix-valued function $\bm Y = \bm Y^{(n)}:\C\setminus \R\to \C^{2\times 2}$ with the following properties.
\begin{enumerate}[\rm (i)]
\item $\bm Y:\C\setminus \R\to \C^{2\times 2}$ is analytic.
\item The matrix $\bm Y$ has continuous boundary values $\bm Y_\pm$ along $\R$, and they are related by $\bm Y_+(x)=\bm Y_-(x)\bm J_{\bm Y}(x)$, $x\in \R$, with
$$
\bm J_{\bm Y}(x)\deff \bm I+\omega_n(x)\bm E_{12},\quad x>0.
$$
\item As $z\to \infty$, $\bm Y$ behaves as
$$
\bm Y(z)=\left(\bm I+\frac{\bm Y_1}{z} + 
\frac{\bm Y_2}{z^2} + \Boh(z^{-3})\right)z^{n\sp_3},
$$
where $\bm Y_1,\bm Y_2$ are matrices that depend on $\sad,\tad$ but are independent of $z$.
\end{enumerate}
\end{rhp}

The solution to this RHP depends on $\sigma_n$, and hence on the parameters $\sad,\tad$. In line with the discussion following \eqref{def:sigman} we write $\bm Y=\bm Y(\cdot\mid \sad)=\bm Y(\cdot\mid \tad)=\bm Y(\cdot\mid \sad,\tad)$ when we need to stress this dependence. {\it Ditto} for $\bm Y_1,\bm Y_2$ etc. 

The solution $\bm Y$  encodes the orthogonal polynomials in the following way. The monic orthogonal polynomial $P_n=P_n(\cdot\mid \sad)$ of degree $n$ for the weight $\omega_n$ is obtained as
$$
P_n(z)=(\bm Y(z))_{11}.
$$
Moreover, the correlation kernel introduced in \eqref{deff:corrkernel} for the weight $\omega=\omega_n$ is given by 
\begin{equation}\label{eq:relKnRHPY}
\msf K_n(x,y\mid \sad)= \frac{\sqrt{\omega_n(x\mid \sad)}{\sqrt{\omega_n(y\mid \sad )}}}{2\pi \ii (x-y)}\bm e_2^T \bm Y_+(y\mid \sad)^{-1}\bm Y_+(x\mid \sad)\bm e_1, 
\end{equation}
where $\bm e_1\deff (1,0)^T$ and $\bm e_2\deff (0,1)^T$ are the canonical base vectors for $\R^2$. 

Finally, the recurrence coefficients $\gamma_n^2=\gamma_n(\sad)^2$ and $\beta_n=\beta_n(\sad)$ in \eqref{eq:TTRR} can be written in terms of the RHP~\ref{rhp:Y} as (see for instance \cite[(3.13) and (3.34)]{deift_book}\footnote{In \cite{deift_book}, the author works with recurrence coefficients $(a_n)$ and $(b_n)$ for orthonormal polynomials, not for the monic OPs; the correspondence between such coefficients is standard, and given by $\gamma_n^2=b_{n-1}^2$ and $\beta_n=a_n$.})
    \begin{equation}\label{eq:rec_coef_from_RHP}
        \gamma_n^2 = \left( \bm Y_1^{(n)}\right)_{12} \left( \bm Y_1^{(n)}\right)_{21}, 
        \quad \beta_n = \frac{\left( \bm Y_2^{(n)}\right)_{12}}{\left(\bm Y_{1}^{(n)} \right)_{12}} - \left( \bm Y_1^{(n)}\right)_{22}.
    \end{equation}

We now carry out the steepest descent analysis of $\bm Y$. The transformations involved are now standard in the theory: introduction of the $\msf g$-function constructed out of the equilibrium measure, opening of lenses, construction of global and local parametrices, and final transformation that allows for an application of the perturbation theory of RHPs. 

When compared with the standard RHP analysis for weights with a regular critical measure, there is one major distinction lying at the core of our novel results. Unlike in the classical case, over here we need to construct a local parametrix near the origin, due to the presence of poles of the deformation $\sigma_n$ accumulating at the origin. The construction of this parametrix is novel, and gives rise to all the non-trivial quantities involved in our main results.

\subsection{First transformation: introduction of the \texorpdfstring{$\msf g$}{g}-function}\hfill 

    The first transformation is 
    \[
       \bm T(z) \defeq \ee^{n\frac{\ell }{2} \sp_3} \bm Y(z) \ee^{-n\left(\msf g(z) + \frac{\ell}{2}\right) \sp_3}.
    \]
    where $\msf g$ was introduced in \eqref{eq:g_function} and $\ell$ is the Euler-Lagrange constant (see \eqref{eq:EulerLagrange}).
    This matrix-valued function is the solution to the following RHP.
    \begin{rhp}\label{rhp:T}
        Find a $2 \times 2$ matrix-valued function $\bm T:\C \setminus \R \to \C^{2 \times 2}$ that satisfies:
        \begin{enumerate}[\rm (i)]
            \item $\bm T$ is analytic;
            \item $\bm T$ has continuous boundary values $\bm T_{\pm}$ along $\R$, that are related by the identity $\bm T_+(x) = \bm T_-(x) \bm J_{\bm T}(x)$ where
            \[
                \bm J_{\bm T}(x) = 
                    \ee^{-n[\msf g_+(x) - \msf g_-(x)] \sp_3} + \sigma_n(x) \ee^{n[\msf g_+(x) + \msf g_-(x) + \ell - V(x)]}\bm E_{12}
                ;
            \]
            \item When $z$ becomes unbounded, $\bm T$ behaves as 
            \[
                \bm T(z) = \bm I +   \Boh \left( z^{-1}\right), \quad z\to \infty.
            \]
        \end{enumerate}
    \end{rhp}

    The jump matrix $\bm J_{\bm T}$ can be written in terms of the functions $\phi_a$ and $\phi_b$ that were introduced in \eqref{eq:phi_b} and \eqref{eq:phi_a} as
    \[
        \bm J_{\bm T}(x) = 
            \begin{dcases}
                \bm I + \sigma_n(x) \ee^{-2n \phi_a(x)}\bm E_{12}, \quad &x < a \\
                \begin{pmatrix}
                    \ee^{2n \phi_{b,+}(x)} & \sigma_n(x)  \\
                    0   &   \ee^{2 n \phi_{b,-}(x)}
                \end{pmatrix}, \quad &x \in [a,b] \\
                \bm I + \sigma_n(x) \ee^{-2n \phi_b(x)}\bm E_{12}, \quad &x > b.
            \end{dcases}
    \]

\subsection{Second transformation: opening of lenses}\hfill

    The decomposition
    \[
        \begin{pmatrix}
            \ee^{2n \phi_{b,+}(x)} &   \sigma_n(x) \\
            0   &   \ee^{2n \phi_{b,-}(x)}
        \end{pmatrix}
        =
        \begin{pmatrix}
            1   &   0   \\
            \frac{\ee^{2n \phi_{b,+}(x)}}{\sigma_n(x)}  &   1
        \end{pmatrix}
        \begin{pmatrix}
            0   &   \sigma_n(x) \\
            - \frac{1}{\sigma_n(x)} &   0
        \end{pmatrix}
        \begin{pmatrix}
            1   &   0   \\
            \frac{\ee^{2n \phi_{b,-}(x)}}{\sigma_n(x)} &   1
        \end{pmatrix}
    \]
    motivates the second transformation, which is known as the opening of lenses.

    \begin{figure}[t]
        \centering
        \begin{tikzpicture}[xscale=2,yscale=2]

  \def\a{-3.1}
  \def\b{3}
  \def\s{-\a/2}
  
  \draw[->] (0,{-\s*sin(45)}) -- (0,{\s*sin(45)});
  \draw[-, dotted, gray] (0,0) -- ({\s*cos(45)},{\s*sin(45)});
  \draw[-, dotted, gray] (0,0) -- ({-\s*cos(45)},{\s*sin(45)});
  \draw[-, dotted, gray] (0,0) -- ({-\s*cos(45)},{-\s*sin(45)});
  \draw[-, dotted, gray] (0,0) -- ({\s*cos(45)},{-\s*sin(45)});

  
  \coordinate (A) at (\a, 0);
  \coordinate (B) at (\b, 0);
  \coordinate (O) at (0,0);
  \coordinate (E1) at ({\a + \a/3},0);
  \coordinate (E2) at ({\b+\b/3},0);

  \draw[->] (E1) -- (E2);

  \draw[blue, thick,mid arrow] (A) -- (O);
  \draw[blue, thick,mid arrow] (O) -- (B);
  \draw[ mid arrow] (E1) -- (A);
  \draw[ mid arrow] (B) -- (E2);

  \coordinate (Auxap) at ({\a/2}, {-\a/6});
    \draw[dashed, mid arrow] (A) to[out=60, in={180}] (Auxap);
    \draw[dashed, mid arrow] (Auxap) to[out=0, in={180-22.5}] (O);
    \node[above] at (Auxap) {$\Gamma_{\bm S}^+$};
    
    \coordinate (Auxan) at ({\a/2}, {\a/6});
    \draw[dashed, mid arrow] (A) to[out={360-60}, in={180}] (Auxan);
    \draw[dashed, mid arrow] (Auxan) to[out=0, in={180+22.5}] (O);
    \node[below] at (Auxan) {$\Gamma_{\bm S}^-$};
    
    \coordinate (Auxbp) at ({\b/2}, {\b/6});
    \draw[dashed, mid arrow] (O) to[out=22.5, in={180}] (Auxbp);
    \draw[dashed, mid arrow] (Auxbp) to[out=0, in={120}] (B);
    \node[above] at (Auxbp) {$\Gamma_{\bm S}^+$};

    \coordinate (Auxbn) at ({\b/2}, {-\b/6});
    \draw[dashed, mid arrow] (O) to[out=360-22.5, in={180}] (Auxbn);
    \draw[dashed, mid arrow] (Auxbn) to[out=0, in={360-120}] (B);
    \node[below] at (Auxbn) {$\Gamma_{\bm S}^-$};

  \fill (A) ellipse (0.030cm and 0.030cm);
  \fill (O) ellipse (0.030cm and 0.030cm);
  \fill (B) ellipse (0.030cm and 0.030cm);
  \node[below left] at (A) {$a$};
  \node[below right] at (B) {$b$};

    \coordinate (Aux1) at ({\a/2}, 0);
    \coordinate (Aux2) at ({\b/2}, 0);
    \node[below, yshift=-6pt] at (Aux1) {$ \Omega_{\bm S}^-$};
    \node[above, yshift=6pt] at (Aux1) {$\Omega_{\bm S}^+$};
    \node[below, yshift=-6pt] at (Aux2) {$\Omega_{\bm S}^-$};
    \node[above, yshift=6pt] at (Aux2) {$\Omega_{\bm S}^+$};

\end{tikzpicture}
         \caption{Contours for the opening of lenses. $\Gamma_{\bm S}^+$ (resp. $\Gamma_{\bm S}^-$) is the union of the dashed curves in the upper (resp. lower) half plane . We assume that $\Gamma_{\bm S}^\pm$ intersect the imaginary axis only at $z=0$, and at $z=0$ they make an angle $\pi/8$ with the real axis. Also, we assume that they do not intersect the rays with angle equal to $ \pm \frac{ \pi}{4}, \pm \frac{7 \pi}{4}$ as represented in the figure by the dotted gray lines. }
        \label{fig:Opening_of_Lenses}
    \end{figure}
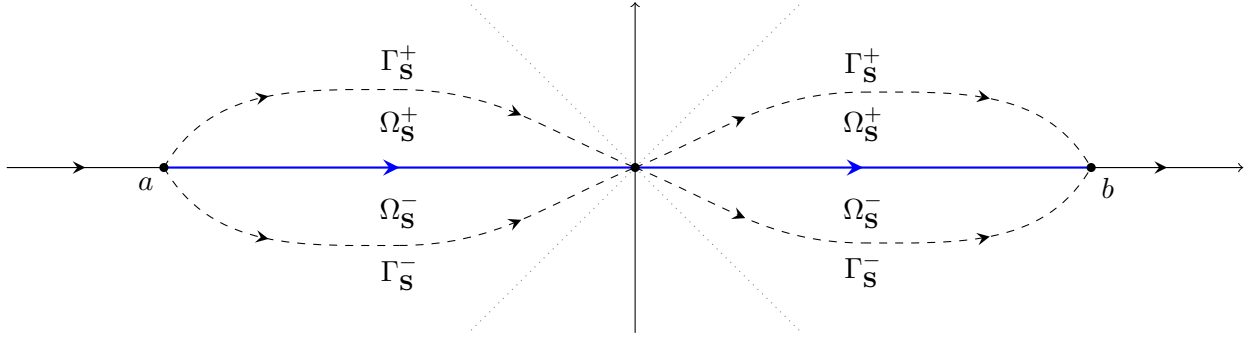

        The lenses are determined by a union of contours $\Gamma_{\bm S} = \Gamma_{\bm S}^- \cup \Gamma_{\bm S}^- \cup \R$, which delimit a union of domains  $\Omega_{\bm S} = \Omega_{\bm S}^+ \cup \Omega_{\bm S}^-$ as in Figure \ref{fig:Opening_of_Lenses}. We emphasize that even though the underlying equilibrium measure is one-cut, we have to open lenses around each of the intervals $(a,0)$ and $(0,b)$ instead of around the full interval $(a,b)$. This is so because along the imaginary axis, the factor $\sigma_n(z)^{-1}$ blows up as $\Boh(\ee^{n^2\eta})$ for some $\eta>0$. Had we chosen to open lenses in the usual way, we would not end up with exponentially decaying jumps near the origin. The next proposition shows that this issue is avoided when the angle between lenses and the real axis is smaller than $\frac{\pi}{4}$ in a neighborhood of $0$.
        
\begin{prop}
\label{prop:polefreeregions}
For every $\varepsilon < \frac{\pi}{4}$, there exists $c(\varepsilon),\delta > 0$ such that $\Re Q(z) \geq c(\varepsilon)$ for every $z$ in the conic regions
$$
\left\{z\in \C \mid |\arg z|<\frac{\pi}{4}-\varepsilon\right\}\cap D_\delta\qquad \text{and}\qquad \left\{z\in \C \mid 0\leq \pi -|\arg z|<\frac{\pi}{4}-\varepsilon\right\}\cap D_\delta.
$$
\end{prop}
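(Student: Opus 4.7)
The plan is to expand $Q$ in a Taylor series around the origin and use elementary trigonometry on the two sectors. By Assumption~\ref{asump:Q}, $Q$ is analytic in a complex neighborhood of $\mathbb{R}$ with $Q(0)=Q'(0)=0$ and $Q''(0)/2=\msf t>0$, so on a disk $D_{\delta_0}$ we may write
\[
Q(z)=\msf t\, z^2 + z^3 R(z),
\]
with $R$ holomorphic and bounded by some $M>0$ on $D_{\delta_0/2}$.

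The key geometric fact is $\re(z^2)=|z|^2\cos(2\arg z)$. On the right-hand cone $\{|\arg z|<\pi/4-\varepsilon\}$, one has $|2\arg z|<\pi/2-2\varepsilon$, hence $\cos(2\arg z)\geq \sin(2\varepsilon)>0$. On the cone about the negative real axis $\{\pi-|\arg z|<\pi/4-\varepsilon\}$, writing $\arg z=\pm(\pi-\alpha)$ with $|\alpha|<\pi/4-\varepsilon$ and using $2\pi$-periodicity of cosine gives $\cos(2\arg z)=\cos(2\alpha)\geq \sin(2\varepsilon)$ as well. Therefore $\re(\msf t z^2)\geq \msf t\sin(2\varepsilon)\,|z|^2$ uniformly on both sectors.

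It remains to absorb the cubic remainder. For $z\in D_\delta$ with $\delta\leq\delta_0/2$, we have $|\re(z^3 R(z))|\leq M|z|^3\leq M\delta\,|z|^2$. Choosing $\delta\defeq\min\{\delta_0/2,\,\msf t\sin(2\varepsilon)/(2M)\}$ produces
\[
\re Q(z)\;\geq\; \bigl(\msf t\sin(2\varepsilon)-M\delta\bigr)|z|^2\;\geq\; \tfrac{1}{2}\msf t\sin(2\varepsilon)\,|z|^2
\]
on either conic region intersected with $D_\delta$, with $c(\varepsilon)=\tfrac{1}{2}\msf t\sin(2\varepsilon)$. Since $Q(0)=0$ precludes a strictly positive constant lower bound on the full region, the intended statement is naturally read as the quadratic lower bound just obtained, which is precisely what is needed downstream to guarantee that $\sigma_n(z)^{-1}=1+\ee^{\sad+n^2 Q(z)}$ does not blow up on the lens contours. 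The argument is elementary and I anticipate no real obstacle; the only subtlety worth flagging is that the two conic openings are dictated by the requirement $\cos(2\arg z)>0$, which in turn forces the angle $\pi/8$ chosen for the lenses in Figure~\ref{fig:Opening_of_Lenses}.
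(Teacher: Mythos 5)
Your proof is correct and follows essentially the same route as the paper's: Taylor-expand $Q$ at the origin, bound $\cos(2\arg z)$ from below by $\cos\left(\tfrac{\pi}{2}-2\varepsilon\right)=\sin(2\varepsilon)$ on both cones, and shrink $\delta$ to absorb the higher-order remainder. Your remark that the estimate is really $\Re Q(z)\geq c(\varepsilon)|z|^2$ rather than a constant lower bound is also consistent with what the paper's own proof actually establishes, since its displayed inequality carries the $R^2$ prefactor.
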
  

\begin{proof}
    By Assumption \ref{asump:Q},
    \[
        Q(z) = \msf t z^2 + \Boh(z^3), \quad z \to 0.
    \]
    where the $\Boh$ term is uniform for $z$ in compacts that contains $0$. 
    Writing $z = R(\cos(\theta) + \ii \sin(\theta))$ and taking real part, above identity becomes
    \[
        \Re Q(z) = R^2 \left( t \cos(2 \theta) + \Boh(R) \right), \quad z \to 0.
    \]
    Note that $2 \theta \in \left(- \frac{\pi}{2} + 2 \varepsilon,  \frac{\pi}{2} - 2 \varepsilon \right)$ for $\theta = \arg z$ in the enunciated conic regions.  Thus $\msf t\cos (2\theta) \geq c(\varepsilon) > 0$ for $c(\varepsilon) = \cos \left( \frac{\pi}{2} - 2 \varepsilon\right)$. Taking $\delta$ sufficiently small such that for $|z| = R < \delta$ the $\Boh(R)$ term is smaller in norm than $c(\varepsilon)$, the result follows.
\end{proof}
        
The transformation $\bm T\mapsto \bm S$ that performs the opening of lenses is defined as follows (see Figure \ref{fig:Opening_of_Lenses}),
    \[
        \bm S(z) \defeq
            \bm T(z) \times
            \begin{dcases}
                \bm I, \quad &z \in \C \setminus \overline{\Omega}_{\bm S}, \\
                 \bm I \mp \frac{\ee^{2n\phi_b(x)}}{\sigma_n(x)} \bm E_{21}, \quad &z \in \Omega_{\bm S}^\pm.
            \end{dcases}
    \]
We emphasize that we open lenses in such a way that the angle between the lipses of the lenses and the real axis is $\pi/8$. By Proposition~\ref{prop:polefreeregions}, such choice ensures that the factor $1/\sigma_n$ remains bounded and also bounded away from $0$ inside the lenses as $n\to+\infty$.

The  matrix $\bm S$ is the solution for the following RHP.
    \begin{rhp}
        Seek for a matrix-valued function $\bm S : \C \setminus \Gamma_{\bm S} \to \C^{2 \times 2}$ satisfying:
        \begin{enumerate}[\rm (i)]
            \item $\bm S$ is analytic on its domain.
            \item $\bm S$ has continuous boundary values $\bm S_{\pm}$ along $\Gamma_\bm S$, that are related by $\bm S_+(x) = \bm S_-(x) \bm J_{\bm S}(x)$, where
            \begin{equation}\label{deff:JS}
                \bm J_{\bm S}(x) \defeq
                    \begin{dcases}
                        \bm I + \sigma_n(x) \ee^{-2n \phi_a(x)} \bm E_{12}, \quad &x < a, \\
                        \sigma_n(x) \bm E_{12} - \frac{1}{\sigma_n(x)} \bm E_{21}(x), \quad &x \in (a,0) \cup (0,b), \\
                        \bm I + \sigma_n(x) \ee^{-2 n \phi_b(x)} \bm E_{12}, \quad &x > b, \\
                        \bm I + \frac{\ee^{2n \phi_b (x)}}{\sigma_n(x)} \bm E_{21}, \quad &x \in \Gamma_{\bm S}^\pm.
                    \end{dcases}
            \end{equation}
            \item When $z$ becomes unbounded, $\bm S$ behaves as 
            \[
                \bm S(z) = \bm I + \Boh(z^{-1}),\quad z\to \infty.
            \]
        \end{enumerate}
    \end{rhp}

After the transformation $\bm T\mapsto \bm S$, the new jumps become exponentially close to the identity matrix for $z$ away from $[a,b]$. For completeness we state this result rigorously with the next Proposition.

\begin{prop}\label{prop:estimateJS}
For any $\sad_0\in \R$ and any bounded open set $G\subset\C$ with $[a,b]\subset G$, there exists $\eta>0$ such that the estimate
$$
\|\bm J_{\bm S}-\bm I\|_{L^1\cap L^\infty(\Gamma_{\bm S}\setminus G)}=\Boh(\ee^{-\eta n}),\quad n\to \infty,
$$
holds true uniformly for $\sad \geq \sad_0$ and uniformly for $\msf t$ in compacts of $(0,+\infty)$.
\end{prop}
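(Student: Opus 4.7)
The plan is to split $\Gamma_{\bm S}\setminus G$ into its real-line parts contained in $(-\infty,a]\cup[b,+\infty)$ and the lens parts $\Gamma_{\bm S}^\pm\setminus G$, and to estimate the entries of $\bm J_{\bm S}-\bm I$ from \eqref{deff:JS} on each. The two elementary bounds that drive everything are (i) $|\sigma_n(x)|\leq 1$ for every $x\in\R$, which is immediate from $Q\geq 0$ on $\R$ and the definition \eqref{def:sigman}, and (ii) $1/|\sigma_n(z)|\leq 1+\ee^{-\sad-n^2\re Q(z)}$, whose boundedness on the lenses will be controlled using Proposition~\ref{prop:polefreeregions} combined with the choice of lens angle $\pi/8$.

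On the real-line portions, the nontrivial entries are $\sigma_n(x)\ee^{-2n\phi_a(x)}$ for $x<a$ and $\sigma_n(x)\ee^{-2n\phi_b(x)}$ for $x>b$. Using (i), matters reduce to the exponentials. By \eqref{eq:phibV} and \eqref{eq:properties_of_phi_a}, in both cases the exponent equals $2U^{\mu_V}(x)+V(x)-\ell$, and the strict Euler--Lagrange inequality from Assumption~\ref{asump:V} gives a uniform positive lower bound $\phi_{a}(x),\phi_b(x)\geq\eta_0$ on any bounded portion of the real line outside $G$. For the unbounded tails, the fact that $V$ is a polynomial of even degree with positive leading coefficient yields $2\phi_{a,b}(x)\sim V(x)$ as $|x|\to\infty$, which delivers both the pointwise bound $\ee^{-\eta n}$ and an integrable majorant, hence the $L^\infty$ and $L^1$ estimates on these portions. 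These bounds are independent of $\sad$ and $\tad$.

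On the lens arcs $\Gamma_{\bm S}^\pm\setminus G$ the nontrivial entry is $\ee^{2n\phi_b(z)}/\sigma_n(z)$. Since these are compact curves bounded away from $[a,b]$, the standard construction of the lenses together with continuity of $\re\phi_b$ gives $\re\phi_b(z)\leq -\eta_1<0$ uniformly there. It remains to uniformly bound $1/\sigma_n(z)=1+\ee^{-\sad-n^2Q(z)}$. Near the origin, the lens angle $\pi/8$ lies strictly inside the cone $|\arg z|<\pi/4-\varepsilon$ of Proposition~\ref{prop:polefreeregions} (for, say, $\varepsilon=\pi/16$), so $\re Q(z)\geq c(\varepsilon)>0$ there; note that $c(\varepsilon)$ depends only on $\tad$ through $Q''(0)/2=\tad$ and hence is uniform for $\tad$ in compacts of $(0,+\infty)$. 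Away from the origin, analyticity of $Q$, the strict positivity $Q(x)>0$ on $\R\setminus\{0\}$, and compactness allow us to shrink the lens if needed so that $\re Q(z)\geq c_1>0$ along the rest of the arc. Combining, $1/|\sigma_n(z)|\leq 1+\ee^{-\sad_0-n^2 c_2}\leq 2$ for $n$ sufficiently large, uniformly in $\sad\geq\sad_0$ and $\tad$ in a compact. Thus $|\ee^{2n\phi_b(z)}/\sigma_n(z)|\leq 2\ee^{-2n\eta_1}$ uniformly on $\Gamma_{\bm S}^\pm\setminus G$, and since this is a bounded contour, both the $L^\infty$ and $L^1$ bounds follow.

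Putting the pieces together, each entry of $\bm J_{\bm S}-\bm I$ is dominated by $C\ee^{-\eta n}$ pointwise on $\Gamma_{\bm S}\setminus G$ with integrable majorant, giving the claimed $L^1\cap L^\infty$ bound uniformly in $\sad\geq\sad_0$ and $\tad$ in compacts of $(0,+\infty)$. The main technical point, and the only place where the argument is not entirely routine, is the lens estimate: the choice of opening angle $\pi/8$ is essential here, since a larger angle could allow $\re Q(z)$ to be negative and drive $1/\sigma_n$ to \emph{super-exponential} growth of order $\ee^{n^2|\re Q(z)|}$, which would overwhelm the $\ee^{2n\phi_b(z)}$ decay and destroy the estimate.
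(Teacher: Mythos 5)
Your proof is correct and follows exactly the route the paper has in mind: the paper omits this proof as ``standard'' and points to the properties of $\phi_a,\phi_b$, which is precisely your real-axis argument, while the one non-routine ingredient you identify --- the boundedness of $1/\sigma_n$ on the lens arcs via Proposition~\ref{prop:polefreeregions} and the $\pi/8$ opening angle --- is the same point the paper flags immediately after the opening of lenses. Nothing is missing.
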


Proposition~\ref{prop:estimateJS} follows from the properties of the functions $\phi_a,\phi_b$ in a standard way, and we skip its proof.

As a consequence of Proposition~\ref{prop:estimateJS}, $\bm J_{\bm S}$ fails to be close to the identity solely near $[a,b]$. However, the jumps on this interval can be accomplished exactly via the construction of the so-called global and local parametrices, as we now discuss. Observe that we did not open lenses near $z=0$, so a local parametrix will also have to be constructed near this point.

\subsection{The global parametrix}\hfill 

The global parametrix is the solution to the RHP obtained from the one for $\bm S$ by neglecting the exponentially small jumps. Concretely, it is the following RHP.

\begin{rhp} Seek for a $2\times 2$ matrix-valued function $\bm G:\C\setminus [a,b]\to \C^{2\times 2}$ with the following properties.
\begin{enumerate}[\rm (i)]
\item $\bm G:\C\setminus [a,b]\to \C^{2\times 2}$ is analytic.
\item The matrix $\bm G$ has continuous boundary values $\bm G_\pm$ along $(a,b)$, and they are related by $\bm G_+(x)=\bm G_-(x)\bm J_{\bm G}(x)$, $a<x<b$, with
$$
\bm J_{\bm G}(x)\deff \sigma_n(x)\bm E_{12}-\frac{1}{\sigma_n(x)}\bm E_{21},\quad a<x<b.
$$
\item As $z\to \infty$, $\bm G$ behaves as
$$
\bm G(z)=\bm I+\Boh(z^{-1}).
$$
\item The entries of the matrix $\bm G$ have square-integrable behavior as $z\to a,b$.
\end{enumerate}
\end{rhp}

The construction of $\bm G$ is standard, see for instance \cite[Section~10.3]{GhosalSilva22}. Define
\begin{equation} \label{eq:h_exponent}
\msf h(z)\deff \frac{((z-a)(z-b))^{1/2}}{2\pi\ii}\int_a^b \frac{\log\sigma_n(x)}{((x-a)(x-b))^{1/2}_+}\frac{\dd x}{x-z},\quad z\in \C\setminus [a,b],
\end{equation}
with the principal branch of the root, and with the term $((z-a)(z-b))^{\frac{1}{2}}$ with branch cut along $[a,b]$.

The function $\msf h$ has continuous boundary values on $(a,b)$ that are related by the jump condition
\[
    \msf h_+(x) + \msf h_-(x) =  \log \sigma_n(x), \quad a < x < b.
\]
Moreover, the expansion
\[ 
    \msf h(z) = \msf h_0 + \frac{\msf h_1}{z} + \Boh(z^{-2}), \quad z \to \infty,
\]
holds true, where
\[
    \begin{split}
        \msf h_0 = \msf h_0(n)
            &\deff - \frac{1}{2 \pi \ii} \int_a^b \frac{\log \sigma_n(x)}{((x-a)(x-b))_+^{\frac{1}{2}}} \, \dd x 
            = \frac{1}{2 \pi } \int_a^b \frac{\log \sigma_n(x)}{\sqrt{(b-x)(x-a)}} \, \dd x , \\
        \msf h_1 = \msf h_1(n) 
            &\deff  \frac{1}{2 \pi } \int_a^b \frac{x \log \sigma_n(x)}{\sqrt{(b-x)(x-a)}} \, \dd x - \frac{a+b}{2} \msf h_0.
    \end{split}
\]

The matrix $\bm G$ is given by
\begin{equation} \label{eq:global_G}
\bm G(z)=\ee^{- \msf h_0 \sigma_3} \bm M(z)\ee^{\msf h(z)\sp_3},\quad z\in \C\setminus [a,b].
\end{equation}
where\begin{equation} \label{eq:M}
\bm M(z)\deff \bm U_0\left( \frac{z-b}{z-a} \right)^{\sp_3/4}\bm U_0^{-1},\quad z\in \C\setminus [a,b].
\end{equation}
with $\bm U_0$ as in Equation \eqref{def:U0}. Introducing
\begin{equation}\label{deff:M1M2}
        \bm M_1 \deff 
        \frac{a-b}{4}\sp_2, \quad 
        \bm M_2 \deff  \frac{(b-a)^2}{32} \bm I - \frac{b^2-a^2}{8} \sp_2,
\end{equation}
the expansion
\begin{equation}\label{eq:asymptotics_M}
    \bm M(z) = \bm I + \frac{\bm M_1}{z} + \frac{\bm M_2}{z^2} + \Boh(z^{-3}), \quad z \to \infty,
\end{equation}
is valid.

Observe that $\msf h=\msf h(\cdot\mid \sad)$, and therefore $\bm G=\bm G(\cdot\mid \sad)$. For later, we need to establish the behavior of the quantities we just introduced as functions of $n$.

\begin{prop}\label{prop:h_n_asymptotics}
    Set 
    \[
    G_0(\sad) \defeq \int_\R \log \left(
    1 + \ee^{-\sad -  x^2} \right) \, \dd x,\quad 
    \widehat{\msf h}_0 \defeq  
        \frac{ G_0(\sad)}{2 \pi \sqrt{|a|b \msf t }}, \quad
        \quad \wh{\msf h}_1\deff -\frac{a+b}{2}\wh{\msf h}_0,\quad
    \widehat {\msf h} (z) \defeq  
        \widehat{\msf h}_0 \frac{((z-a)(z-b))^{\frac{1}{2}}}{z}.
    \]
The estimates
    \begin{equation}\label{eq:estimates_h0_h1}
        \msf h_0(n) 
            = \frac{\widehat{\msf h}_0}{n} + \Boh \left( n^{-3}\right), \quad 
        \msf h_1(n) 
            = \frac{\widehat{\msf h}_1}{n} + \Boh(n^{-3}),\qquad n\to \infty,
    \end{equation}
    as well as
    \begin{equation}\label{eq:estimate_h}
        \msf h(z) =  \frac{ \widehat{\msf h}(z)}{n} + \Boh\left( n^{-3} \right)
    \end{equation}
    are valid, the latter being valid also uniformly for $z$ on compacts of $\C \setminus [a,b]$, and carrying through to boundary values $\msf h_{\pm}(x)$ for $x$ along $\R \setminus \{a,b\}$. Furthermore, for any $\sad_0\in \R$ fixed, these estimates are valid also uniformly for $\sad\geq \sad_0$ and for $\msf t$ in compacts of $(0,+\infty)$.
\end{prop}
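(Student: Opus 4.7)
I will take a Laplace-type approach to the integrals in \eqref{eq:h_exponent}, exploiting that under Assumption~\ref{asump:Q} the function $\log\sigma_n(x)=-\log(1+\ee^{-\sad-n^2 Q(x)})$ concentrates in a window of width $\sim 1/n$ around the origin. Since $Q(x)\geq c(\delta)>0$ on $[a,b]\setminus D_\delta$, the contributions from $[a,b]\setminus D_\delta$ are $\Boh(\ee^{-cn^2})$, and it suffices to analyze the integrals restricted to $[a,b]\cap D_\delta$ through the change of variables $y=nx$. The Taylor expansion $n^2 Q(y/n)=\msf t y^2 +\frac{Q'''(0)}{6}y^3/n+\Boh(y^4/n^2)$ valid for $|y|\leq n\delta$ then reduces each integral to a Laplace-type integral of the form covered in Appendix~\ref{Ap:LaplaceIntegrals}.

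Next I would expand all smooth factors about $x=0$, for instance
\[
\frac{1}{\sqrt{(b-y/n)(y/n-a)}}=\frac{1}{\sqrt{-ab}}+\frac{\alpha_1 y}{n}+\frac{\alpha_2 y^2}{n^2}+\Boh\!\left(\frac{y^3}{n^3}\right),
\]
and similarly for the factor $((x-a)(x-b))^{-1/2}_+(x-z)^{-1}$ present in the integrand of $\msf h(z)$ (its Taylor coefficients depending continuously on $z\in\C\setminus[a,b]$), and combine with
\[
\log\bigl(1+\ee^{-\sad-n^2 Q(y/n)}\bigr)=\log\bigl(1+\ee^{-\sad-\msf t y^2}\bigr)+\Boh\!\left(\frac{y^3 \ee^{-\msf t y^2}}{n}\right)+\Boh\!\left(\frac{(1+y^6)\ee^{-\msf t y^2}}{n^2}\right).
\]
The leading term in $n$ of each integral is obtained by retaining the zeroth-order coefficients of each expansion; after the rescaling $u=\sqrt{\msf t}\,y$ the remaining Gaussian-type integral produces the factor $G_0(\sad)/\sqrt{\msf t}$, and assembling everything yields $n\msf h_0=\widehat{\msf h}_0+o(1)$, $n\msf h_1=\widehat{\msf h}_1+o(1)$, and $n\msf h(z)=\widehat{\msf h}(z)+o(1)$ at leading order; for $\msf h_1$ the leading term arises by evaluating the extra factor $x-(a+b)/2$ at $x=0$, giving exactly $\widehat{\msf h}_1=-\tfrac{a+b}{2}\widehat{\msf h}_0$. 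Extending the $y$-integral from $[na,nb]$ to $\R$ again costs only exponentially small errors.

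The subtle point is justifying the error $\Boh(n^{-3})$ rather than the naive $\Boh(n^{-2})$, and this rests on a parity argument: all corrections at order $n^{-1}$ in the integrand are products of an odd power of $y$ (coming either from the cubic term $Q'''(0)y^3/n$, from the coefficient $\alpha_1 y/n$ in the expansion of the square-root, or from the first-order Taylor term of $(x-z)^{-1}$ in the case of $\msf h(z)$) times an even function of $y$, so their $\R$-integrals vanish. The next non-vanishing correction is therefore of order $n^{-3}$. The uniformity statements follow transparently: uniformity in $\sad\geq\sad_0$ and in $\msf t$ on compacts of $(0,+\infty)$ is inherited from the continuous dependence of all expansion constants on these parameters, and uniformity in $z$ on compacts of $\C\setminus[a,b]$ from continuous dependence of the Taylor coefficients of $(x-z)^{-1}$ around $x=0$ whenever $z$ stays away from the origin. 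The main obstacle, and the only delicate point, is the clean bookkeeping of this parity cancellation with uniform control of the remainders, which I would extract from the Laplace-type asymptotics collected in Appendix~\ref{Ap:LaplaceIntegrals}.
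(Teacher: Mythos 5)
Your proposal is correct and follows essentially the same route as the paper: the paper's proof simply invokes Proposition~\ref{prop:Laplace} from Appendix~\ref{Ap:LaplaceIntegrals}, whose content (localization near the origin up to $\Boh(\ee^{-cn^2})$ errors, rescaling, and the vanishing of the odd moments $G_{2k+1}$ so that the first correction sits at relative order $t^{-1}=n^{-2}$) is precisely the parity cancellation you spell out by hand. Your explicit treatment of the factor $(x-z)^{-1}$ and of the uniformity in $\sad$, $\msf t$ and $z$ is consistent with, and slightly more detailed than, what the paper records.
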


\begin{proof}
All the $n$-dependent quantities involved in this proposition come from Laplace-like integrals. The proof of the current proposition is a direct application of Proposition~\ref{prop:Laplace}, where we establish a more systematic asymptotic analysis of such Laplace-like integrals.
\end{proof}

As a consequence of Proposition~\ref{prop:h_n_asymptotics}, we see that
\begin{equation}\label{eq:asymptotics_eh}
\ee^{\pm \msf h(z)\sp_3}=\bm I+\Boh(n^{-1}),\quad n\to \infty,
\end{equation}
and with \eqref{eq:global_G} in mind, together with the fact that $\bm M$ is bounded on $\C\setminus \{a,b\}$,
\begin{equation}\label{eq:gpMGasympt}
\bm M(z)=(\bm I+\Boh(n^{-1}))\bm G(z),\quad n\to \infty,
\end{equation}
both valid uniformly for $z$ in compacts of $\C\setminus \{a,b\}$. We will use this estimate later.

\subsection{The local parametrix near edge points}\hfill 

At the edge points $a$ and $b$, the local parametrix is constructed in terms of the Airy RHP, that we write below for the completeness of calculations that will come later. In the statement of the coming RHP, we denote
$$
\Sigma_{\bm A} \defeq \R\cup (\infty\ee^{2\pi \ii /3},0]\cup (\infty \ee^{-2\pi \ii/3}, 0].
$$
Along $\Sigma_{\bm A}$, the ray $[0,+\infty)$ is oriented outwards the origin, and the remaining three rays are oriented towards the origin.

\begin{rhp}\label{rhp:airy}
    Seek $\bm A:\C \setminus \Sigma_{\bm A} \to \C^{2 \times 2}$ satisfying:
    \begin{enumerate}[\rm (i)]
        \item $\bm A: \C \setminus \Sigma_{\bm A} \to \C^{2 \times 2}$ is analytic.
        \item $\bm A$ admits continuous boundary values $\bm A_{\pm}$ on $\Sigma_{\bm A}$, related by the identity $\bm A_+(x) = \bm A_-(x) \bm J_{\bm A}(x)$, $x\in \Sigma_{\bm A}$, where
        \[
            \bm{J}_{\bm A}(x) \deff 
                \begin{dcases}
                    \bm I + \bm E_{12}, \quad &x>0, \\
                    \bm I +\bm E_{21}, \quad &\arg x =  \pm\frac{2 \pi }{3}, \\
                    \bm E_{12} - \bm E_{21}, \quad &x<0.
                \end{dcases}
        \]
        \item As $\zeta \to \infty$,
        \begin{equation}\label{eq:airy_asymptotics}
            \bm A(\zeta) \sim  \zeta^{-\frac{\sp_3}{4}} \bm U_0 \left( \bm I + 
            \sum_{k=1}^\infty \frac{\bm A_k}{\zeta^{\frac{3k}{2}}} \right) \ee^{-\frac{2}{3}\zeta^{\frac{3}{2}} \sp_3},
        \end{equation}
        with coefficients given by\footnote{The coefficients $\bm A_k$ may be computed from \cite[Equation~(7.30)]{deift_exp_weights}. For the record, the matrix $\Psi^\sigma$ therein and our matrix $\bm A$ are related by
        \[
        \bm A(\zeta)=\sqrt{2\pi}\ee^{-\pi \ii /12}\ee^{\pi \ii \sp_3/4}\Psi^\sigma(\zeta).
        \]
        }
        \[
        \msf a_k \defeq   \frac{(-1)^{k+1} \Gamma \left( 3 k + \frac{1}{2}\right)}{36^k k! \Gamma\left( k + \frac{1}{2}\right)(6k-1)} ,\quad 
        \bm A_k \deff 
        \msf a_k \times \begin{dcases}
            \left(\bm I + 6k \sp_2\right), \quad &k \text{ even},  \\
            \left(\sp_3 + 6 \ii \sp_1\right), \quad &k \text{ odd}.
        \end{dcases}
        \]
        \item $\bm A$ remains bounded as $\zeta \to 0$ along $\zeta \in \C \setminus \Sigma_{\bm A}$.
    \end{enumerate}
\end{rhp}

The solution $\bm A$ itself is given in terms of Airy functions in a canonical construction, namely
$$
\bm A(\zeta)\deff 
\begin{pmatrix}
    \ai(\zeta) & -\ee^{4\pi\ii/4} \ai(\ee^{4\pi \ii/3}\zeta) \\
    -\ii \ai'(\zeta) & \ii \ee^{2\pi\ii/4} \ai'(\ee^{4\pi \ii/3}\zeta)
\end{pmatrix}
\times 
\begin{cases}
    \bm I, & 0<\arg \zeta<\frac{2\pi}{3}, \\ 
    (\bm I+\bm E_{21}), & \frac{2\pi}{3}<\arg \zeta<\pi, \\
    (\bm I+\bm E_{12}), & -\frac{2\pi}{3}<\arg \zeta<0, \\ 
    (\bm I+\bm E_{12})(\bm I-\bm E_{21}), & -\pi<\arg\zeta<-\frac{2\pi}{3}. \\
\end{cases}
$$

    For some $\delta>0$, which will be made sufficiently small as needed, let us set
\[
    U_b = U_{b,\delta} \defeq \{z \in \C \mid |z-b| < \delta \}.
\]
The local parametrix at $b$ is the solution of the following RHP.

\begin{rhp}\label{rhp:local_b}
    Find a matrix-valued function $\bm P_b: U_{b, \delta} \setminus \Gamma_{\bm S} \to \C^{2 \times 2}$ that satisfies the following:
    \begin{enumerate}[\rm (i)]
        \item $\bm P_b$ is analytic and admits a continuous extension to $\overline{U_{b, \delta}} \setminus \Gamma_{\bm S}$;
        \item The matrix $\bm P_b$ has continuous boundary values $\bm P_{b,\pm}$ along $U_{b, \delta} \cap \Gamma_{ \bm S}$, related by the identity $\bm P_{b,+}(x) = \bm P_{b,-}(x) \bm J_{\bm P_b}(x)$, where
        \[
            \bm J_{\bm P_b}(x) = \bm J_{\bm S}(x)=
            \begin{dcases}
                \sigma_n(x) \bm E_{12} - \frac{1}{\sigma_n(x) }\bm E_{21}, \quad &x \in (b - \delta, b), \\
                \bm I + \sigma_n(x) \ee^{-2 n \phi_b(x)} \bm E_{12}, \quad &x \in (b, b+\delta), \\
                \bm I + \frac{\ee^{2 n \phi_b(x)}}{\sigma_n(x)} \bm E_{21}, \quad x \in U_{b, \delta} \cap \Gamma_{\bm S}^\pm.
            \end{dcases}
        \]
        \item As $n \to \infty$,
        \begin{equation}\label{eq:asymptotics_P_b}
            \bm P_b(z) = \left( \bm I + \boh(1)\right) \bm G(z)
        \end{equation}
        uniformly on $\partial U_{b, \delta}$.
        \item   $\bm P_b(z)$ remains bounded as $z \to b$.
    \end{enumerate}
\end{rhp}

The solution of this RHP is given in terms of the map
    \[
        \varphi_b(z) \defeq \left(\frac{3}{2} \phi_b(z)\right)^{\frac{2}{3}}.
    \]
    The construction of $\phi_b$ in terms of the equilibrium measure - which is regular - shows that $\varphi_b$ is conformal and maps $(b-\delta,b)$ to an interval contained in $(-\infty,0)$. We assume that the lens were chosen in such a way that $\varphi_b$ maps $\Gamma_{\pm} \cap U_{b, \delta}$ to a part of the curve $\arg z = \pm \frac{ 2 \pi }{3}$.

    As usual, one maps RHP \ref{rhp:local_b} to the Airy RHP and, after normalization, the solution is
    \[
        \bm P_b(z) =\bm E_b(z) \bm A \left( n^{\frac{2}{3}} \varphi_b(z) \right) \ee^{(n \phi_b(z) - \frac{1}{2} \log \sigma_n(z)) \sp_3}, \quad \bm E_b(z) \defeq  \bm G(z) \ee^{\frac{1}{2} \log \sigma_n(z) \sp_3} \bm U_0^{-1} \left(n^{\frac{2}{3}} \varphi_b(z) \right)^{\frac{\sp_3}{4}}.
    \]

    One can plug the asymptotic of $\bm A$ from Equation~\eqref{eq:airy_asymptotics} in above identity to obtain more precise information than \eqref{eq:asymptotics_P_b}. As $\sigma_n(z) = \bm I + \Boh\left(\ee^{ - n^2 \eta} \right)$, for $z$ near $b$ and some $\eta > 0$, and $\ee^{\msf h \sp_3} =\bm I + \Boh(n^{-1})$, it is straightforward to check that 
    \begin{equation}\label{eq:precise_asymptotics_P_b}
        \bm P_b(z) 
        = \bm G(z) \sigma_n(z)^{\frac{\sp_3}{2}} \left(  \bm I + \Boh( n^{-1})\right) \sigma_n(z)^{-\frac{\sp_3}{2}} 
        = \left( \bm I + \Boh(n^{-1})\right) \bm G(z),
    \end{equation}
    where the $\Boh(n)$ term is uniform for $z \in \partial U_b$, for $\sad \geq \sad_0$ for any $\sad_0\in \R$ fixed, and also uniform for $\tad$ in compacts of $(0,+\infty)$.

    The analysis for the local parametrix at $a$ follows analogously. The local parametrix itself is the solution to the following RHP.
    \begin{rhp}\label{rhp:local_a}
        Find a matrix-valued function $\bm P_a: U_{a,\delta} \setminus \Gamma_{\bm S} \to \C^{2 \times 2}$ satisfying:
            \begin{enumerate}[\rm (i)]
                \item $\bm P_a$ is analytic and admits a continuous extension to $\overline{U_{a,\delta}} \setminus \Gamma_{\bm S}$.
                \item The matrix $\bm P_a$ admits continuous boundary values $\bm P_{\pm}$ along $U_{a,\delta} \cap \Gamma_{\bm S}$, related by $\bm P_{a,+}(x) = \bm P_{a,-}(x) \bm J_{\bm P_a}(x)$, where
                \[
                    \bm J_{\bm P_a}(x) \deff \bm J_{\bm S}(x),\quad x\in U_{a,\delta} \cap \Gamma_{\bm S}\setminus \{a\}.
                \]
                \item As $n \to \infty,$
                \[
                    \bm P_a (z) = \left( \bm I + \boh(1) \right) \bm G(z)
                \]
                uniformly for $z \in \partial U_{a,\delta}$;
                \item $\bm P_a(z)$ remains bounded as $z \to a$.
            \end{enumerate}
    \end{rhp}

    For $\delta>0$ sufficiently small, the map
    \[
        \varphi_a(z) \defeq \left( \frac{3}{2} \phi_a(z)\right)^{\frac{2}{3}},\quad z\in U_{a,\delta},
    \]
    is conformal, and the solution of the RHP \ref{rhp:local_a} is given by
    \begin{align*}
        & \bm P_a(z) = 
            \bm{E}_a(z) \bm A\left( n^{\frac{2}{3}} \varphi_a(z) \right) \ee^{(n \phi_a(z) - \frac{1}{2} \log \sigma_n(z))\sp_3} \sp_3, \qquad \text{with}\\
            & {\bm E}_a(z) \defeq \bm G(z) \ee^{\frac{1}{2} \log \sigma_n(x) \sp_3} \sp_3 \bm U_0^{-1} \left( n^{\frac{2}{3}} \varphi_a(z) \right)^{\frac{\sp_3}{4}}.
    \end{align*}

    The precise asymptotics on $\partial U_{a,\delta}$ is
    \begin{equation}\label{eq:precise_asymptotics_P_a}
        \bm P_a(z) = \left( \bm I + \Boh(n^{-1}) \right) \bm G(z), \quad n \to \infty,
    \end{equation}
    where the $\Boh(n^{-1})$ term is uniform for $z$ on $\partial U_{a,\delta}$, uniform for $\sad \geq \sad_0$ for any $\sad_0\in \R$ fixed, and also uniform for $\tad$ in compacts of $(0,+\infty)$.
    
\subsection{The local parametrix near the origin through the model problem}\label{sec:local_0}\hfill

Fix a neighborhood $U_0$ of the origin, which will be taken appropriately small as needed along the way. Recall that we have not opened lenses around $x=0$, and instead have kept it as a fixed point in the opening of lenses process, see for instance Figure~\ref{fig:Opening_of_Lenses}. In order to cope with the remaining jumps of $\bm S$ near $x=0$ which are not uniformly decaying to the identity, we also need to construct a local parametrix at $x=0$. Concretely, this local parametrix is the solution to the following RHP.
\begin{rhp}\label{rhp:localP0}
    Find a matrix-valued function $\bm P_0: U_0 \setminus \Gamma_{\bm S} \to \C^{2 \times 2}$ that satisfies the following conditions:
    \begin{enumerate}[\rm (i)]
        \item $\bm P_0$ is analytic on $U_0\setminus \Gamma_{\bm S}$ and admits a continuous extension to $\overline{U_0} \setminus \Gamma_{\bm S}$.
        \item The matrix $\bm P_0$ has continuous boundary values $\bm P_{0,\pm}$ along $U_0 \cap \Gamma_{\bm S}$, and they are related by the identity $\bm P_{0,+}(x) = \bm P_{0,-}(x) \bm J_{\bm S}(x)$, where $\bm J_{\bm S}$ is as in \eqref{deff:JS}.
        
        \item As $n \to \infty$,
        \[
            \bm P_0(z) = \left( \bm I + \Boh(n^{-1})\right) \bm G(z),
        \]
        uniformly for $z\in  \partial U_0$, and uniformly for $\sad \geq \sad_0$, for any $\sad_0\in \R$ fixed.
        \item   $\bm P_0(z)$ remains bounded as $z \to 0$.
    \end{enumerate}
\end{rhp}

As usual in RHP literature, we construct the solution to this problem mapping it to a model RHP, in this case the model problem for $\bm \Phi$ from Section~\ref{sec:ModelProblem}. This shall be done with the help of the conformal map $\varphi_0$ defined in \eqref{deff:conformalphi}: we introduce,
\begin{equation}\label{eq:defH_LocalAt0}
\msf H_0(\zeta) 
    \defeq Q \left( \varphi_0^{-1} \left( \frac{\zeta}{n}\right)\right), \quad
\msf H(\zeta) 
    = \msf H_n(\zeta)\defeq \sad + n^2 \msf H_0 \left( \frac{\zeta}{n}\right), \quad
\uad \defeq \frac{\msf t}{\varphi_0'(0)^2}, \quad
\msf H_\infty(\zeta) 
    \defeq \sad + \uad \zeta^2.
\end{equation}
Observe that $\msf H$ is an admissible function in the sense of Definition~\ref{deff:admissible}. Thus, we may consider the corresponding solution to the model problem
\begin{equation}\label{eq:lambdanmodel0}
\bm\Phi_n(\zeta)\deff \bm\Phi(\zeta\mid \msf H=\msf H_n),\quad \text{with the correspondence} \quad \sigma_n(z)=\lambda_n(\zeta), \; \zeta = n\varphi_0(z),
\end{equation}
and we use it to construct $\bm P_0$ as
\begin{equation}\label{eq:def_P0}
\begin{aligned}
    & \bm P_0(z)\deff \bm E(z)\bm\Phi_n(n\varphi_0(z))\ee^{n\phi_b(z)\sp_3},\quad z\in U_0\setminus \Gamma,\quad \text{with} \\ 
    & \bm E(z)\deff \ee^{-\msf h_0\sp_3} \bm M(z)\ee^{\pm \ii n\kappa\sp_3} (\bm U^\pm)^{-1}, \; \pm \im z>0.
\end{aligned}
\end{equation}
    At this point, we assume that the lenses were opened in such a way that $\varphi_0(z)$ maps the lens segments to segments of $\Sigma_{\bm \Phi} \setminus \R$ defined in Section \ref{sec:ModelProblem}. Of course, as $n \to \infty$, the image of the lenses by $n \phi_0(z)$ will be ``filling out'' $\Sigma_{\bm \Phi} \setminus \R$.

    \begin{theorem}
        The matrix-valued function $\bm P_0(z)$ defined in \eqref{eq:def_P0} is a solution to the RHP~\ref{rhp:localP0}.
    \end{theorem}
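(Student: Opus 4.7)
My plan is to verify the four items of RHP~\ref{rhp:localP0} in order, using the jumps and asymptotics of $\bm \Phi_n$ from Section~\ref{sec:ModelProblem}, the properties of the outer parametrix $\bm M$, and the conformality of $\varphi_0$ near the origin.

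First, I check analyticity. Although $\bm E$ is piecewise defined on the two half-planes, it extends analytically across $\R \cap U_0$: combining the constant jump $\bm M_+ = \bm M_-(\bm E_{12}-\bm E_{21})$ inherited from the fourth-root branch in \eqref{eq:M} with the matrix identity $(\bm E_{12}-\bm E_{21})e^{\ii n\kappa \sp_3} = e^{-\ii n\kappa \sp_3}(\bm U^-)^{-1}$ forces $\bm E_+ = \bm E_-$. Analyticity of $\bm P_0$ off $\Gamma_{\bm S}$ then follows, provided the lens arcs have been chosen so that $\varphi_0$ maps $\Gamma_{\bm S}^\pm \cap U_0$ into $\Sigma_{\pm 1} \cup \Sigma_{\pm 2}$, as announced in Section~\ref{sec:local_0}. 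To match the jumps on $\Gamma_{\bm S}\cap U_0$: along a lens arc only $\bm \Phi_n$ contributes, and its jump $\bm I + \lambda_n^{-1}\bm E_{21}$ is transported through $e^{n\phi_b \sp_3}$ via the correspondence $\lambda_n(n\varphi_0(z)) = \sigma_n(z)$ from \eqref{eq:lambdanmodel0} into exactly $\bm I + \sigma_n^{-1}e^{2n\phi_b}\bm E_{21}$. On $(a,0)\cup(0,b)$ inside $U_0$, the jump $\sigma_n \bm E_{12}-\sigma_n^{-1}\bm E_{21}$ of $\bm \Phi_n$ is conjugated trivially by $e^{n\phi_b \sp_3}$ thanks to $\phi_{b,+}+\phi_{b,-}=0$ from \eqref{eq:phibg}, yielding the prescribed $\bm J_{\bm S}$.

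The main step — and the reason for the specific form of $\bm E$ — is the matching on $\partial U_0$. For $n$ large, $\zeta = n\varphi_0(z)$ is large along $\partial U_0$, so the asymptotic expansion \eqref{eq:rhpPhiasympt} of $\bm \Phi_n$ applies. The key cancellation is that the defining relation between $\varphi_0$ and $\phi_b$ in \eqref{deff:conformalphi} ensures the combinations $\mp \ii n\varphi_0(z) + n\phi_b(z)$ reduce to the pure constants $\mp \ii n\kappa$ on the respective half-planes, so that $\bm \Phi_n(n\varphi_0(z))\,e^{n\phi_b(z)\sp_3}$ reduces, at leading order, to $\bm U^\pm e^{\mp \ii n\kappa \sp_3}$. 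The dressing $\bm E$ has been tailored precisely to cancel these factors and leave $e^{-\msf h_0 \sp_3}\bm M(z) = \bm G(z)\,e^{-\msf h(z)\sp_3}$; the estimate $e^{\msf h\sp_3} = \bm I + \Boh(n^{-1})$ on compacts from \eqref{eq:asymptotics_eh}, together with boundedness of $\bm G$ on $\partial U_0$, then yields $\bm P_0(z) = (\bm I + \Boh(n^{-1}))\bm G(z)$ uniformly on $\partial U_0$. The subleading $\bm \Phi_1/(n\varphi_0(z))$ term in \eqref{eq:rhpPhiasympt} contributes only $\Boh(n^{-1})$, which is absorbed into the error.

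Boundedness at $z=0$ is immediate: $\varphi_0(0)=0$ and $\bm \Phi_n$ is bounded at the origin by RHP~\ref{rhp:Phi}(iv), while $\bm E$ and $e^{n\phi_b(0)\sp_3}$ are finite. The main obstacle is the sign and branch bookkeeping underlying the piecewise definition of $\bm E$: the $\bm U^\pm$ accounts for the different asymptotic forms of $\bm \Phi_n$ on the two half-planes, while $e^{\pm \ii n\kappa \sp_3}$ compensates the constant offset between $\phi_b(0) = -\ii\kappa$ and $\varphi_0(0) = 0$. These features would be absent in a more standard local parametrix analysis.
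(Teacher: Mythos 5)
Your proof is correct and follows essentially the same route as the paper: analyticity of $\bm E$ via the jump $\bm M_+=\bm M_-(\bm E_{12}-\bm E_{21})$, transport of the jumps of $\bm\Phi_n$ through $\ee^{n\phi_b\sp_3}$ using $\phi_{b,+}+\phi_{b,-}=0$ and $\lambda_n(n\varphi_0(z))=\sigma_n(z)$, and the matching on $\partial U_0$ via the cancellation $\mp\ii n\varphi_0+n\phi_b=\mp\ii n\kappa$. The only cosmetic difference is that the paper computes $\bm E_-^{-1}\bm E_+=\bm I$ directly and then notes that boundedness of $\bm E$ at $z=0$ removes the isolated singularity there, a point you address only implicitly.
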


    \begin{proof}
        First of all, notice that $\bm E$ is analytic near the origin. Indeed, a direct calculation shows that the jump of $\bm E$ across any interval of the form $(-\delta,\delta)\setminus \{0\}$ with $\delta>0$ sufficiently small is
$$
\bm E_-(z)^{-1}\bm E_+(z)=\bm U^-\ee^{\ii n\kappa\sp_3}\bm J_{\bm M}(z)\ee^{\ii n\kappa\sp_3}\bm U^+=\bm U^-(\bm E_{12}-\bm E_{21})=\bm I.
$$
This shows that $\bm E$ has an isolated singularity at $z=0$. From the very definition of $\bm E$ we also know that it remains bounded as $z\to 0$. Thus, $\bm E$ is indeed analytic as claimed.

From this, one concludes that the jump of $\bm P_0$ is given by
\[
    \bm J_{\bm P_0}(z) 
        = \ee^{-n\phi_{b,-}(z)\sp_3}\bm J_{\bm\Phi_n}(n\varphi_0(z))\ee^{n\phi_{b,+}(z)\sp_3} = \bm J_{\bm S}(z),
\] 
where we use that $\phi_{b,-}(z)+\phi_{b,+}(z)=0,\ z\in(a,b)$.
Moreover, for $z\in\partial U_0$ and as $n\to\infty$, we see that $\zeta=n\varphi_0(z)\to \infty$, and from the asymptotics of $\bm \Phi_n$ given in RHP~\ref{rhp:Phi}--(iii) we obtain
$$
\bm P_0(z)=\ee^{-\msf h_0\sp_3}\bm M(z)\ee^{\pm \ii n\kappa\sp_3}(\bm U^\pm)^{-1}(\bm I+\Boh(n^{-1}))  \bm U^\pm \ee^{\mp \ii \zeta\sp_3}\ee^{n\phi_b(z)\sp_3}.
$$
Now, from the definition of $\varphi_0$ in \eqref{deff:conformalphi} we obtain that $\mp \ii \zeta+n\phi_b(z)=\mp \ii n\kappa$, and the identity above updates to
\[ \begin{split}
    \bm P_0(z)
        &=\ee^{-\msf h_0\sp_3}\bm M(z)\ee^{\pm \ii n\kappa\sp_3}(\bm U^\pm)^{-1}(\bm I+\Boh(n^{-1}))  \bm U^\pm \ee^{\mp \ii n\kappa\sp_3} \\
        &=\ee^{-\msf h_0\sp_3}\bm M(z) \left(\bm I+\Boh(n^{-1})\right) 
        = \bm G(z) \ee^{- \msf h(z) \sp_3} \left(\bm I+\Boh(n^{-1})\right),
\end{split}
\]
where in the second step we used that $\bm U^\pm$ and $\ee^{\mp\ii n\kappa\sp_3}$ are bounded. RHP~\ref{rhp:localP0}--(iii) now follows observing that $\bm M$ is bounded too for $z$ near the origin, and also using \eqref{eq:asymptotics_eh}.
    \end{proof}

For later convenience, we state a mild bound of $\bm P_0$ on a full real neighborhood of the origin.

\begin{lemma} \label{lem:bound_P0}
Both $\bm P_{0,+}(x)$ and $\bm P_{0,+}(x)^{-1}$ remain bounded as $n\to \infty$, uniformly for $x \in U_0 \cap \R$ and uniformly for $\sad\geq s_0$, for any $\sad_0\in \R$.
\end{lemma}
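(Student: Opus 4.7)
The plan is to exploit the explicit decomposition \eqref{eq:def_P0} of $\bm P_0$ and to bound each of its three factors $\bm E$, $\bm \Phi_n(n\varphi_0(\cdot))$ and $\ee^{n\phi_b(\cdot)\sp_3}$ separately and uniformly in $n$ and $\sad\geq \sad_0$. First I would check that the prefactor $\bm E$ is uniformly bounded on $U_0\cap \R$. By the analyticity argument established in the paragraph just above, $\bm E$ extends analytically across $\R\cap U_0$, so it suffices to verify that each of its building blocks is uniformly bounded on a compact neighborhood of the origin: $\bm M$ is analytic and bounded near $z=0$; $(\bm U^\pm)^{-1}$ is constant; $\ee^{\pm \ii n\kappa\sp_3}$ is unitary since $\kappa\in \R$; and Proposition~\ref{prop:h_n_asymptotics} gives $\msf h_0=\Boh(n^{-1})$ uniformly in $\sad\geq \sad_0$ and in $\tad$ on compacts of $(0,+\infty)$, so that $\ee^{-\msf h_0\sp_3}$ is bounded as well.

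Next, the rightmost exponential is easy to control: identity \eqref{eq:eta_b_pm} shows that $\phi_{b,+}(x)\in \ii \R$ for $x\in (a,b)$, hence $\ee^{n\phi_{b,+}(x)\sp_3}$ has unit modulus entry-wise. The crux of the argument is the uniform control of the middle factor $\bm \Phi_{n,+}(n\varphi_0(x))$, and here I would invoke Remark~\ref{rem:bound phi_n} as the key input. By Proposition~\ref{prop:properties_of_varphi_0}, $\varphi_0$ is a real analytic diffeomorphism near the origin, so for $x\in U_0\cap \R$ the argument $\zeta\defeq n\varphi_0(x)$ lies in $\R$. The remark then gives $\bm \Phi_n(\zeta)=\Boh(1)\,\ee^{\mp \ii\zeta\sp_3}$ uniformly in $n\geq n_0$ and $\sad\geq \sad_0$, including for its boundary values along $\R$; since $\ee^{\mp \ii\zeta\sp_3}$ is unitary for $\zeta\in \R$, this delivers the required uniform bound on $\bm \Phi_{n,+}(n\varphi_0(x))$, and multiplying the three-factor bounds completes the estimate for $\bm P_{0,+}(x)$.

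The bound on the inverse $\bm P_{0,+}(x)^{-1}$ would then follow from a determinant argument: every factor in \eqref{eq:def_P0} has unit determinant ($\det \bm M\equiv 1$ by construction, the exponentials are of traceless matrices, $\det \bm U^\pm=1$, and $\det \bm \Phi_n\equiv 1$ by Liouville, using that all the jumps in RHP~\ref{rhp:Phi} are unit-determinant and that the matching condition \eqref{eq:rhpPhiasympt} forces $\det \bm \Phi_n\to 1$ at infinity). Hence $\det \bm P_{0,+}(x)\equiv 1$, and for a $2\times 2$ matrix with determinant one the entries of the inverse are directly obtained by swapping diagonal entries and negating off-diagonal ones, so the uniform bound transfers to the inverse. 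The only potentially delicate step in the whole chain is the $n$-uniform control of $\bm \Phi_n$, but this has already been absorbed into Remark~\ref{rem:bound phi_n}.
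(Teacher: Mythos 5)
Your proof is correct and follows essentially the same route as the paper: decompose $\bm P_{0,+}$ via \eqref{eq:def_P0}, note that $\bm E$ is analytic and bounded near the origin, that $\phi_{b,+}$ is purely imaginary on $(a,b)$, and control the middle factor through the uniform bound of Remark~\ref{rem:bound phi_n} for real arguments $\zeta=n\varphi_0(x)$. The only (harmless) differences are that the paper first passes to $\bm\Phi_{\infty,+}$ via Theorem~\ref{thm:PhinPhiinfty} before invoking the remark, whereas you bound $\bm\Phi_{n,+}$ directly, and that you make explicit the unit-determinant argument for $\bm P_{0,+}^{-1}$, which the paper leaves implicit.
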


\begin{proof}
    We start by using the uniform convergence given by Theorem \ref{thm:PhinPhiinfty} for $\zeta = n \varphi_0(x)$ to write
    \[
        \bm P_{0,+}(x) = \bm E_n(x) \bm \Phi_{\infty,+}( n \varphi_{0}(x)) \ee^{n \phi_{b,+}(z) \sp_3} + \Boh(\ee^{- \sad }n^{- \varepsilon}),
    \]
    where we used that $\bm E_n(z)$ is bounded around $z=0$ and that $\phi_{b,+}(x)$ is purely imaginary for $x \in (-\infty, b)$. 
    
    By Remark \ref{rem:bound phi_n}, $\bm \Phi_{\infty,+}(\zeta)$ is bounded for $ \zeta \in \R$. Since $n \varphi_0(x) \in \R$ for every $x \in U_0 \cap \R$, the result follows.
\end{proof}

\subsection{Final transformation and small norm theory}\hfill 

For the final transformation, let us set
$$
U\deff U_a\cup U_b\cup U_0,\quad \Gamma_{\bm R}\deff (\Gamma_{\bm S}\cup \partial U)\setminus ( U\cup [a,b]),
$$
where we recall that $\Gamma_{\bm S}$ is the jump contour for $\bm S$ (see Figure \ref{fig:Opening_of_Lenses}),
and introduce 
\begin{equation}\label{deff:PPoPaPb}
\bm P(z)\deff 
\begin{dcases}
    \bm P_0(z),& z\in U_0, \\ 
    \bm P_a(z), & z\in U_a, \\ 
    \bm P_b(z), & z\in U_b.
\end{dcases}
\end{equation}

The final transformation then takes the form
\[
    \bm R(z) \deff \begin{dcases}
        \bm S(z) \bm P(z)^{-1}, \quad &z \in U\setminus \Gamma, \\
        \bm S(z) \bm G(z)^{-1}, &\text{elsewhere on }\C\setminus \Gamma_{\bm R}.
    \end{dcases}
\]

Both $\bm S$ and $\bm G$ have the same jumps on $(a,b)$. Likewise, both $\bm S$ and $\bm P_j$, $j=0,a,b$, have the same jumps inside $U_j$. These jumps cancel one another in the construction of $\bm R$, so that $\bm R$ is analytic across $[a,b]\cup (\Gamma_{\bm S} \cap U)$, and therefore it has jumps precisely across $\Gamma_{\bm R}$.

As a consequence, we obtain that $\bm R$ satisfies the following RHP.

  \begin{rhp}
        Seek for a matrix-valued function $\bm R : \C \setminus \Gamma_{\bm R} \to \C^{2 \times 2}$ satisfying the following.
        \begin{enumerate}[\rm (i)]
            \item $\bm R$ is analytic.
            \item $\bm R$ has continuous boundary values $\bm R_{\pm}$ along $\Gamma_{\bm R}$, that are related by $\bm R_+(z) = \bm R_-(z) \bm J_{\bm R}(z)$, where
            \begin{equation}\label{deff:JR}
                \bm J_{\bm R}(z) \deff
                    \begin{dcases}
                        \bm G(z) \bm J_{\bm S}(z)\bm G(z)^{-1}, & z\in \Gamma_{\bm R}\setminus \partial U, \\
                        \bm P(z)\bm G(z)^{-1}, & z\in \partial U.
                    \end{dcases}
            \end{equation}
            \item As $z \to \infty$, $\bm R$ behaves as 
            \[
                \bm R(z) = \bm I + \Boh \left(z^{-1} \right)
            \]
        \end{enumerate}
    \end{rhp}

To conclude the asymptotic analysis, we follow the usual small-norm theory path, and now prove that the jump matrix for $\bm R$ is asymptotically close to the identity matrix. We do it in the two separate lemmas that follow.

\begin{lemma}
    The estimate
    $$
    \|\bm J_{\bm R}-\bm I\|_{L^1\cap L^\infty (\partial U)}=\Boh(n^{-1}),\quad n\to\infty,
    $$
    is valid uniformly for $\sad\geq \sad_0$, for any $\sad_0\in \R$ fixed, and uniformly for $\tad$ in compacts of $(0,+\infty)$.
\end{lemma}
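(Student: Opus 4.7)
The jump contour $\partial U=\partial U_a\cup\partial U_b\cup\partial U_0$ is a disjoint union of three circles of fixed (finite) radii $\delta$, so its total arc length is bounded independently of $n,\sad,\tad$. Therefore the $L^1$ norm on $\partial U$ is dominated by the $L^\infty$ norm, and it suffices to prove that $\|\bm J_{\bm R}-\bm I\|_{L^\infty(\partial U)}=\Boh(n^{-1})$ with the claimed uniformity. Recall from \eqref{deff:JR} that on $\partial U$, $\bm J_{\bm R}(z)=\bm P(z)\bm G(z)^{-1}$, so the estimate splits into three independent contributions, one for each boundary circle.

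On $\partial U_a$ and $\partial U_b$, the estimates $\bm P_a(z)\bm G(z)^{-1}=\bm I+\Boh(n^{-1})$ and $\bm P_b(z)\bm G(z)^{-1}=\bm I+\Boh(n^{-1})$ are precisely \eqref{eq:precise_asymptotics_P_a} and \eqref{eq:precise_asymptotics_P_b}, which were obtained by plugging the full Airy asymptotics \eqref{eq:airy_asymptotics} into the explicit expression for $\bm P_a,\bm P_b$ and using that on $\partial U_{a,b}$ the deformation factor satisfies $\sigma_n(z)=1+\Boh(\ee^{-\eta n^2})$ together with \eqref{eq:asymptotics_eh}. As already noted after \eqref{eq:precise_asymptotics_P_a}--\eqref{eq:precise_asymptotics_P_b}, the implicit constants are uniform for $\sad\geq\sad_0$ and for $\tad$ in compacts of $(0,+\infty)$, so these two boundaries contribute the claimed $\Boh(n^{-1})$.

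It remains to handle $\partial U_0$, which is the new feature of our analysis. Starting from the definition \eqref{eq:def_P0} and unwrapping, the computation carried out in the proof that $\bm P_0$ solves RHP~\ref{rhp:localP0} yields, for $z\in\partial U_0$,
\[
\bm P_0(z)\bm G(z)^{-1}=\ee^{-\msf h_0\sp_3}\bm M(z)\ee^{\pm\ii n\kappa\sp_3}(\bm U^\pm)^{-1}\bigl[\bm \Phi_n(n\varphi_0(z))\bm U^\pm \ee^{\mp\ii n\varphi_0(z)\sp_3}\bigr]\bm U^\pm\ee^{\mp\ii n\kappa\sp_3}\bm M(z)^{-1}\ee^{\msf h(z)\sp_3}.
\]
On $\partial U_0$ the argument $\zeta=n\varphi_0(z)$ tends to infinity uniformly, and Theorem~\ref{thm:PhinPhiinfty} (combined with the asymptotics \eqref{eq:rhpPhiasympt} for $\bm\Phi_\infty$) gives $\bm\Phi_n(\zeta)\bm U^\pm\ee^{\mp\ii\zeta\sp_3}=\bm I+\Boh(n^{-1})$ in $L^\infty$, with constants uniform in $\sad\geq\sad_0$ and in $\tad$ in compacts of $(0,+\infty)$, because $\tad$ only enters $\msf H_0$ through its second derivative at the origin and $\varphi_0$ depends only on $V$. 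The remaining factors $\bm M,\bm M^{-1},\bm U^\pm,\ee^{\pm\ii n\kappa\sp_3}$ are bounded uniformly on $\partial U_0$ (the exponentials in $\kappa$ are oscillatory, the map $\bm M$ is bounded away from $a,b$), and $\ee^{\pm\msf h\sp_3}=\bm I+\Boh(n^{-1})$, $\ee^{\pm\msf h_0\sp_3}=\bm I+\Boh(n^{-1})$ by \eqref{eq:estimates_h0_h1} and \eqref{eq:estimate_h}, again uniformly in the parameters. Multiplying out gives $\bm P_0(z)\bm G(z)^{-1}=\bm I+\Boh(n^{-1})$ on $\partial U_0$.

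The main technical point of this argument is not the algebraic manipulation but rather the need to keep track of the uniformity in $\sad$ and $\tad$ throughout. This uniformity is inherited from the corresponding statement in Theorem~\ref{thm:PhinPhiinfty} (whose proof is tailored to expose it) and from Proposition~\ref{prop:h_n_asymptotics}; no new estimate is required. Combining the three boundary contributions yields the claimed $L^1\cap L^\infty(\partial U)$ bound.
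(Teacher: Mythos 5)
Your argument is correct and follows essentially the same route as the paper: reduce the $L^1$ bound to the $L^\infty$ bound using that $\partial U$ has finite length, then invoke the matching conditions \eqref{eq:precise_asymptotics_P_a}, \eqref{eq:precise_asymptotics_P_b} and RHP~\ref{rhp:localP0}(iii) on the three circles, with the uniformity in $\sad$ and $\tad$ inherited from Theorem~\ref{thm:PhinPhiinfty} and Proposition~\ref{prop:h_n_asymptotics}. (A minor slip: in your display for $\bm P_0(z)\bm G(z)^{-1}$ the bracketed factor should be $\bm \Phi_n(n\varphi_0(z))\,\ee^{\pm\ii n\varphi_0(z)\sp_3}(\bm U^{\pm})^{-1}$ and the rightmost factors should read $\ee^{-\msf h(z)\sp_3}\bm M(z)^{-1}\ee^{\msf h_0\sp_3}$; this does not affect the conclusion since every conjugating factor is uniformly bounded.)
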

\begin{proof}
    The claim on the $L^\infty$ norm follows from the definition of $\bm P$ in \eqref{deff:PPoPaPb} and the asymptotics in \eqref{eq:precise_asymptotics_P_b}, \eqref{eq:precise_asymptotics_P_a} and RHP~\ref{rhp:localP0}. The claim on the $L^1$ norm then follows from the $L^\infty$ norm simply because $\partial U$ is a bounded set.
\end{proof}

\begin{lemma}
    Given $\sad_0\in \R$, there exists $\eta>0$ such that the estimate
    $$
    \|\bm J_{\bm R}-\bm I\|_{L^1\cap L^\infty (\Gamma_{\bm R}\setminus \partial U)}=\Boh(\ee^{-\eta n}),\quad n\to\infty,
    $$
    is valid uniformly for $\sad\geq \sad_0$, and also uniformly for $\tad$ in compacts of $(0,+\infty)$.
\end{lemma}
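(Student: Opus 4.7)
The plan is to reduce the desired estimate to a bound on $\bm J_{\bm S}-\bm I$ on $\Gamma_{\bm R}\setminus\partial U$, via the conjugation formula
\[
\bm J_{\bm R}(z)-\bm I=\bm G(z)\bigl(\bm J_{\bm S}(z)-\bm I\bigr)\bm G(z)^{-1},\qquad z\in\Gamma_{\bm R}\setminus\partial U,
\]
inherited from \eqref{deff:JR}. Since $\bm G=\ee^{-\msf h_0\sp_3}\bm M\ee^{\msf h\sp_3}$, the matrix $\bm M$ is bounded on $\Gamma_{\bm R}\setminus\partial U$ (which stays away from $\{a,b\}$), and $\msf h_0,\msf h=\Boh(n^{-1})$ by Proposition~\ref{prop:h_n_asymptotics}, both $\bm G$ and $\bm G^{-1}$ are uniformly bounded there, uniformly in $\sad\geq\sad_0$ and in $\tad$ on compacts of $(0,+\infty)$. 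Hence it suffices to prove an $L^1\cap L^\infty$ estimate of order $\Boh(\ee^{-\eta n})$ for $\bm J_{\bm S}-\bm I$.

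I would then split $\Gamma_{\bm R}\setminus\partial U$ into the unbounded real part $(-\infty,a-\delta)\cup(b+\delta,+\infty)$ and the compact lens part $(\Gamma_{\bm S}^+\cup\Gamma_{\bm S}^-)\setminus\overline{U}$. The real part is covered by Proposition~\ref{prop:estimateJS} applied to any bounded open set $G\supset[a,b]\cup\overline{U}$; the $L^1$ tails at $\pm\infty$ cause no trouble since $V$ has positive leading coefficient of even degree, so $\phi_a(x),\phi_b(x)$ grow polynomially and $\sigma_n(x)\ee^{-2n\phi_{a,b}(x)}$ is exponentially small in integral.

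The main work is on the lens part, where the jump \eqref{deff:JS} equals $\bm I+(\ee^{2n\phi_b(z)}/\sigma_n(z))\bm E_{21}$. The standard consequences of regularity of $\mu_V$ yield $\Re\phi_b(z)\leq -c<0$ on the compact set $(\Gamma_{\bm S}^\pm)\setminus\overline{U}$, so $|\ee^{2n\phi_b(z)}|\leq\ee^{-2cn}$. The delicate point is to keep $1/\sigma_n(z)=1+\ee^{-\sad-n^2Q(z)}$ uniformly bounded, and this is exactly where the design of the lens enters: Proposition~\ref{prop:polefreeregions}, together with the $\pi/8$-angle choice of the lens at the origin, gives $\Re Q(z)\geq c(\varepsilon)>0$ on the lens inside $D_\delta$, while analyticity of $Q$ and the strict positivity $Q>0$ on $\R\setminus\{0\}$ provide $\Re Q(z)\geq c'>0$ on the compact piece of the lens outside $D_\delta$, after drawing the lens sufficiently close to $\R$ there. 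Hence $|\ee^{-\sad-n^2Q(z)}|\leq\ee^{-\sad_0-c''n^2}\to 0$ uniformly in the parameters, and $|1/\sigma_n(z)|\leq 2$ for large $n$.

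Combining the two ingredients gives $|\bm J_{\bm S}(z)-\bm I|=\Boh(\ee^{-\eta n})$ uniformly on $\Gamma_{\bm R}\setminus\partial U$, and conjugation by $\bm G$ transfers the bound to $\bm J_{\bm R}-\bm I$; the $L^\infty$ estimate is immediate, and the $L^1$ estimate follows because the lens portion is bounded and the real tails are already controlled. The main obstacle is verifying uniform boundedness of $1/\sigma_n$ on the whole lens: this requires gluing the local $\pi/8$-angle argument near the origin with the global $\Re Q>0$ argument away from it, which in turn forces the lens to have been designed consistently with both conditions. Once that check is in place, no further subtlety arises.
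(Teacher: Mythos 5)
Your proof is correct and follows essentially the same route as the paper: conjugating $\bm J_{\bm S}-\bm I$ by the uniformly bounded global parametrix $\bm G$ and invoking the exponential smallness of $\bm J_{\bm S}-\bm I$ away from $[a,b]$ (Proposition~\ref{prop:estimateJS}), whose proof you in fact supply in the detail the paper omits. The only superfluous step is the $\pi/8$-angle argument near the origin: the lens arcs inside $U_0$ are excluded from $\Gamma_{\bm R}\setminus\partial U$, so for this lemma only the bound $\Re Q>0$ on the lens away from the origin is needed.
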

\begin{proof}
The claim follows from Proposition~\ref{prop:estimateJS} and the fact that $\bm G$ is bounded on compact subsets of $\C\setminus [a,b]$, we skip details.
\end{proof}

As a consequence, we finally conclude the small norm theory for $\bm R$.

\begin{theorem}\label{thm:smallnormROPs}
    The estimate
    $$
    \|\bm R-\bm I\|_{L^\infty(\C\setminus \Gamma_{\bm R})}=\Boh(n^{-1})\qquad \text{and}\qquad \|\bm R_\pm-\bm I\|_{L^2(\Gamma_{\bm R})}=\Boh(n^{-1}),\quad n\to \infty,
    $$
    are valid uniformly for $\sad\geq \sad_0$, for any $\sad_0\in \R$ fixed.
\end{theorem}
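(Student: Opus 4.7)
The plan is to invoke the standard small-norm theory of Riemann--Hilbert problems, since the two preceding lemmas have already done all the serious work by producing the necessary estimates on the jump matrix $\bm J_{\bm R}$. First I would combine those two lemmas into a single bound
\[
\|\bm J_{\bm R}-\bm I\|_{L^1\cap L^2\cap L^\infty(\Gamma_{\bm R})}=\Boh(n^{-1}),\quad n\to\infty,
\]
where the $L^2$ estimate is obtained either directly (since $\partial U$ has finite length, the $L^\infty$ bound there gives an $L^2$ bound, and on $\Gamma_{\bm R}\setminus \partial U$ the bound is uniform exponentially small times an $L^\infty$ function on a curve of at most polynomial growth) or by interpolation between the $L^1$ and $L^\infty$ estimates. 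I would also emphasize that all implied constants in these estimates are uniform for $\sad\geq\sad_0$ and for $\tad$ in compacts of $(0,+\infty)$, precisely as in the hypotheses of the two lemmas.

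Next I would recast the RHP for $\bm R$ as a singular integral equation in the usual way. Writing $\bm R_-=\bm I+\bm \rho$ with $\bm\rho\in L^2(\Gamma_{\bm R})$, the jump condition $\bm R_+=\bm R_-\bm J_{\bm R}$ together with $\bm R(\infty)=\bm I$ translates, via the Plemelj--Sokhotski formulas, into
\[
\bm \rho=C_-\bigl[(\bm I+\bm\rho)(\bm J_{\bm R}-\bm I)\bigr],
\]
where $C_-$ is the minus boundary value of the Cauchy operator on $\Gamma_{\bm R}$. Since $C_-$ is bounded on $L^2(\Gamma_{\bm R})$ (the contour $\Gamma_{\bm R}$ is a finite union of smooth arcs with at most transversal self-intersections) and since the $L^\infty$ norm of $\bm J_{\bm R}-\bm I$ is $\Boh(n^{-1})$, the operator $\bm\rho\mapsto C_-[\bm\rho(\bm J_{\bm R}-\bm I)]$ is a contraction on $L^2(\Gamma_{\bm R})$ for $n$ large. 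A Neumann-series / fixed-point argument then produces a unique solution $\bm\rho$ with
\[
\|\bm\rho\|_{L^2(\Gamma_{\bm R})}\leq C\|\bm J_{\bm R}-\bm I\|_{L^2(\Gamma_{\bm R})}=\Boh(n^{-1}),
\]
which is exactly the claimed $L^2$ bound for $\bm R_--\bm I$; the analogous bound for $\bm R_+-\bm I$ follows from $\bm R_+-\bm I=\bm\rho+(\bm I+\bm\rho)(\bm J_{\bm R}-\bm I)$ and H\"older's inequality combined with the $L^\infty$ bound on $\bm J_{\bm R}-\bm I$.

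For the $L^\infty$ bound on $\bm R-\bm I$ away from $\Gamma_{\bm R}$, I would use the integral representation
\[
\bm R(z)-\bm I=\frac{1}{2\pi\ii}\int_{\Gamma_{\bm R}}\frac{(\bm I+\bm \rho(\zeta))(\bm J_{\bm R}(\zeta)-\bm I)}{\zeta-z}\,\dd \zeta,\quad z\in\C\setminus\Gamma_{\bm R},
\]
which follows from Plemelj--Sokhotski and the normalization $\bm R(\infty)=\bm I$. The Cauchy--Schwarz inequality combined with the $L^2$ bounds on $\bm\rho$ and $\bm J_{\bm R}-\bm I$ yields $\|\bm R-\bm I\|_{L^\infty(\C\setminus \Gamma_{\bm R})}=\Boh(n^{-1})$ away from a fixed neighborhood of $\Gamma_{\bm R}$; for $z$ near $\Gamma_{\bm R}$ one obtains the same bound from the $L^2$ control of $\bm R_\pm-\bm I$ together with the $L^\infty$ control of $\bm J_{\bm R}-\bm I$ and standard estimates for the Cauchy transform of a compactly supported $L^2$ density.

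There is no substantial obstacle in this step, as the whole argument is entirely standard once the jump estimates are in hand; see for instance \cite[Section~7.5]{deift_book}. The only point that requires a brief justification is the uniformity of the operator norm of $C_-$ and of the contraction constant with respect to the parameters $\sad$ and $\tad$, which is immediate since the contour $\Gamma_{\bm R}$ itself does not depend on these parameters and the smallness of $\bm J_{\bm R}-\bm I$ is uniform in them by the preceding lemmas.
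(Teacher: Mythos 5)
Your proposal is correct and follows exactly the standard small-norm argument that the paper itself invokes (without writing it out) via the two preceding jump estimates and the reference to \cite[Section~7.5]{deift_book}. The only slightly imprecise point is the bound for $z$ near $\Gamma_{\bm R}$: the clean way to get it is to deform the contour using the analyticity of $\bm J_{\bm R}$ in a neighborhood of each arc, rather than appealing to boundedness of the Cauchy transform of an $L^2$ density (which fails in general near the contour); this is standard and does not affect the validity of the argument.
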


We now move to drawing the main conclusions of the asymptotic analysis.

\section{Consequences of the asymptotic analysis}\label{sec:spoils}

Having completed the asymptotic analysis of the RHP for OPs we prove our main results in this section.

\subsection{Asymptotics for the kernel: proof of Theorem~\ref{thm:CorrelationKernelAsymptotics}}\label{sec:KernelAsymptotics}\hfill 

We unwrap here all the transformations $\bm Y\mapsto \bm T\mapsto \bm S\mapsto \bm R$ to express $\bm Y_+$ in terms of the solution $\bm \Phi$ to the model problem.

For $z\in \R$ in a neighborhood of the origin, this unwrapping unravels the identity
\begin{equation}\label{eq:unwrapYkernel}
    \bm Y_+(z) 
     =\ee^{-n\ell/2\sp_3}\bm R_+(z)\bm P_{0,+}(z)\left(\bm I+\dfrac{\ee^{2n\phi_{b,+}(z)}}{\sigma_n(z)}\bm E_{21}\right)\ee^{n(\msf g_+(z)+\ell/2)\sp_3}.  
     \end{equation}

To lighten notation, let us introduce for a moment the unweighted version of the kernel, namely
\begin{equation}\label{eq:KnhatKn}
\wh{\msf K}_n(x,y)\deff \frac{2\pi \ii (x-y)}{\sqrt{\omega_n(x)}\sqrt{\omega_n(y)}}\msf K_n(x,y).
\end{equation}

For $x,y \in \R$ and near $z=0$, \eqref{eq:relKnRHPY} and \eqref{eq:unwrapYkernel} combined show that 
\begin{equation}\label{eq:esthatK1}
\wh{\msf K}_n(x,y)
    =\ee^{n(\msf g_+(x)+\msf g_+(y)+\ell)}
    \left[\left(\bm I-\frac{\ee^{2n\phi_{b}(y)}}{\sigma_n(y)}\bm E_{21}\right)\bm P_0(y)^{-1}\bm R(y)^{-1}\bm R(x)\bm P_0(x)\left(\bm I+\frac{\ee^{2n\phi_{b}(x)}}{\sigma_n(x)}\bm E_{21}\right)\right]_{21,+}. 
\end{equation}

Write
\begin{equation}\label{eq:Rtrick}
\bm R(y)^{-1}\bm R(x)
    =\bm I+\bm R(y)^{-1}\left(\bm R(x)-\bm R(y)\right),
\end{equation}
Cauchy's integral formula and Theorem~\ref{thm:smallnormROPs} gives the estimate,
$$
\bm R(x)-\bm R(y)=\frac{1}{2\pi\ii}(x-y)\oint_\gamma\frac{\bm R(w)}{(w-x)(w-y)}\dd w=\Boh\left(\frac{x-y}{n}\right).
$$
where $\gamma$ is any positively oriented closed contour surrounding $x$  and $y$.
This way, Equation~\eqref{eq:esthatK1} becomes
\begin{multline*}
    \ee^{-n(\mathsf{g}_+(x)+\mathsf{g}_+(y)+\ell)}
    \wh{\mathsf{K}}_n\left(x,y\right)
    = \left[ \left( \bm I-\frac{\ee^{2n\phi_{b}(y)}}{\sigma_n(y)}\bm E_{21} \right) \bm P_0(y)^{-1} \bm P_0(x) \left( \bm I+\frac{\ee^{2n\phi_{b}(x)}}{\sigma_n(x)}\bm E_{21} \right) \right]_{21,+} \\
    + \left[ \left( \bm I-\frac{\ee^{2n\phi_{b}(y)}}{\sigma_n(y)}\bm E_{21} \right) \bm P_0(y)^{-1} \Boh\left(\frac{x-y}{n}\right) \bm P_0(x)\left( \bm I+\frac{\ee^{2n\phi_{b}(x)}}{\sigma_n(x)}\bm E_{21} \right) \right]_{21,+}.
\end{multline*}
By Lemma \ref{lem:bound_P0}, both $\bm P_{0,+}(x)$ and $\bm P_{0,+}(y)^{-1}$ remain bounded for $x,y$ real near the origin.
Moreover, $\phi_{b,+}$ is purely imaginary on $(-\infty,b)$, implying that $\ee^{2n\phi_b(x)}/\sigma_n(x)$ and $\ee^{2n\phi_b(x)}/\sigma_n(x)$ are bounded for $x$ and $y$ real close to $0$. Hence,
\[
    \left[ \left( \bm I-\frac{\ee^{2n\phi_b(y)}}{\sigma_n(y)}\bm E_{21} \right) \bm P_0(y)^{-1} \Boh\left(\frac{x-y}{n}\right) \bm P_0(x)\left( \bm I+\frac{\ee^{2n\phi_b(x)}}{\sigma_n(x)}\bm E_{21} \right) \right]_{21,+} = \Boh \left( \frac{x-y}{n}\right)
\]
where the $\Boh$ term is uniform for $x,y$ small.
By the definition of $\bm P_0$ (see \eqref{eq:def_P0}), the modified kernel can be written as
\begin{multline*}
    \ee^{-n(\mathsf{g}_+(x)+\mathsf{g}_+(y)+\ell)} \wh{\mathsf{K}}_n\left(x,y\right)
        =\ee^{n(\phi_{b,+}(x)+\phi_{b,+}(y))} \left[ \left( \bm I-\frac{\bm E_{21}}{\sigma_n(y)} \right) \bm \Phi_n(n\varphi_0(y))^{-1} \ee^{-\ii n\kappa\sp_3}\right. \\
    \left. \times  
     \bm M(y)^{-1} \bm M(x) \ee^{\ii n\kappa\sp_3} \bm \Phi_n(n\varphi_0(x)) \left( \bm I+\frac{\bm E_{21}}{\sigma_n(x)}\right) \right]_{21,+} + \Boh\left(\frac{x-y}{n}\right) .
    \end{multline*}

    A similar calculation to \eqref{eq:Rtrick} gives 
    \[
        \bm M(y)^{-1} \bm M(x) = \bm I + \Boh(x-y),
    \]
    and, using the expressions for $\phi_{b,+}$ and $\msf g_+$ in terms of $V$ in \eqref{eq:phibg} and \eqref{eq:phibV}, and the fact that $\phi_{b,+}$ is purely imaginary on $(-\infty, b)$, we rewrite above as
\begin{multline*}
     \ee^{ -\frac{n}{2} \left( V(x)+V(y)\right)} \wh{\msf K}_n(x,y) 
        = \left[ \left( \bm I-\frac{\bm E_{21}}{\sigma_n(y)} \right) \bm \Phi_n(n \varphi_0(y))^{-1} \bm \Phi_n(n \varphi_0(x)) \left( \bm I+\frac{\bm E_{21}}{\sigma_n(x)}\right) \right]_{21,+} 
        \\
        + \left[ \left( \bm I-\frac{\bm E_{21}}{\sigma_n(y)} \right) \bm \Phi_n(n \varphi_0(y))^{-1}   
        \Boh(x-y) \bm \Phi_n(n \varphi_0(x)) \left( \bm I+\frac{\bm E_{21}}{\sigma_n(x)}\right) \right]_{21,+} 
     + \Boh\left(\frac{x-y}{n}\right) .
\end{multline*}
Hence, moving back from $\wh{\msf K}_n$ to $\msf K_n$ (recall \eqref{eq:KnhatKn}),
\begin{multline*}
     \frac{{\msf K}_n(x,y)}{\sqrt{\sigma_n(x)} \sqrt{\sigma_n(y)}} 
        = \frac{1}{ 2 \pi \ii (x-y) } \left[ \left( \bm I-\frac{\bm E_{21}}{\sigma_n(y)} \right) \bm \Phi_n(n \varphi_0(y))^{-1} \bm \Phi_n(n \varphi_0(x)) \left( \bm I+\frac{\bm E_{21}}{\sigma_n(x)}\right) \right]_{21,+} \\
      \\ + \frac{1}{ 2 \pi \ii  } \left[ \left( \bm I-\frac{\bm E_{21}}{\sigma_n(y)} \right) \bm \Phi_n(n \varphi_0(y))^{-1} \Boh(1) \bm \Phi_n(n \varphi_0(x)) \left( \bm I+\frac{\bm E_{21}}{\sigma_n(x)}\right) \right]_{21,+} + \Boh\left(\frac{1}{n}\right) 
\end{multline*}
  
    Introduce local variables $\zeta,\xi$ by
    $$
    \zeta= n\varphi_0'(0)x=n\pi\phi_V(0)x \qquad \text{and}\qquad  \xi = n\varphi_0'(0)y=n\pi\phi_V(0)y,
    $$
where we used Proposition~\ref{prop:properties_of_varphi_0} for the last identities in each equation.
Observe that these new variables satisfy
$$
x = \varphi_0^{-1}\left(\frac{\zeta}{n}\right)+\Boh\left(\frac{1}{n^2}\right)\quad \text{and}\quad y=\varphi_0^{-1}\left(\frac{\xi}{n}\right)+\Boh\left(\frac{1}{n^2}\right),
$$
uniformly for $\zeta,\xi$ in compacts. Also
\[
    n \varphi_0(x) = \zeta + \Boh\left( \frac{1}{n}\right), \quad
    n^2 Q(x) = \uad \zeta^2 + \Boh\left(
    \frac{1}{n} \right),
\]
for $\zeta$ in compacts. Recall also that $\lambda_n$ was introduced in \eqref{eq:lambdanmodel0}.

With the notations we just discussed, we conclude
\begin{multline*}
\frac{1}{n\pi \phi_V(0)} \msf K_n\left( \frac{\zeta}{n \pi \phi_V(0)},\frac{\xi}{ n\pi \phi_V(0)} \right)\\ 
    = \frac{\sqrt{{\lambda}_n(\zeta)}\sqrt{{\lambda}_n(\xi)}}{2 \pi \ii (\zeta - \xi)} 
    \left[ \left( \bm I-\frac{\bm E_{21}}{{\lambda}_n(\xi)} \right) \bm \Phi_n(\xi)^{-1} \bm \Phi_n(\zeta) \left( \bm I+\frac{\bm E_{21}}{{\lambda}_n(\zeta)}\right)\right]_{21,+} + \Boh \left( \frac{1}{n}\right).
\end{multline*}

From the very definition of ${\lambda}_n(\zeta)=\sigma_n(x)$,
$$
    {\lambda}_n(\zeta) - \lambda_\infty(\zeta) = \Boh\left(\frac{\ee^{-\sad}}{n}\right),\quad \text{where we have set}\quad \lambda_\infty(\zeta) \defeq \frac{1}{1 + \ee^{- \sad -  \uad \zeta^2}},
$$
and where the error term is uniform for $\zeta$ in compacts and also uniform for $\sad\geq \sad_0$ for any $\sad_0\in \R$. 

Using this last estimate on $\lambda_n$ and Equation \eqref{eq:asymptotics_phi_n} we find the final asymptotic expression for $\msf K_n$, namely
\begin{multline*}
\frac{1}{n\pi \phi_V(0)} \msf K_n\left( \frac{\zeta}{n \pi \phi_V(0)},\frac{\xi}{ n\pi \phi_V(0)} \right)\\ 
        = \frac{\sqrt{\lambda_\infty(\zeta)}\sqrt{\lambda_\infty(\xi)}}{2 \pi \ii (\zeta - \xi)} 
    \left[ \left( \bm I-\frac{\bm E_{21}}{\lambda_\infty(\xi)} \right) \bm \Phi_\infty(\xi)^{-1} \bm \Phi_\infty(\zeta) \left( \bm I+\frac{\bm E_{21}}{\lambda_\infty(\zeta)}\right)\right]_{21,+} 
 + \Boh \left( \frac{1}{n^{1-\varepsilon}}\right).
\end{multline*}

Thanks to Equation \eqref{eq:Limiting_Kernel}, the limiting kernel above is precisely $\msf K_\infty$ as claimed in Theorem~\ref{thm:CorrelationKernelAsymptotics}. 

The nonlocal equation for $\msf \Phi$ is the same as \eqref{eq:nonlocalPDE}. Finally, the convergence \eqref{eq:sinekernellimit} is a consequence of \eqref{eq:Limiting_Kernel}, Theorem~\ref{thm:PhiInfinityConv} and \eqref{eq:SinekernelRHPcharac}. These considerations conclude the proof of Theorem~\ref{thm:CorrelationKernelAsymptotics}.

\subsection{Asymptotics for recurrence coefficients: proof of Theorem~\ref{thm:recurrence coeffs}}\hfill 

We now compute asymptotics for the recurrence coefficients $\gamma_n(\sad)$ and $\beta_n(\sad)$, proving Theorem \ref{thm:recurrence coeffs}.

Our starting point is the relation \eqref{eq:rec_coef_from_RHP}. Unwrapping the transformations $\bm Y\mapsto \bm T\mapsto \bm S\mapsto \bm R$ performed in the asymptotic analysis, we obtain the asymptotic expansion of $\bm Y$ as $z\to \infty$ in the form
    \[
        \begin{split}
        \bm Y(z)
            &= \ee^{- n \frac{\ell}{2} \sp_3} \bm R(z) \bm G(z) \ee^{n (\msf g(z) + \frac{\ell}{2 })\sp_3}, \\
            &= \ee^{- n \frac{\ell}{2} \sp_3} \left( \bm I + \frac{\bm R_1}{z} + \frac{\bm R_2}{z^2} + \Boh \left( \frac{1}{z^3}\right)\right) \left( \bm I + \frac{\bm G_1}{z} + \frac{\bm G_2}{z^2} + \Boh\left( \frac{1}{z^3}\right) \right) \ee^{n (\msf g(z) + \frac{\ell}{2 })\sp_3}.
        \end{split} 
    \]
    Thus
    \[
    \begin{split}
         \bm Y_1 
         &= -\res_{z = \infty} \left( \bm Y(z) z^{-n \sp_3} \right)  
        =  \ee^{- n \frac{\ell}{2} \sp_3}
 \left( \bm R_1 + \bm G_1 + n \msf g_1 \sp_3\right) \ee^{n \frac{\ell}{2} \sp_3},
 \end{split}
    \]
where we used the expansion
$$
\ee^{\msf g(z)\sp_3}=
z^{n\sp_3}\left( \bm I + \frac{n \msf g_1\sp_3}{z} ++ \frac{1}{z^2}\left(\frac{n^2}{2} \msf g_1^2\bm I+n\msf g_2\sp_3\right)+\Boh(z^{-3})\right).
$$

    Similar computations yield an expression for $\bm Y_2$, namely
    \[
    \begin{split}
         \bm Y_2 
        &= -\res_{z = \infty} \left( z\bm Y(z) z^{-n \sp_3} \right), \\ 
        &=  \ee^{- n \frac{\ell}{2} \sp_3}
 \left(\frac{n^2}{2} \msf g_1^2\bm I+n(\msf g_2+\msf g_1\bm G_1+\msf g_1\bm R_1)\sp_3+ \bm G_2 + \bm R_1\bm G_1+\bm R_2 \right) \ee^{n \frac{\ell}{2} \sp_3}.
  \end{split}
 \]
    Equations \eqref{eq:rec_coef_from_RHP} become
\begin{equation}\label{eq:gammanbetanexpansion}
    \gamma_n^2(\sad)=\left(\bm G_1+\bm R_1\right)_{12}\left(\bm G_1+\bm R_1\right)_{21}, \quad
    \beta_n(\sad)=
\frac{(\bm G_2+\bm R_1\bm G_1+\bm R_2)_{12}}{(\bm G_1+\bm R_1)_{12}}-(\bm G_1+\bm R_1)_{22}.
\end{equation}

    We start calculating $\bm G_1$ and $\bm G_2$. As $z \to \infty$,
    \[\begin{split}
        \bm G(z) 
        &= \ee^{- \msf h_0 \sp_3} \bm M (z) \ee^{\msf h(z) \sp_3} \\
        &= \ee^{- \msf h_0 \sp_3} \left( \bm I + \frac{\bm M_1}{z} + \frac{\bm M_2}{z^2} + \Boh\left( \frac{1}{z^3}\right)\right) \ee^{\msf h_0 \sp_3} \left( \bm I + \frac{\msf h_1 \sp_3 }{z} + \frac{ \frac{\msf h_1^2}{2} \bm I + \msf h_2 \bm \sigma_3 }{z^2} + \Boh \left( \frac{1}{z^3}\right)\right),
    \end{split}\]
    where $\msf h_0$ and $\bm M_1,\bm M_2$ are as in \eqref{eq:estimates_h0_h1} and \eqref{eq:asymptotics_M}, respectively.
    
    Therefore, introducing
    \begin{equation}\label{deff:hatG1hatG2}
    \begin{aligned}
        \widehat{\bm G}_1 &\defeq \widehat{\msf h}_0\, [\bm M_1, \sp_3] + \widehat{\msf h}_1 \sp_3 = -\frac{\ii (b-a)\wh{\msf h}_0}{2}\sp_1 + \wh{\msf h}_1\sp_3, \\
        \widehat{\bm G}_2 &\defeq \widehat{\msf h}_2 \sp_3 - \widehat{\msf h}_1 \frac{b-a}{4} \sp_2 +  \frac{\ii(b^2-a^2)\widehat{\msf h}_0}{4}  \sp_1,
    \end{aligned}
    \end{equation}
    the expansions
    \begin{equation}\label{deff:G1G2}
    \begin{aligned}
        \bm G_1 
            &= \ee^{- \msf h_0 \sp_3} \bm M_1 \ee^{\msf h_0 \sp_3} +  \msf h_1 \sp_3 
            = \bm M_1 + \frac{\widehat{\bm G}_1}{n} + \Boh(n^{-2}), \\
        \bm G_2 
            &= \frac{\msf h_1^2}{2}\bm I+\msf h_2\sp_3+\msf h_1\ee^{-\msf h_0\sp_3}\bm M_1\ee^{\msf h_0\sp_3}+\ee^{-\msf h_0\sp_3}\bm M_2\ee^{\msf h_0\sp_3} 
            = \bm M_2 + \frac{\widehat{\bm G}_2}{n} + \Boh(n^{-2}),
    \end{aligned}
    \end{equation}
    are valid.
  
 We now compute $\bm R_1$ and $\bm R_2$. The identity
    \[
        \bm R(z) = \bm I + \frac{1}{2 \pi \ii} \int_{\Gamma_\bm R} \bm R_-(w) \left( \bm J_{\bm R}(w) - \bm I \right) \frac{\dd w}{w-z}, 
    \]
    implies the expression
    \[
        \bm R_k 
            = - \frac{1}{2 \pi \ii} \int_{\Gamma_\bm R} \bm R_-(w) \left( \bm J_{\bm R}(w) - \bm I \right) w^{k-1} \dd w 
            = - \frac{1}{2 \pi \ii} \ointclockwise_{\partial U} (\bm J_{\bm R}(w) - \bm I) w^{k-1}\, \dd w 
            \left( \bm I +  \Boh \left( \frac{1}{n}\right) \right),
    \]    
    where we recall that $U=U_a\cup U_b\cup U_0$.
    To compute the integral above we need to study the local behavior of the integrand around the points $0$, $a$ and $b$. Using \eqref{eq:gpMGasympt}, 
    it is straightforward to check that
    \[
        \bm J_{\bm R}(z) - \bm I = \frac{\bm M(z)\bm J_1^{(n)}(z)\bm M(z)^{-1}}{n} 
        + \Boh(n^{-2}),
    \]
    where
    \begin{equation}\label{eq:J1n}
        \bm J_1^{(n)}(z) \defeq 
            \begin{dcases}
                \frac{1}{\varphi_0(z)}\ee^{\pm \ii n\kappa \sp_3}\left(\bm U^{\pm} \right)^{-1} \bm \Phi_{n,1} \bm U^{\pm} \ee^{\mp \ii n\kappa \sp_3} -\widehat{\msf h}(z) \sp_3 , \quad 
                &z \in U_0, \pm \im z>0, \\
                  \frac{\sp_3 \bm A_1 \sp_3 }{ \varphi_a(z)^{\frac{3}{2}}}, \quad 
                &z \in U_a, \\
                 \frac{\bm A_1 }{ \varphi_b(z)^{\frac{3}{2}}}, \quad 
                &z \in U_b,
            \end{dcases}
    \end{equation}
    and where $\bm A_1$ is as in RHP~\ref{rhp:airy}.
    The factor $\wh{\msf h}$ appearing above is the same as in Proposition~\ref{prop:h_n_asymptotics}. Observe that $\bm J_1^{(n)}$ still depends on $n$, but solely through the matrix $\bm \Phi_{n,1}$ and the oscillatory factors $\ee^{\mp \ii n\kappa \sp_3}$.

From these computations we deduce
\begin{equation}\label{eq:expansionRkIk}
    \bm R_k=
         \frac{\wh{\bm R}_k}{n}  + \Boh\left(\frac{1}{n^2}\right),
\end{equation}
with
\begin{equation}\label{deff:Ikp}
\wh{\bm R}_k\deff \msf I_0^{(k)}+\msf I_a^{(k)}+\msf I_b^{(k)},\quad \msf I_p^{(k)}\deff \frac{1}{2\pi \ii}\ointctrclockwise_{\partial U_p} \bm M(w)\bm J_1^{(n)}(w)\bm M(w)^{-1} w^{k-1}\dd w,\quad p=0,a,b, \; k=0,1.
\end{equation}

\begin{remark}
    The factors $\msf I_a^{(k)}$ and $\msf I_b^{(k)}$ yield contributions coming from the regular soft edges $a,b$, which in turn are contributions coming from Airy parametrices. These contributions are exactly the same that occur for unperturbed weights, that is, when we make $\sigma_n\equiv 1$. Their calculation have appeared before in the literature, although perhaps not so explicitly. For completeness, we now evaluate these contributions step by step.
\end{remark}

We now compute each of these integrals to leading order in $n$. Using \eqref{eq:gpMGasympt}, we expand
    \begin{align}
        \msf I_b^{(k)}
        &=\frac{1}{2\pi\ii}\ointctrclockwise \frac{1}{\varphi_b(w)^{3/2}}\bm U_0\left(\frac{w-b}{w-a}\right)^{\sp_3/4}\bm U_0^{-1}\bm A_1\bm U_0\left(\frac{w-b}{w-a}\right)^{-\sp_3/4}\bm U_0^{-1}w^{k-1}\dd w \nonumber \\
        %
        %
        &=\frac{\ii \msf a_1}{2\pi\ii}\bm U_0\left[\ointctrclockwise \frac{1}{\varphi_b(w)^{3/2}}\left[5\left(\frac{w-b}{w-a}\right)^{1/2}\bm E_{12}+7\left(\frac{w-a}{w-b}\right)^{1/2}\bm E_{21}\right]w^{k-1}\dd w\right]\bm U_0^{-1}\nonumber \\
        &=   \frac{\msf a_1}{2}
        \left[5\res_{w=b}\left(\frac{w^{k-1}}{\varphi_b(w)^{3/2}}\left(\frac{w-b}{w-a}\right)^{1/2}\right) \left( \sp_3 + \ii \sp_1 \right)
        + 7 \res_{w=b}\left(\frac{w^{k-1}}{\varphi_b(w)^{3/2}}\left(\frac{w-a}{w-b}\right)^{1/2}\right) \left( - \sp_3 + \ii \sp_1 \right)\right]. \label{eq:Ikbresidue} 
    \end{align}

    The calculation of $\msf I_a^{(k)}$ goes on in a similar way, and we obtain
    \begin{equation}\label{eq:Ikaresidue}
        \msf I_a^{(k)}
            =-\frac{\msf a_1}{2}\left[ 7 \res_{w=a}\left(\frac{w^{k-1}}{\varphi_a(w)^{3/2}}\left(\frac{w-b}{w-a}\right)^{1/2}\right) \left( \sp_3 + \ii \sp_1\right)
            + 5 \res_{w=a}\left(\frac{w^{k-1}}{\varphi_a(w)^{3/2}}\left(\frac{w-a}{w-b}\right)^{1/2}\right)\left( - \sp_3 + \ii \sp_1\right)\right].
    \end{equation}

We now have to compute several residues at $p=a,b$, and their structure is the following. Recall that $\varphi_p=(\frac{3}{2}\phi_p)^{3/2}$, where $\phi_p$ is determined from \eqref{eq:phi_b}, \eqref{eq:phi_a} and \eqref{eq:CauchyTransform_AlgebraicEq}. Using \eqref{eq:CauchyTransform_AlgebraicEq}, we expand
\begin{align*}
    & \phi_a(z)
        =\frac{2}{3}q(a)\sqrt{b-a}(a-z)^{3/2}+\frac{1}{5}\left( \frac{q(a)}{\sqrt{b-a}}+2q'(a)\sqrt{b-a} \right)(a-z)^{5/2}+\Boh(|a-z|^{7/2}),\quad z\to a,\\
    & \phi_b(z)
        =\frac{2}{3}q(b)\sqrt{b-a}(z-b)^{3/2}+\frac{1}{5}\left( \frac{q(b)}{\sqrt{b-a}}+2 q'(b)\sqrt{b-a} \right)(z-b)^{5/2}+\Boh(|b-z|^{7/2}),\quad z\to b,
\end{align*}
where, as usual, all the roots above are with principal branch.

Expanding now $\varphi_p(z)$, we get
$$
\varphi_a(z)=-C_a(z-a)\Psi_a(z), \quad \varphi_b(z)=C_b(z-b)\Psi_b(z),\quad C_p\deff q(p)^{2/3}\sqrt[3]{b-a},\quad p=a,b,
$$
where $\Psi_p$ is analytic in a neighborhood of $p=a,b$, and satisfies
$$
    \Psi_p(p)=1,\quad \Psi'_a(a)=-\frac{1}{5}\frac{1}{b-a}-\frac{2}{5}\frac{q'(a)}{q(a)},\quad \Psi'_b(b)=\frac{1}{5}\frac{1}{b-a}+\frac{2}{5}\frac{q'(b)}{q(b)}.
$$

With these expansions, we now compute the residues appearing in \eqref{eq:Ikbresidue} and \eqref{eq:Ikaresidue}. They shall be given in terms of
\[
    \rho_1^{(k)}(p) \defeq   \frac{p^{k-1} \sign(p)}{q(p) } \frac{1}{b-a}, \quad 
    \rho_2^{(k)}(p) \defeq \frac{\rho_1^{(k)}(p)}{2} + \frac{p^{k-1}}{q(p)} \left[ \frac{
    k-1}{p} - \frac{3}{2} \Psi'_p(p)\right].
\]

The expressions for $\msf I_b$ and $\msf I_a$ updates to 
\begin{equation}\label{eq:contributions_at_a_and_b}
    \begin{split}
        \msf I_b^{(k)} 
        &= \frac{\msf a_1}{2}\left[ \left(  5 \rho_1^{(k)}(b) - 7\rho_2^{(k)}(b) \right) \sp_3 + \ii \left( 5 \rho_1^{(k)}(b) + 7 \rho_2^{(k)}(b) \right) \sp_1 \right], \\
    \msf I_a^{(k)}
        &= \frac{\msf a_1}{2} \left[ \left(  5 \rho_1^{(k)}(a) - 7\rho_2^{(k)}(a)\right) \sp_3 - \ii \left( 5 \rho_1^{(k)}(a) + 7\rho_2^{(k)}(a)  \right) \sp_1 \right].    
    \end{split}
\end{equation}

We now compute $\msf I_0^{(k)}$. From \eqref{eq:J1n}, \eqref{deff:Ikp} and the explicit expression for $\bm M$ in \eqref{eq:M},
    \begin{equation}\label{eq:contribution_at_0}
        \begin{multlined}
            \msf I_0^{(k)} 
                = \bm U_0 \left[ \frac{1}{2 \pi \ii} \ointctrclockwise_{\partial D_0} \left( \frac{w-b}{w-a}\right)^{\frac{\sp_3}{4}} \bm U_0^{-1}  \frac{1}{\varphi_0(w)} \ee^{\pm \ii n \kappa \sp_3} \left( \bm U ^{\pm}\right)^{-1} \bm \Phi_{n,1} \bm U^{\pm } \ee^{\mp \ii n \kappa \sp_3} \bm U_0 \left( \frac{w -b}{w-a}\right)^{-\frac{\sp_3}{4}}  w^{k-1} \, \dd w \right. \\
                    \left. - \frac{1}{2 \pi \ii} \ointctrclockwise_{\partial D_0} \widehat{\msf h}(w) \left( \frac{w-b}{w-a}\right)^{\frac{\sp_3}{4}} \bm U_0^{-1}   \sp_3  \bm U_0 \left( \frac{w -b}{w-a}\right)^{-\frac{\sp_3}{4}} w^{k-1} \, \dd w \right] \bm U_0^{-1},
        \end{multlined}
    \end{equation}
    where we emphasize that the choices $\pm =+$ and $\mp=-$ are taken in the part of the contour on the upper half plane, and $\pm=-$ and $\mp=+$ are taken in the lower half plane. With these choices, a straightforward calculation shows that the correspond integrand is in fact analytic (for $k=2$) or meromorphic (for $k=1$) in a full neighborhood of the origin, with a sole pole, which is simple, at the origin. 
    
    Therefore, the integrals in \eqref{eq:contribution_at_0} reduce to residue calculations, in which we may (and will) use the $+$-boundary values of the expressions, and we obtain
    \[
    \msf I_0^{(2)}=0,
    \]
    and writing $\varphi_0(z) = \pi \phi_V(0) z + \Boh(z^{-2})$ as $z \to 0$ (see Proposition \ref{prop:properties_of_varphi_0}),
        \begin{align*}
            \msf I_0^{(1)} 
             & = \bm U_0 \left( \frac{b}{a} \right)^{\frac{\sp_3}{4}}_+ \bm U_0^{-1} \left[  \frac{1}{\pi \phi_V(0)} \ee^{ \ii n \kappa \sp_3} \bm \Phi_{n,1} \ee^{-\ii n \kappa \sp_3} - \left( \res_{z = 0} \widehat{\msf h}(z) \sp_3 \right)  \right] \bm U_0\left( \frac{b}{a} \right)^{-\frac{\sp_3}{4}}_+ \bm U_0^{-1} \\
             & = \bm U_0 \left( \frac{b}{a} \right)^{\frac{\sp_3}{4}}_+ \bm U_0^{-1} \left[  \frac{1}{\pi \phi_V(0)} \ee^{\ii n \kappa \sp_3} \bm \Phi_{\infty,1} \ee^{-\ii n \kappa \sp_3} - \left( \res_{z = 0} \widehat{\msf h}(z) \sp_3 \right)  \right] \bm U_0\left( \frac{b}{a} \right)^{-\frac{\sp_3}{4}}_+ \bm U_0^{-1} +\Boh(n^{-1+\varepsilon}),
        \end{align*}
        where we used \eqref{eq:asymptotics_phi_n} for $0 < \varepsilon < 1$.
        
    From the very definition of $\widehat{\msf h}(z)$ (see Proposition \ref{prop:h_n_asymptotics}),
    \[
        \res_{z = 0} \widehat{\msf h}(z) = \ii \widehat{\msf h}_0 \sqrt{-ab} = \ii \frac{G_0(\sad)}{2 \pi \sqrt{\msf t}}.
    \]

    Recalling \eqref{eq:Phi1_decomposition}, we express $\bm \Phi_{\infty,1}$ in terms of $\msf p,\msf q$. After a rather cumbersome but straightforward calculation, we obtain the final expression
    \begin{multline*}
        \msf I_0^{(1)} 
            = \frac{\msf p(\sad)}{2 \pi \phi_V(0)} \left( \frac{b+a}{\sqrt{-ab}} \sp_3 + \ii\frac{b-a}{\sqrt{-ab}} \sp_1 \right) 
        +  \frac{G_0(\sad)}{4 \pi\sqrt{\msf t}} \left( \frac{b+a}{\sqrt{-ab}} \sp_3 + \ii\frac{b-a}{\sqrt{-ab}} \sp_1 \right) \\
        +  \frac{\msf q(\sad)}{2 \pi \phi_V(0)} \left( -2\cos(2n\kappa) \sp_2 - \frac{b-a}{\sqrt{-ab}}\sin(2n\kappa) \sp_3 + \ii\frac{b+a}{\sqrt{-ab}}\sin(2n\kappa) \sp_1 \right).
    \end{multline*}

    We now go back to the recurrence coefficients in \eqref{eq:gammanbetanexpansion}. Starting with $\gamma_n^2(\sad)$, we evaluate 
    \[
    \begin{split}
        \gamma_n^2(\sad) 
            &= \left( \bm M_1\right)_{12}  \left( \bm M_1 \right)_{21} + \frac{1}{n} \left[ \left( \bm M_1\right)_{12} \left( \wh{\bm G}_1\ + \wh{\bm R}_1 \right)_{21} + \left( \bm M_1\right)_{21} \left( \wh{\bm G}_1 + \wh{\bm R}_1\right)_{12} \right] + \Boh(n^{- 2}) \\
            &= \frac{(b-a)^2}{16} + \frac{1}{n }\left[ \ii \frac{b-a}{4} \left(\left( \wh{\bm R}_1\right)_{21} - \left( \wh{\bm R}_1\right)_{12} \right) \right] + \Boh(n^{- 2}) \\
            &= \frac{(b-a)^2}{16} +\frac{\cos(2 n \kappa)}{n} \frac{b-a}{2} \frac{\msf q(\sad)}{\pi \phi_V(0)}  + \Boh(n^{ - 2+\varepsilon}).
    \end{split}
    \]
    where we used the symmetries $(\bm M_1)_{12}=-(\bm M_1)_{21}$, and $(\bm B )_{12}=(\bm B )_{21}$ for $\bm B =\wh{\bm G}_1, \msf I^{(k)}_a, \msf I^{(k)}_b$, see \eqref{deff:M1M2}, \eqref{deff:hatG1hatG2} and \eqref{eq:contributions_at_a_and_b}.

    For $\beta_n(\sad)$ we use \eqref{deff:G1G2} to compute
\[
    \begin{split}
        \beta_n(\sad) 
            &= \frac{\left( \bm M_2\right)_{12}}{\left( \bm M_1\right)_{12}} - \left( \bm M_1\right)_{22} \\ 
            &\phantom{=} + \frac{1}{n} \left[ 
            - \frac{\left( \bm M_2\right)_{12}}{\left( \bm M_1\right)^2_{12}} \left( \wh{\bm G}_1 + \wh{\bm R}_1\right)_{12} 
            + \frac{1}{\left( \bm M_1\right)_{12}} \left( \wh{\bm G}_2 + \wh{\bm R}_2 + \wh{\bm R}_1 \bm M_1 \right)_{12} 
            - \left( \wh{\bm G}_1 + \wh{\bm R}_1 \right)_{22} 
            \right] + \Boh(n^{-2}). \\
            &= \frac{b+a}{2} +  \frac{1}{2n(b-a)} \left(\frac{1}{q(b)} - \frac{1}{q(a)} \right) \\
            &\phantom{=} + 
            \frac{1}{n} 
            \left[ 
                 \frac{(a+b)}{\sqrt{-ab}} \frac{G_0(\sad)}{2 \pi \sqrt{t}} - \frac{2 \msf q(\sad)}{\pi \phi_V(0)} \frac{1}{b-a} \left[ (a+b) \cos(2n\kappa) + 2 \sqrt{-ab} \sin(2n\kappa)\right]
                \right] + \Boh(n^{-2 + \varepsilon}).
    \end{split}
\]
where we remind that $\kappa \defeq \pi \mu_V(0,b)$. Having in mind the transformation between $\msf q$ and $\msf Q$ from \eqref{eq:msfqpmsfPQ} and the explicit expressions for $\beta_n(\infty),\gamma_n^2(\infty)$ from \eqref{deff:gammabetainfty}, the asymptotic expansions for $\beta_n(\sad),\gamma_n^2(\sad)$ claimed in Theorem~\ref{thm:recurrence coeffs} follow from the definition of $\msf T$. 

The characterization of $\msf Q$ as in \eqref{eq:integralreprQ} was already explained in Section~\ref{sec:modelprob_particularcase}, see Remark~\ref{remark:ClaeysTarricone} and the comments thereafter. 

The decay \eqref{eq:decayG0q} is a consequence of \eqref{eq:estimates_pq} and \eqref{eq:msfqpmsfPQ}.

The proof of Theorem~\ref{thm:recurrence coeffs} is thus complete.
    
\appendix

\section{Laplace-Type Integrals}\label{Ap:LaplaceIntegrals}

Motivated by the understanding of the asymptotics of $\msf h$ as $z,n \to \infty$ in Section \ref{sec:RHPAnalysis}, we consider integrals of the type
\[
    F(t) \defeq \int_a^b g(x) \log(1 + \ee^{-y - t f(x)}) \, \dd x, \quad a < 0 < b.
\]
More precisely, we are interested in the asymptotics  of $F(t)$ as $t \to +\infty$.

    We start by considering the case $f(x) = x^2$.

\begin{lemma}\label{lem:Laplace}
    Let $g$ be a $L^1([a,b])$ function that is $C^\infty$ in a neighborhood of $0$ and $f(x) = x^2$. Assume that
    \begin{equation}
        y > -M,
    \end{equation}
    for some $M > 0$ fixed. For any $N>0$, the function $F(t)$ admits an expansion of the form
    \begin{equation}\label{eq:LaplaceExpformula}
        F(t) = \frac{1}{t^{\frac{1}{2} }} \left[ \sum_{k=0}^{2N } \frac{g^{(2k)}(0)}{(2k)!} \frac{G_{2k}(y)}{t^{k}} + \Boh \left( \frac{1}{t^{2N+1}}\right) \right], \quad t \to +\infty,
    \end{equation}
    where the coefficients $G_\beta(y)$ are given by
    \[
        G_\beta(y) = \int_{- \infty}^\infty u^\beta \log \left( 1 + \ee^{-y - u^2} \right) \, \dd u, \quad \beta \geq 0,
    \]
    and the $\Boh$ term is valid uniformly for $y \geq -M$.
\end{lemma}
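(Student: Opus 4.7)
The strategy is a standard Watson--Laplace localization. The two sources of error to control uniformly in $y\geq -M$ are the tail of the integral away from $0$ and the Taylor remainder of $g$ near $0$, and both are handled by the same $y$-uniform majorant.

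First, I would fix $\delta>0$ small enough that $g$ is $C^\infty$ on $[-\delta,\delta]$ and $[-\delta,\delta]\subset[a,b]$, and split $F(t)=F_\delta(t)+F_{\mathrm{tail}}(t)$ according to this decomposition. Using $\log(1+z)\leq z$ for $z\geq 0$ together with the hypothesis $y\geq -M$,
\[
|F_{\mathrm{tail}}(t)|\leq e^{M}\int_{[a,b]\setminus[-\delta,\delta]}|g(x)|\,e^{-tx^2}\,\dd x\leq e^{M}\|g\|_{L^1}\,e^{-\delta^2 t/2}\cdot e^{-\delta^2 t/2},
\]
which is super-polynomially small in $t$, uniformly in $y\geq -M$, and can be absorbed in the $\Boh$ term.

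Second, on $[-\delta,\delta]$ I would Taylor expand $g$ to order $4N+1$,
\[
g(x)=\sum_{j=0}^{4N+1}\frac{g^{(j)}(0)}{j!}x^j+R(x),\qquad |R(x)|\leq C|x|^{4N+2},
\]
substitute into $F_\delta$, and perform the change of variables $u=x\sqrt{t}$. This yields
\[
F_\delta(t)=\frac{1}{\sqrt{t}}\sum_{j=0}^{4N+1}\frac{g^{(j)}(0)}{j!\,t^{j/2}}\int_{-\delta\sqrt{t}}^{\delta\sqrt{t}}u^j\log\!\left(1+e^{-y-u^2}\right)\dd u+\mcal E(t),
\]
where $\mcal E(t)$ collects the contribution of $R$. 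For odd $j$ the integrand is odd in $u$ and the integral vanishes. For even $j=2k$, I extend the integration to $\R$; the tail $|u|\geq\delta\sqrt t$ is dominated by $e^M\int_{|u|\geq\delta\sqrt t}|u|^{2k}e^{-u^2}\,\dd u=\Boh(e^{-\delta^2 t/2})$ uniformly in $y\geq -M$, producing the claimed coefficients $G_{2k}(y)/t^{1/2+k}$ for $k=0,\dots,2N$.

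The remainder satisfies, by the same rescaling,
\[
|\mcal E(t)|\leq C\int_{-\delta}^{\delta}|x|^{4N+2}\log\!\left(1+e^{-y-tx^2}\right)\dd x=\frac{C}{t^{2N+3/2}}\int_{-\delta\sqrt t}^{\delta\sqrt t}|u|^{4N+2}\log\!\left(1+e^{-y-u^2}\right)\dd u.
\]
The only delicate point — and the place where the hypothesis $y\geq -M$ is truly used — is the construction of an integrable majorant: since $\log(1+e^{-y-u^2})\leq \log(1+e^{M-u^2})$, and the latter decays like $e^{M-u^2}$ for $|u|>\sqrt M$ while being bounded for $|u|\leq\sqrt M$, the integral $\int_\R|u|^{4N+2}\log(1+e^{M-u^2})\,\dd u$ is finite. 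Thus $\mcal E(t)=\Boh(t^{-2N-3/2})=\Boh(t^{-1/2}\cdot t^{-2N-1})$, which matches the error claimed in \eqref{eq:LaplaceExpformula}. The same majorant also shows that each $G_{2k}(y)$ is well defined and bounded on $[-M,\infty)$, closing the argument.
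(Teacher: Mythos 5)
Your proposal is correct and follows essentially the same route as the paper's proof: split off the tails $[a,b]\setminus[-\delta,\delta]$ using the uniform majorant $\log(1+\ee^{-y-tx^2})\leq \ee^{M}\ee^{-tx^2}$, Taylor expand $g$ at the origin, rescale $u=\sqrt{t}\,x$, and extend the resulting moment integrals to $\R$ at exponentially small cost; the only (cosmetic) difference is that you expand $g$ directly to order $4N+1$ with a remainder bounded by $C|x|^{4N+2}$, whereas the paper expands to order $N$ and substitutes $N\mapsto 2N$ at the end. Your handling of the Taylor remainder via the even power $|x|^{4N+2}$ is in fact slightly cleaner than the paper's Lagrange-form bound, and all the uniformity claims in $y\geq -M$ are justified correctly.
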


\begin{remark} \hfill
\begin{itemize}
    \item Observe that $G_\beta(2k+1) = 0$ because the integrand in the definition of $G_\beta$ is an odd function, explaining why only even indices $G_{2k}$ appear in the expansion \eqref{eq:LaplaceExpformula}.
    \item Assuming that $\beta$ is even, we can write $G_\beta(y)$ in terms of the \textit{polylog function} $\text{Li}_{\gamma}$:
    \[
        G_\beta(y) 
        = - \frac{\beta - 1}{2} \Gamma \left( \frac{\beta - 1}{2}\right) \text{Li}_{2 + \frac{\beta - 1}{2}} \left(- \ee^{-y}\right).
    \]
\end{itemize}
    
\end{remark}
\begin{proof}
    For every $\delta > 0$, write
    \[
        \int_{a}^b g(x) \log\left(1 + \ee^{- y - t x^2} \right) \, \dd x = 
        \left( \int_{a}^{-\delta} + \int_{-\delta}^\delta + \int_\delta^b  \right) g(x)  \log\left(1 + \ee^{- y - t x^2} \right) \, \dd x.
    \]
    Note that
    \[
       \int_{[a, -\delta] \cup [\delta, b]} g(x) \log\left(1 + \ee^{- y - t x^2} \right) \, \dd x  \leq 2 \| g \|_1 \ee^M \ee^{-t \frac{\delta^2}{4}}.
    \]
    Thus, for some $c = c(\delta)> 0$,
    \begin{equation}\label{eq:appendix1_estimate1}
        \int_{a}^b g(x) \log\left(1 + \ee^{- y - t x^2} \right) \, \dd x = 
        \int_{-\delta}^\delta g(x) \log\left(1 + \ee^{- y - t x^2} \right) \, \dd x +  \Boh \left( \ee^{- c t}\right),
    \end{equation}
    where the $\Boh$ term is uniform for $y > -M$.
    Now fix $\delta > 0$ for which $g$ admits a Taylor expansion of order $N$ at the interval $(-\delta, \delta)$. Then
    \[
        \int_{-\delta}^\delta g(x) \log\left(1 + \ee^{- y - t x^2} \right) \, \dd x 
        = \sum_{j=0}^N \frac{g^{(j)}(0)}{j!} \int_{-\delta}^\delta x^j \log\left(1 + \ee^{- y - t x^2} \right) \, \dd x + R_{N+1}(t)
    \]
    where the remainder $R_{n+1}$ satisfies
    \[
        \left| R_{N+1}(t) \right| \leq \frac{1}{(N+1)!} \sup_{- \delta \leq \xi \leq \delta} \left| g^{(N+1)}(\xi) \right| \left| \int_{-\delta}^{\delta} x^{N+1} \log\left(1 + \ee^{- y - t x^2} \right) \, \dd x \right|.
    \]

    For $\beta \in \Z_+$, exchange variables $u = \sqrt{t} \cdot x$ to obtain
    \[
        \int_{- \delta}^\delta x^\beta \log (1 + \ee^{- y - t x^2}) \, \dd x = \frac{1}{t^{\frac{\beta + 1}{2}}} \left[ G_\beta(y)  - 
        \int_{[-\infty, - \delta \sqrt{ t} ] \cup [\delta \sqrt{ t}, +\infty]} u^\beta \log ( 1 + \ee^{ - y - u^2}) \, \dd u \right].
    \]
    Since
    \[
        \int_{[-\infty, - \delta \sqrt{ t} ] \cup [\delta \sqrt{ t}, +\infty]} u^\beta \log ( 1 + \ee^{ - y - u^2}) \, \dd u 
        \leq 2 \left\| u^\beta \ee^{- \frac{u^2}{2}} \right\|_1 \ee^M \ee^{- \frac{\delta^2 t}{2}}
    \]
    we get, for some $\widetilde{c} = \widetilde{c}(\delta) > 0$,
    \[
        \int_{- \delta}^\delta x^\beta \log (1 + \ee^{- y - t x^2}) \, \dd x = \frac{1}{t^{\frac{\beta + 1}{2}}} F_\beta(y) + \Boh(\ee^{- t \widetilde{c}}).
    \]
    where, once again, the $\Boh$ term is uniform for $y > -M$. Finally, plugging above identities on \eqref{eq:appendix1_estimate1} we obtain
    \[
        \begin{split}
            F(t) 
                &= \int_{-\delta}^\delta g(x) \log\left(1 + \ee^{- y - t x^2} \right) \, \dd x +  \Boh \left( \ee^{- c t}\right) \\
                &= \sum_{j=0}^N \frac{g^{(j)}(0)}{j!}\frac{G_j(y)}{t^{\frac{j+1}{2}}}
                + \frac{1}{t^{\frac{N+2}{2}}} \frac{\sup_{-\delta < \xi < \delta}|g^{(N+1)}(\xi)|}{(N+1)!} G_{N+1}(y)
                + \Boh(\ee^{-t \widetilde{c}}) + \Boh \left( \ee^{- c t}\right) \\
                &= \sum_{j=0}^N \frac{g^{(j)}(0)}{j!} \frac{G_j(y)}{t^{\frac{j+1}{2}}}  + \Boh \left( \frac{1}{t^{\frac{N+2}{2}}} \right) .
        \end{split}
    \]
    Replacing $N\mapsto 2N$, the proof is complete.
\end{proof}

\begin{prop}\label{prop:Laplace}
    Assume that $g$ is as in Lemma \ref{lem:Laplace} and that $f$ is $C^\infty$ in a neighborhood of the origin with a unique global minimum on $[a,b]$ at $x =0$, with $f(0) = f'(0) = 0$ and $f''(0) \neq 0$. Assume also that $ y > -M$ for some $M > 0$ fixed. Then $F(t)$ admits an expansion of the form
    \[
        F(t) =\frac{1}{t^{1/2}} \left(\sum_{k=0}^{2N} \frac{\widehat{g}^{(2k)}(0)}{(2k)!} \frac{G_{2k}(y)}{t^{k}} + \Boh \left( \frac{1}{t^{2N+1}}\right)\right), \quad t \to +\infty,
    \]
    where $\widehat g$ is a function that is $C^\infty$ in a neighborhood of $0$ whose derivatives at $0$ are given in terms of the derivatives of $f$ and $g$ at $0$. Moreover the $\Boh$ term is uniform for $y \geq -M$.
\end{prop}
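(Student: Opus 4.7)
The plan is to reduce Proposition~\ref{prop:Laplace} to Lemma~\ref{lem:Laplace} by a Morse-type change of variables near the origin. Since $x=0$ is a global minimum of $f$ on $[a,b]$ with $f(0)=f'(0)=0$ and $f''(0)\neq 0$, necessarily $f''(0)>0$. Smoothness of $f$ near $0$ gives a factorization $f(x)=x^{2}h(x)$ with $h\in C^{\infty}$ in a neighborhood of $0$ and $h(0)=f''(0)/2>0$. Setting
$$
u(x)\deff x\sqrt{h(x)},
$$
the function $u$ is $C^{\infty}$ near $0$ with $u(0)=0$ and $u'(0)=\sqrt{f''(0)/2}>0$, so $u$ realizes a $C^{\infty}$ diffeomorphism from some interval $(-\delta,\delta)$ onto $(u(-\delta),u(\delta))$ satisfying $f(x)=u(x)^{2}$.

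First I would split $F(t)$ into the contributions on $[-\delta,\delta]$ and on $[a,b]\setminus(-\delta,\delta)$. Since $0$ is the \emph{unique} global minimum, there exists $c_{0}>0$ with $f(x)\geq c_{0}$ on $[a,b]\setminus(-\delta,\delta)$, so the second piece is bounded by $\|g\|_{1}\ee^{M}\ee^{-c_{0}t}$ uniformly for $y\geq -M$ and can be absorbed into the $\Boh$ remainder. Next I perform the substitution $u=u(x)$ on the remaining piece to obtain
$$
\int_{-\delta}^{\delta} g(x)\log(1+\ee^{-y-tf(x)})\,\dd x
= \int_{u(-\delta)}^{u(\delta)} \widehat{g}(u)\log(1+\ee^{-y-tu^{2}})\,\dd u,
$$
with
$$
\widehat{g}(u)\deff g(x(u))\,\frac{\dd x}{\dd u}(u),
$$
which is $C^{\infty}$ in a neighborhood of $0$ because $x(u)$ is a $C^{\infty}$ diffeomorphism there.

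Now I would pick any $\delta'>0$ with $[-\delta',\delta']\subset(u(-\delta),u(\delta))$; truncating the $u$-integral to $[-\delta',\delta']$ introduces an additional exponentially small error, by the same tail estimate used in the proof of Lemma~\ref{lem:Laplace}. Applying that lemma to the smooth function $\widehat g$ on $[-\delta',\delta']$ yields the announced expansion
$$
F(t)=\frac{1}{t^{1/2}}\left(\sum_{k=0}^{2N}\frac{\widehat{g}^{(2k)}(0)}{(2k)!}\frac{G_{2k}(y)}{t^{k}}+\Boh\!\left(\frac{1}{t^{2N+1}}\right)\right),
$$
uniformly for $y\geq -M$. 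The coefficients $\widehat{g}^{(2k)}(0)$ can in principle be expressed in terms of the derivatives of $g$ and $f$ at $0$ through the Faà di Bruno formula applied to $x(u)$; for instance $\widehat{g}(0)=g(0)\sqrt{2/f''(0)}$. I would not expand these algebraic relations in detail, since the statement only requires existence of the smooth function $\widehat g$.

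The only genuine technical point is the construction of the diffeomorphism $u(x)$ in a $C^{\infty}$ manner across $x=0$, but this is completely handled by the factorization $f(x)=x^{2}h(x)$ with $h$ smooth and strictly positive near $0$; everything else is either a tail estimate or a direct invocation of Lemma~\ref{lem:Laplace}.
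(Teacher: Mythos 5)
Your proof is correct and follows essentially the same route as the paper's: both reduce the statement to Lemma~\ref{lem:Laplace} via a smooth change of variables that turns $f(x)$ into $u^2$ near the origin, with the same exponentially small tail estimate away from $0$. The only cosmetic difference is that you construct the diffeomorphism directly from the Hadamard factorization $f(x)=x^2h(x)$, whereas the paper obtains the inverse map $x=u\,v(u)$ from the implicit function theorem applied to $H(u,v)=f(uv)/u^2-1$; the resulting function $\widehat g$ is identical.
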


\begin{proof}
    Consider the two-variable function
    \[
        H(u,v) = \frac{f(uv)}{u^2} - 1 
    \]
    Then, for $v_0 = \sqrt{\frac{2}{f''(0)}}$,
    \[
        \frac{\partial H}{\partial v} (0,v_0) = f''(0) \cdot v_0 \neq 0.
    \]
    Since $H(0,v_0) = 0$, the Implicit Function Theorem implies that there exists a function $v:[-\delta,\delta] \to \R$ such that 
    \[
        H(u,v(u)) = 0, \quad u \in [-\delta,\delta].
    \]
    The identity $f(u \cdot v(u)) = u^2$ implies that $v^{(k)}(0)$ is given in terms of $f^{(j)}(0)$, $0 \leq j \leq k+2$.

    Consider
    \[
        F(t) = \int_{-\delta}^{\delta} g(x) \log\left( 1+\ee^{-y-tf(x)} \right) \, \dd x + \int_{[a,-\delta] \cup [\delta, b]}  g(x) \log \left(1+\ee^{-y-t f(x)} \right)\, \dd x.
    \]
    For the first term in the r.h.s. the change of variables $x = u \cdot v(u)$ gives
    \[
        \int_{a_\delta}^{b_\delta} g(u \cdot v(u)) [v(u) + u \cdot v'(u)] \log \left( 1 + \ee^{- y - t u^2} \right) \, \dd u, \quad 
        a_\delta \defeq - \sqrt{f(-\delta}), \quad 
        b_\delta \defeq \sqrt{f(\delta)}.
    \]
    The result follows by applying Lemma \ref{lem:Laplace} to above integral, where $\widehat{g}(u) = g(u \cdot v(u))[ v(u) + u \cdot v'(u)]$. Differentiating above identity at $u=0$ one verifies that  $\widehat{g}^{(j)}(0)$ is given in terms of $g^{(k)}(0)$ and $f^{(l)}(0)$, $0 \leq k \leq j$, $0 \leq l \leq j+2$.
\end{proof}

  \bibliographystyle{abbrv}
\bibliography{Main.bib}
\end{document}